\documentclass[a4paper,UKenglish,cleveref, autoref, thm-restate]{lipics-v2021}
\usepackage{mathtools}
\usepackage{multicol}
\usepackage[normalem]{ulem}

\title{Resource Optimisation of Coherently Controlled Quantum Computations with the PBS-calculus}

\author{Alexandre Cl\'ement}{Universit\'e de Lorraine, CNRS, Inria, LORIA, F-54000 Nancy, France \and \url{https://members.loria.fr/AClement} }{alexandre.clement@loria.fr}{https://orcid.org/0000-0002-7958-5712}{}

\author{Simon Perdrix}{Universit\'e de Lorraine, CNRS, Inria, LORIA, F-54000 Nancy, France \and \url{https://members.loria.fr/SPerdrix} }{simon.perdrix@loria.fr}{https://orcid.org/0000-0002-1808-2409}{}

\authorrunning{A. Cl\'ement and S. Perdrix} 

\ccsdesc[500]{Theory of computation~Quantum computation theory}
\ccsdesc[500]{Theory of computation~Axiomatic semantics}
\ccsdesc[500]{Theory of computation~Categorical semantics}

\keywords{Quantum computing, Graphical language, Coherent control, Completeness, Resource optimisation, NP-hardness}

\category{} 

\relatedversion{} 

\supplement{}

\funding{This work is funded by ANR-17-CE25-0009 SoftQPro, ANR-17-CE24-0035 VanQuTe,
PIA-GDN/Quantex, and LUE / UOQ, the PEPR integrated project EPiQ ANR-22-PETQ-0007 part of Plan France 2030, and by  \emph{``Investissements d'avenir''} (ANR-15-IDEX-02) program of
the French National Research Agency; the European Project NExt ApplicationS of Quantum Computing (NEASQC), funded by Horizon 2020 Program inside the call H2020-FETFLAG-2020-01(Grant Agreement 951821), and the HPCQS European High-Performance Computing Joint Undertaking (JU) under grant agreement No 101018180.
}

\nolinenumbers
\hideLIPIcs

\EventEditors{John Q. Open and Joan R. Access}
\EventNoEds{2}
\EventLongTitle{ICALP}
\EventShortTitle{ICALP 2020}
\EventAcronym{ICALP}
\EventYear{2016}
\EventDate{December 24--27, 2016}
\EventLocation{Little Whinging, United Kingdom}
\EventLogo{}
\SeriesVolume{42}
\ArticleNo{1}

\usepackage{tikzit}

% TiKZ style file generated by TikZiT. You may edit this file manually,
% but some things (e.g. comments) may be overwritten. To be readable in
% TikZiT, the only non-comment lines must be of the form:
% \tikzstyle{NAME}=[PROPERTY LIST]

% Node styles
\tikzstyle{diamant}=[diamond, fill=couleurdefond, draw=black]
\tikzstyle{newe}=[fill={gray!15}, draw=black, shape=rectangle, tikzit fill={rgb,255: red,191; green,191; blue,191}]
%\tikzstyle{newerouge}=[fill={red!3}, draw=red, execute at begin node={\textcolor{black}\bgroup}, execute at end node={\egroup}, shape=rectangle, tikzit fill=red]
%\tikzstyle{newebleu}=[fill={bleu!3}, draw=bleu, execute at begin node={\textcolor{black}\bgroup}, execute at end node={\egroup}, shape=rectangle, tikzit draw=blue, tikzit fill=blue]
\tikzstyle{newerouge}=[fill={gray!15}, draw=black, shape=rectangle, tikzit fill={rgb,255: red,191; green,191; blue,191}]
\tikzstyle{newebleu}=[fill={gray!15}, draw=black, shape=rectangle, tikzit fill={rgb,255: red,191; green,191; blue,191}]
\tikzstyle{cercle}=[circle, fill=couleurdefond, draw=black]
\tikzstyle{cartouche}=[rounded rectangle, fill=couleurdefond, draw=black]
\tikzstyle{neg}=[rounded rectangle, fill=couleurdefond, draw=black, execute at end node={$\neg$}]
\tikzstyle{diagrammevide}=[rectangle, fill=couleurdefond, draw=black, inner sep=1.25em, borddiagrammevide, tikzit shape=rectangle]
\tikzstyle{mdiagrammevide}=[rectangle, fill=couleurdefond, draw=black, inner sep=0.75em, sborddiagrammevide, tikzit shape=rectangle]
\tikzstyle{sdiagrammevide}=[rectangle, fill=couleurdefond, draw=black, inner sep=0.5em, sborddiagrammevide, tikzit shape=rectangle]
\tikzstyle{xsdiagrammevide}=[rectangle, fill=couleurdefond, draw=black, inner sep=0.4em, xsborddiagrammevide, tikzit shape=rectangle]
\tikzstyle{xsdiagrammevideregulier}=[rectangle, fill=couleurdefond, draw=black, inner sep=0.33em, xsborddiagrammevideregulier, tikzit shape=rectangle]
\tikzstyle{bs}=[shape=beam, fill=couleurdefond, draw, inner sep=0.25em, thick, tikzit fill=white]
\tikzstyle{sbs}=[shape=beam, fill=couleurdefond, draw, inner sep=0.2em, thick, tikzit fill=white]
\tikzstyle{boite22}=[fill=white, draw=black, shape=rectangle, minimum height=1cm, minimum width=0.5cm]
\tikzstyle{boite2}=[fill=white, draw=black, shape=rectangle, minimum height=0cm, minimum width=0cm]
\tikzstyle{snegpotentiel}=[fill=couleurdefond, draw=black, shape=rounded rectangle, inner sep=0.25em, tikzit fill={rgb,255: red,191; green,191; blue,191}, execute at end node={\footnotesize$\star$}]
\tikzstyle{xxsnegpotentiel}=[fill=couleurdefond, draw=black, shape=rounded rectangle, inner sep=0.25em, tikzit fill={rgb,255: red,191; green,191; blue,191}, execute at end node={$\star$}, scale=0.65]
\tikzstyle{negpotentiel}=[fill=couleurdefond, draw=black, shape=rounded rectangle, tikzit fill={rgb,255: red,191; green,191; blue,191}, execute at end node={$\star$}]
\tikzstyle{token}=[fill=black, draw=black, shape=circle, inner sep=0.1em]
\tikzstyle{npbs}=[shape=beam, horizontal fill={{npbsmoitiebasse}{npbsmoitiehaute}}, draw, inner sep=0.25em, thick]
\tikzstyle{npbsalenvers}=[shape=beam, horizontal fill={{npbsmoitiehaute}{npbsmoitiebasse}}, draw, inner sep=0.25em, thick]
\tikzstyle{neweviolet}=[fill={rosefonce!3}, draw=rosefonce, shape=rectangle, execute at begin node={\textcolor{rosefonce}\bgroup}, execute at end node={\egroup}, tikzit draw={rgb,255: red,170; green,0; blue,170}, tikzit fill={rgb,255: red,170; green,0; blue,170}]
\tikzstyle{point}=[fill=black, draw=white, shape=circle, inner sep=0.492pt]
\tikzstyle{newegris}=[fill={gris!3}, draw=gris, shape=rectangle, tikzit fill={rgb,255: red,191; green,191; blue,191}, tikzit draw={rgb,255: red,191; green,191; blue,191}, execute at begin node={\textcolor{gris}\bgroup}, execute at end node={\egroup}]
\tikzstyle{bouchon}=[shape=rectangle, draw=black, text width=0pt, minimum height=1em, inner sep=0pt, fill=black]
\tikzstyle{dot}=[fill=black, draw=none, shape=circle, inner sep=0pt, outer sep=0pt, scale=0.891]

% Edge styles
\tikzstyle{new}=[-]
\tikzstyle{tirets}=[-, draw=black, dashed]
\tikzstyle{noire}=[-, draw=black]
\tikzstyle{rouge}=[-, draw=red]
\tikzstyle{bleue}=[-, draw=bleu, tikzit draw=blue]
\tikzstyle{longdashed}=[-, dash pattern=on 5pt off 5pt]
\tikzstyle{pointilles}=[-, draw=black, dotted]
\tikzstyle{grise}=[-, draw={rgb,255: red,191; green,191; blue,191}]
\tikzstyle{coulgen}=[-, draw=rosefonce]
\tikzstyle{coulgenbis}=[-, draw=orange]
\tikzstyle{coulgenter}=[-, draw=brunbleu]
\tikzstyle{borddiagrammevide}=[-, dash pattern=on 0.5em off 0.5em on 0.5em off 0.5em on 0.5em off 0em]
\tikzstyle{sborddiagrammevide}=[-, dash pattern=on 0.2em off 0.2em on 0.2em off 0.2em on 0.2em off 0em]
\tikzstyle{xsborddiagrammevide}=[-, dash pattern=on 0.16em off 0.16em on 0.16em off 0.16em on 0.16em off 0em]
\tikzstyle{violette}=[-, draw=rosefonce, tikzit draw={rgb,255: red,170; green,0; blue,170}]
\tikzstyle{xsborddiagrammevideregulier}=[-, dash pattern=on 0.132em off 0.132em on 0.132em off 0.132em on 0.132em off 0em]

\input{figures/styles-pbs.tikzdefs}

\usetikzlibrary{decorations.markings}
\tikzset{
  vertical/.style={thin, densely dotted, blue,
    decoration={markings,
   mark=between positions 1.5pt and 1 step 4pt with {
       \node[draw=white,text=blue,scale=0.9] {\tiny$\uparrow$};
      }
    },
    postaction=decorate,
  }
}
\tikzset{
  horizontal/.style={thin, densely dotted, red,
   decoration={markings,
    mark=between positions 1.5pt and 1 step 4pt with {
       \node[draw=white,text=red,scale=0.9] {\tiny$\rightarrow$};
       }
    },
    postaction=decorate,
  }
}

\tikzset{
  verticalbis/.style={thin, %
  dash pattern=on 0pt off 1pt, blue,
    decoration={markings,
   mark=between positions 0.4pt and 1 step 1.2pt with {
       \node[%
       text=blue,scale=0.36] {\tiny$\uparrow$};
      }
    },
    postaction=decorate,
  }
}
\tikzset{
  horizontalbis/.style={thin, %
  dash pattern=on 0pt off 1pt, red,
   decoration={markings,
    mark=between positions 0.4pt and 1 step 1.2pt with {
       \node[%
       text=red,scale=0.36] {\tiny$\rightarrow$};
       }
    },
    postaction=decorate,
  }
}

\usetikzlibrary{cd}

\usepackage{amsmath}
\hypersetup{hypertexnames=false,raiselinks=true}
\usepackage{stmaryrd}
\usepackage{scalerel}
\usepackage{array,xtab,tabularx,longtable}

\newcolumntype{C}{>{$}c<{$}}  
\newcolumntype{R}{>{$}r<{$}}  
\newcolumntype{L}{>{$}l<{$}}  
\setlength\tabcolsep{5pt}     

\newcommand{\interp}[1]{\left\llbracket #1 \right\rrbracket}
\newcommand{\ket}[1]{\left| #1 \right\rangle}
\newcommand{\bra}[1]{\left\langle #1 \right|}
\newcommand{\scalprod}[2]{\left\langle #1\middle|#2\right\rangle}

 {\everymath{\displaystyle\everymath{}}\array}%
 {\endarray}
 {\everymath{\scriptstyle\everymath{}}\array}%
 {\endarray}

\renewcommand\H{\mathcal H} 
\newcommand\Ve{\mathcal V} 
\renewcommand\S{\mathcal S}
\newcommand\M{\mathsf M}

\newcommand\CC{\mathbb{C}}
\newcommand\RR{\mathbb{R}}
\newcommand\QQ{\mathbb{Q}}
\newcommand\ZZ{\mathbb Z}

\newcommand\GG{\mathcal{G}}
\newcommand\UU{\mathcal{U}}
\newcommand\hv{\{\Vpol,\Hpol\}}
\newcommand\CPBS{\textup{CPBS}}
\newcommand\diag[1]{\textup{\bf Diag}^{#1}}

\newcommand{\Hpol}{\textbf{\textup{H}}}
\newcommand{\Vpol}{\textbf{\textup{V}}}

\newcommand\noa{\#}
\newcommand\nopbs{\#_{\textup{PBS}}}
\newcommand\noneg{\#_{\neg}}

\newcommand{\dbtilde}[1]{\ThisStyle{\tilde{\raisebox{0pt}[0.85\height]{$\SavedStyle\tilde{#1}$}}}}

\newlength{\xlutmvcyp}

\makeatletter
\newlength{\negph@wd}
\DeclareRobustCommand{\negphantom}[1]{%
  \ifmmode
    \mathpalette\negph@math{#1}%
  \else
    \negph@do{#1}%
  \fi
}
\newcommand{\negph@math}[2]{\negph@do{$\m@th#1#2$}}
\newcommand{\negph@do}[1]{%
  \settowidth{\negph@wd}{#1}%
  \hspace*{-\negph@wd}%
}
\makeatother

\makeatletter
\newlength{\halfnegph@wd}
\DeclareRobustCommand{\halfnegphantom}[1]{%
  \ifmmode
    \mathpalette\halfnegph@math{#1}%
  \else
    \halfnegph@do{#1}%
  \fi
}
\newcommand{\halfnegph@math}[2]{\halfnegph@do{$\m@th#1#2$}}
\newcommand{\halfnegph@do}[1]{%
  \settowidth{\halfnegph@wd}{#1}%
  \hspace*{-0.5\halfnegph@wd}%
}
\makeatother

\makeatletter
\newlength{\halfph@wd}
\DeclareRobustCommand{\halfphantom}[1]{%
  \ifmmode
    \mathpalette\halfph@math{#1}%
  \else
    \halfph@do{#1}%
  \fi
}
\newcommand{\halfph@math}[2]{\halfph@do{$\m@th#1#2$}}
\newcommand{\halfph@do}[1]{%
  \settowidth{\halfph@wd}{#1}%
  \hspace*{0.5\halfph@wd}%
}
\makeatother

\newcommand\changelargeur[2]{#1\negphantom{#1}\phantom{#2}}

\makeatletter
\DeclareRobustCommand{\crefnosort}[1]{%
  \begingroup\@cref@sortfalse\cref{#1}\endgroup
}
\makeatother

\makeatletter
\DeclareRobustCommand{\crefnosortnocompress}[1]{%
  \begingroup\@cref@sortfalse\@cref@compressfalse\cref{#1}\endgroup
}
\makeatother

\makeatletter
\newcommand*\vspacebeforeline[1]{%
    \ifvmode % if in vertical mode, act as "\vspace{#1}"
        \vskip #1
        \vskip \z@skip
    \else
        \@bsphack
        \vadjust pre {%
            \@restorepar
            \vskip #1
            \vskip \z@skip
        }%
        \@esphack
    \fi
}
\makeatother

\newcommand{\cnot}{\mathit{CNot}}
\newcommand{\ov}{I}
\newcommand{\interpath}[1]{\left\llbracket #1 \right\rrbracket}

\newcommand\eqeqref[1]{\overset{\ThisStyle{\raisebox{0.3em}{$\SavedStyle\eqref{#1}$}}}{=}}

\newcommand\eqdeuxeqref[2]{\overset{\ThisStyle{\raisebox{0.3em}{$\SavedStyle\eqref{#1}\eqref{#2}$}}}{=}}
\newcommand\eqtroiseqref[3]{\overset{\ThisStyle{\raisebox{0.3em}{$\SavedStyle\eqref{#1}\eqref{#2}\eqref{#3}$}}}{=}}

\newcommand\eqexpl[1]{\overset{#1}{=}}

\newcommand{\ntikzfig}[1]{{\tikzset{tikzfig/.style={baseline=-0.25em,scale=0.5,every node/.style={scale=1},borddiagrammevide/.style={-, dash pattern=on 0.5em off 0.5em on 0.5em off 0.5em on 0.5em off 0em}}}\tikzfig{#1}}}
\newcommand{\stikzfig}[1]{{\tikzset{tikzfig/.style={baseline=-0.25em,scale=0.2,every node/.style={scale=0.9}}}\tikzfig{#1}}}
\newcommand{\sgtikzfig}[1]{{\tikzset{tikzfig/.style={baseline=-0.25em,scale=0.2667,every node/.style={scale=0.925}}}\tikzfig{#1}}}
\newcommand{\xstikzfig}[1]{{\tikzset{tikzfig/.style={baseline=-0.25em,scale=0.15,every node/.style={scale=0.8}}}\tikzfig{#1}}}
\newcommand{\xgtikzfig}[1]{{\tikzset{tikzfig/.style={baseline=-0.25em,scale=0.25,every node/.style={scale=0.7}}}\tikzfig{#1}}}
\newcommand{\xntikzfig}[1]{{\tikzset{tikzfig/.style={baseline=-0.25em,scale=0.25,every node/.style={scale=0.805}}}\tikzfig{#1}}}

\newlength{\xfycguhh}
\setlength{\xfycguhh}{0.5em}
\newcommand{\ptikzfig}[4]{{\tikzset{tikzfig/.style={baseline=-0.25em,scale=0.5,scale=#2,every node/.style={scale=#3},borddiagrammevide/.style={-, dash pattern=on #4\xfycguhh off #4\xfycguhh on #4\xfycguhh off #4\xfycguhh on #4\xfycguhh off 0em}}}\tikzfig{#1}}}
\newcommand{\mtikzfig}[1]{{\tikzset{tikzfig/.style={baseline=-0.25em,scale=0.35,every node/.style={scale=0.8},borddiagrammevide/.style={-, dash pattern=on 0.35em off 0.35em on 0.35em off 0.35em on 0.35em off 0em}}}\tikzfig{#1}}}

\newcounter{eqnabc}
\newenvironment{eqnabc}{\refstepcounter{eqnabc}\equation}{\tag{\alph{eqnabc}}\endequation}

\newcounter{eqnABC}
\newenvironment{eqnABC}{\refstepcounter{eqnABC}\equation}{\tag{\Alph{eqnABC}}\endequation}

\newcommand{\urlalt}[2]{\href{#2}{\nolinkurl{#1}}}

\begin{document}

\maketitle

\begin{abstract}
Coherent control of quantum computations can be used to improve some quantum protocols and algorithms. For instance, the complexity of implementing the permutation of some given unitary transformations can be strictly decreased by allowing coherent control, rather than using the standard quantum circuit model. In this paper, we address the problem of optimising the resources of coherently controlled quantum computations. We refine the PBS-calculus, a graphical language for coherent control which is inspired by quantum optics. In order to obtain a more resource-sensitive language, it manipulates abstract gates -- that can be interpreted as queries to an oracle -- and more importantly, it avoids the representation of useless wires by allowing unsaturated polarising beam splitters. Technically the language forms a coloured prop. The language is equipped with an equational theory that we show to be sound, complete, and minimal.

Regarding resource optimisation, we introduce an efficient procedure to minimise the number of oracle queries of a given diagram.
We also consider the problem of minimising both the number of oracle queries and the number of polarising beam splitters. We show that this optimisation problem is NP-hard in general, but introduce an efficient heuristic that produces optimal diagrams when at most one query to each oracle is required.
\end{abstract}

\section{Introduction}

Most models of quantum computation (like quantum circuits) and most quantum programming languages are based on the \emph{quantum data/classical control} paradigm. In other words, based on a set of quantum primitives (e.g. unitary transformations, quantum measurements), the way these primitives are applied on a register of qubits  is either fixed or classically controlled.

However, quantum mechanics offers more general control of operations: for instance in quantum optics it is easy to control the trajectory of a system, like a photon, based on its polarisation using a \emph{polarising beam splitter}. One can then position distinct quantum primitives on the distinct trajectories. Since the polarisation of a photon can be in superposition, it achieves some form of quantum control, called coherent control: the quantum primitives are  applied in superposition depending on the state of another quantum system. Coherent control is not only a subject  of interest for foundations of quantum mechanics~\cite{hardy2005probability,oreshkov2012quantum,zych2019bell}, it also leads to advantages in solving computational problems 
~\cite{facchini2015quantum,araujo2014computational,colnaghi2012quantum,renner2021reassessing} and in designing more efficient protocols~\cite{feix2015quantum,chiribella2012perfect,Abbott2020communication,ebler2018enhanced,guerin2016exponential}.

Indeed, some problems can be solved more efficiently by using coherent control rather than the usual quantum circuits. This separation has been proved in a multi-oracle model where the measure of complexity is the number of queries to (a single or several distinct) oracles, which are generally unitary maps. The simplest example is the following problem \cite{chiribella2012perfect}: given two oracles $U$ and $V$ with the promise  that they  are either commuting or  anti-commuting, decide whether $U$  and $V$ are commuting or not. This problem can be solved using the so-called quantum switch  \cite{chiribella13}  which can be implemented using only two queries by means of coherent control, whereas solving this problem requires at least 3 queries (e.g. two queries to $U$ and one query to $V$) in the quantum circuit model (see \cref{fig:QS}).

\begin{figure}[]
\centerline{$\ntikzfig{QS-}$\qquad\qquad$\ntikzfig{circ}$}
\caption{\label{fig:QS} [\emph{Left}] Coherently controlled quantum computation for solving the commuting problem. Only two queries are used: one query to $U$ and one query to $V$. 
  [\emph{Right}] Optimal circuit for solving the commuting problem, where the 3-qubit gate is a control-swap. Notice that three queries are necessary in the quantum circuit model.}

\end{figure}

In this paper, we address the problem of optimising resources of coherently controlled quantum computations. To do so, we first refine the framework of the PBS-calculus -- a graphical language for coherently controlled quantum computation --  to make it more resource-sensitive. Then, we consider the problem of optimising the number of queries, and also the number of polarising beam splitters, of a given coherently controlled quantum computation, described as a PBS-diagram. 

\vspace{0.1cm}
\noindent  {\bf PBS-calculus.} The PBS-calculus is a graphical language that has been introduced \cite{alex2020pbscalculus} to represent and reason about quantum computations involving coherent control of quantum operations. Inspired by quantum optics~\cite{clement2022LOv}, the polarising beam splitter (PBS for  short), denoted $\stikzfig{beamsplitter}$ is at the heart of the language: when a photon enters the PBS, say from the top left, it is reflected (and hence outputted on the top right) if its polarisation is vertical; or transmitted (and hence outputted on the bottom right) if its polarisation is horizontal. If the polarisation is a superposition of vertical and horizontal, the photon is outputted in a superposition of two positions. As a consequence, the trajectory of a particle, say a photon, will depend on its polarisation. The second main ingredient of the PBS-calculus are the gates, denoted $\sgtikzfig{gateU}$ which applies some transformation $U$ on a data register. Notice that the gates never act on the polarisation of the particle.
 
PBS-diagrams, which form a traced symmetric monoidal category (more precisely a traced prop \cite{maclane1965categorical}), are equipped with an equational theory that allows one to transform a diagram. The equational theory has been proved to be sound, complete, and minimal \cite{alex2020pbscalculus}. 
 
 Notice that a PBS-diagram may have some useless wires, like in the example of the ``half quantum switch'', see Figure \ref{fig:halfQS}~(left). We refine the PBS-calculus in order to allow one to remove these useless wires, leading to unsaturated PBS (or 3-leg PBS) like  $\stikzfig{beamsplitter-BL}$ or $\stikzfig{beamsplitter-TR}$. To avoid ill-formed diagrams like $\sgtikzfig{non-valid}$,  a typing discipline is necessary. To this end, we use the framework of coloured props: each wire has 3 possible colours: black, red and blue which can be interpreted as follows: a  photon going through a blue (resp. red) wire must have a horizontal (resp. vertical) polarisation.

The introduction of unsaturated polarising beam splitters requires to revisit the equational theory of the PBS-calculus. 
The heart of the refined equational theory is  the axiomatisation of the 3-leg polarising beam splitters, together with some additional equations which govern how 4-leg polarising beam splitters can be decomposed into 3-leg ones. To show the  completeness of the refined equational theory, we introduce normal  forms and show that any diagram can be put in normal form. Finally, we also show the minimality of the equational theory by proving that none of the equations  can be derived from the other ones. 

\begin{figure}
\centerline{\ntikzfig{half-BS-trace}\qquad\ntikzfig{half-BS}}
\caption{\label{fig:halfQS} A coherent control of $U$ and $V$, also called a half quantum switch: when the initial polarisation is vertical ($\Vpol$), $U$ is applied on the data register, when the polarisation is horizontal ($\Hpol$), $V$ is applied. Whatever the polarisation is, the particle always goes out of the top port of the second beam splitter. On the right-hand side the diagram is made of beam splitters with a missing leg, whereas on the left-hand side standard beam splitters are used, and a useless trace is added.}
\end{figure}

\vspace{0.1cm}
\noindent {\bf Resource Optimisation.} The  PBS-calculus, tha\-nks to its refined  equational theory, provides a  way to detect and remove dead-code in a diagram. We exploit  this property  to address the crucial question of resource optimisation. 
We introduce a specific form of diagrams that minimises the number of gates, more precisely the number of queries to oracles, with an appropriate modelisation of oracles. 
We provide an efficient procedure to transform any diagram into this specific form.  We then focus on the problem of  optimising both the number of queries and 
 the number of polarising beam  splitters. We refine the previous procedure, leading to an efficient heuristic. We  show that the produced diagrams are optimal when every oracle is queried at  most once, but might not be optimal in general.  
  We actually show that the general optimisation problem is NP-hard using a reduction from the \emph{maximum Eulerian cycle decomposition problem} \cite{caprara1999eulerian}.

\vspace{0.1cm}
 \noindent{\bf Related  work.}
Several languages have been designed to represent coherently controlled quantum computation: some of them are extensions of quantum circuits, and other diagrammatic languages \cite{wechs2021quantum,arrighi2021addressable,wilson2020diagrammatic,vanrietvelde2021routed}; others are based on abstract programming languages \cite{altenkirch2005functional,ying2012defining,dowek2017lineal,diaz2019realizability,DBLP:journals/corr/BadescuP15}. While there are numerous works on resource-optimisation of quantum computation, in particular for quantum circuits \cite{kliuchnikov2013optimization,amy2014polynomial,nam2018automated}, there was, up to  our knowledge, no %efficient 
procedure for resource optimisation of coherently controlled quantum computation.

\section{Coloured PBS-Diagrams}\label{colouredPBSdiag}

We use the formalism of traced coloured props (i.e. small traced symmetric strict monoidal categories whose objects are freely spanned by the elements of a set of colours) to represent coherently controlled quantum computations. We are going to use the ``colours'' $\vtype$, $\htype$, $\top$, to denote respectively vertical, horizontal or possibly both polarisations. 

\begin{definition}\label{def:diag}
Given a monoid $\M$, let $\diag \M$ be  the traced coloured prop   with colours $\{\vtype, \htype, \top\}$ freely generated by the following generators, for any $U\in \M$:
\vspace{0.2cm}

\centerline{$\begin{array}{rclcrclcrcl}
\stikzfig{beamsplitter}&:&\top \oplus \top \to \top \oplus \top &~~~~&\stikzfig{beamsplitterNRRN}&:&\top\oplus \vtype \to \vtype\oplus \top&~~~~&\stikzfig{beamsplitterwBNR}&:&\top\to \htype\oplus \vtype  \\[0.2cm]
\stikzfig{beamsplitterBBNN}&:&\htype  \oplus \top \to \htype\oplus \top &~~~~&\stikzfig{beamsplitterRNNR}&:& \vtype \oplus \top \to \top \oplus \vtype&~~~~&\stikzfig{beamsplitterRNBw}&:&\vtype\oplus \htype \to \top\\[0.2cm]
\stikzfig{beamsplitterNNBB}&:&\top \oplus \htype \to\top \oplus \htype &~~~~&\stikzfig{beamsplitterNRwB}&:&\top \to \vtype\oplus \htype&~~~~&\stikzfig{beamsplitterBwRN}&:&\htype\oplus \vtype \to \top\\[0.2cm]
\stikzfig{neg-}&:&\top \to \top&~~~~&\stikzfig{cnegRB-}&:&\vtype \to \htype&~~~~&\stikzfig{cnegBR-}&:&\htype \to \vtype\\[0.2cm]
\stikzfig{gateU-}&:&\top \to \top&~~~~&\stikzfig{cgateUR-}&:&\vtype \to\vtype&~~~~&\stikzfig{cgateUB-}&:&\htype \to\htype\\[0.2cm]
\end{array}$}

\end{definition}

The morphisms of $\diag \M$ are called $\M$-diagrams or simply diagrams when $\M$ is irrelevant or clear from the context.
Intuitively, the diagrams are inductively obtained by composition of the generators from Definition \ref{def:diag} using the sequential composition $D_2\circ D_1$, the parallel composition $D_3\oplus D_4$, and the trace $Tr_d(D)$ which are respectively depicted as follows:{$\quad\ntikzfig{composition-s-symmetrique}\quad\ntikzfig{produittensoriel-s-symmetrique}\quad\ntikzfig{trace-s-symmetrique}$}. Notice that these compositions should type-check, i.e. $D_1:a\to b$, $D_2:b\to c$ and $D:a\oplus d\to b\oplus d$ with $d\in \{\top, \vtype, \htype\}$.  The axioms of the traced coloured prop guarantee that the diagrams are defined up to deformation: two diagrams whose graphical representations are isomorphic are equal.

Regarding notations, we use  actual colours for wires:  blue  for $\htype$-wires, red for $\vtype$-wires, and black for $\top$-wires. We also add labels on the wires, which are omitted when clear from the context,
so that there is no loss of information in the case of a colour-blind reader or  black and white printing.\footnote{With the convention that the type $\top$ is always omitted (in other words, a wire of ambiguous type is black by convention). See~\cite{clem2023}, Example~1.7 for additional explanations about how to infer the omitted  types.} Two examples of diagrams are given in Figure \ref{fig:ex1}.

\begin{figure}[!]
\centerline{$\ptikzfig{ex2}{0.8}{0.8}1$\qquad\qquad\ptikzfig{NFcoloreeexemplecourtmonoidesansId}{0.41}{0.7}1}
\caption{\label{fig:ex1} (\emph{Left}) An example of diagram of type $\top \oplus \top \oplus \vtype \oplus \htype\to \top \oplus \htype \oplus \top \oplus \vtype$. (\emph{Right}) An example of a diagram of type $\top\oplus\vtype\oplus\htype\oplus\top\oplus\htype\to\htype\oplus\top\oplus\vtype\oplus\top\oplus\htype$, in a particular form that we will call \emph {normal form} (see \cref{defNFcoloree}).}%
\end{figure}

Unless  specified, the unit of $\M$ is denoted $I$ and its composition is $\cdot$ which will be generally omitted ($VU$ rather than $V\cdot U$). 
The main two  examples of monoids  we consider in the rest of the paper are:
\begin{itemize}
\item The monoid $\mathcal{U}(\H)$ of isometries of a Hilbert space $\H$ with the usual composition. When $\H$ is of finite dimension, the elements of $\mathcal{U}(\H)$ are unitary maps. With a slight abuse of notations, the corresponding traced coloured prop of diagrams is denoted $\diag \H$. 
\item The free monoid $\GG^*$ on some set  $\GG$. The gates, when the monoid is freely generated, can be interpreted as queries to oracles (each element of $\GG$ corresponds to an oracle): the gates implement \emph{a priori} arbitrary operations with no particular structures. We use the term \emph{abstract diagram} when the underlying monoid is freely generated, and we refer to the elements of $\GG$ as \emph{names}. Notice that the free monoid case can also be seen as an extension of \emph{bare diagrams}~\cite{branciard2021coherent}.
\end{itemize}

There are other examples of interests: One can consider for instance a monoid of commuting or anticommuting gates, that can be used to model the problem studied in \cite{chiribella2012perfect}. Another example is the monoid of $n$-qubit quantum circuits whose generators are layers of gates acting on $n$ qubits (e.g. $H \otimes \cnot\otimes I \otimes H $ when $n=5$ where $H$ is the $1$-qubit Hadamard gate, $\cnot$  the 2-qubit controlled-not gate, and $I$ the 1-qubit identity) and whose composition is the sequential composition of circuits. The monoid can be quotiented by equations like $(H \otimes  I)\cdot( I \otimes H) = H\otimes H$ and $(H\otimes I)\cdot(H\otimes I)= I\otimes I$. 
Finally, one can consider the monoid of unitary purifications\footnote{Given a Hilbert space $\H$, the elements of the monoid are triplets $[U,\ket\varepsilon,\mathcal E]$ where $\mathcal E$ is a Hilbert space, $U:\H\otimes \mathcal E\to \H\otimes \mathcal E$ is a unitary transformation, and $\ket{\varepsilon}\in \mathcal E$. The composition is defined as $[U_2,\ket {\varepsilon_2},\mathcal E_2]\cdot [U_1,\ket {\varepsilon_1},\mathcal E_2] = [(\sigma_{\mathcal E_1, \H}\otimes I)(I\otimes U_2) (\sigma_{\H, \mathcal E_1}\otimes I)(U_1\otimes I)\,,\, \ket{\varepsilon_1}\otimes \ket {\varepsilon_2},\,\mathcal E_1\otimes \mathcal E_2]$ where $\sigma_{\mathcal K, \mathcal K'}$ is the swap between the two Hilbert spaces $\mathcal K$, $\mathcal K'$.\label{footnotedefmonoidpurif}} used to describe coherent control of quantum channels~\cite{branciard2021coherent}.

\section{Semantics}

The input of a diagram is a single particle, which has a polarisation, a position and a data register. A basis state for the polarisation is either vertical or horizontal, and a basis state for the position is an integer which corresponds to the wire on which the particle is located. The type of a diagram restricts the possible input/output configurations: if $D:\vtype\oplus \top\to \htype \oplus \htype\oplus \vtype$ then the possible input (resp. output) configurations are the following polarisation-position pairs: $\{({\Vpol},0), ({\Vpol},1), ({\Hpol},1)\}$ (resp. $\{({\Hpol},0),({\Hpol},1),({\Vpol},2)\}$).
 More generally for any object $a$, let $[a]$ be the set of possible configurations, and $|a|$ be its size, inductively defined as follows: $|\ov| = 0$, $|a \oplus \top| = |a \oplus \vtype |=|a \oplus \htype| = |a|+1$, and $[\ov] = \emptyset$, $[a\oplus \vtype] = [a]\cup\{(\Vpol,|a|)\}$, $[a\oplus \htype] = [a]\cup\{(\Hpol,|a|)\}$ and $[a\oplus \top] = [a]\cup\{(\Vpol,|a|), (\Hpol,|a|)\}$.

The semantics of an $\M$-diagram $D:a\to b$ is a map $[a] \to [b]\times \M$ which associates with an input configuration $(c,p)$, an output configuration $(c',p')$ and a side effect $U_k\ldots U_1\in \M$ which represents the action applied on a data register of the particle. Thus the semantics of a diagram can be formulated as   follows:

\begin{definition}[Action semantics]\label{defactionsem}
Given an $\M$-diagram $D:a \to b$, let $\interpath D : [a] \to [b]\times \M$ be inductively defined as: $ \forall D_1:a\to b, D_2:b\to d, D_3 :d \to e, D_4:a\oplus f\to b\oplus f$, where $f\in \{\top, \vtype, \htype\}$:
\begin{longtable}{@{\quad\ }L@{\quad}L}
\interpath{\stikzfig{beamsplitterNRwB}}=\begin{cases}({\Vpol},0)\mapsto(({\Vpol},0),I)\\({\Hpol},0)\mapsto(({\Hpol},1),I)\end{cases}&\interpath{\stikzfig{beamsplitterwBNR}}=\begin{cases}({\Vpol},0)\mapsto(({\Vpol},1),I)\\({\Hpol},0)\mapsto(({\Hpol},0),I)\end{cases}\\[1.5em]
\interpath{\stikzfig{beamsplitterRNBw}}=\begin{cases}({\Vpol},0)\mapsto(({\Vpol},0),I)\\({\Hpol},1)\mapsto(({\Hpol},0),I)\end{cases}&\interpath{\stikzfig{beamsplitterBwRN}}=\begin{cases}({\Vpol},1)\mapsto(({\Vpol},0),I)\\({\Hpol},0)\mapsto(({\Hpol},0),I)\end{cases}\\[1.5em]
\interpath{\stikzfig{abbeamsplitter}}=(c,p)\mapsto\begin{cases}((c,p),I)&\text{if $c={\Vpol}$}\\((c,1-p),I)&\text{if $c={\Hpol}$}\end{cases}&\interpath{\stikzfig{cswap}}=(c,p)\mapsto((c,1-p),I)\\[1.7em]
\begin{array}{@{}l}\interpath{\ntikzfig{cfilcourt-s}}=(c,0)\mapsto((c,0),I)\\[1.2em]\interpath{\sgtikzfig{cgateU}}=(c,0)\mapsto((c,0),U)\end{array}&\interpath{\sgtikzfig{cneg-}}=\begin{cases}({\Vpol},0)\mapsto(({\Hpol},0),I)\\({\Hpol},0)\mapsto(({\Vpol},0),I)\end{cases}\\[1.7em]
\interpath{D_1\oplus  D_3} = (c,p)\mapsto \begin{cases}\interpath{D_1}(c,p)&\text{if $p<|a|$}\\S_a(\interpath{D_3}(c,p-|a|))&\text{otherwise}\end{cases}&\qquad\interpath{D_2\circ D_1} = \interpath {D_2} \circ \interpath {D_1}\\[1.5em]
\changelargeur{\interpath{Tr_f(D_4)}=(c,p)\mapsto \begin{cases}  \interpath{D_4}(c,p)\quad\text{if $\pi_{pos}(\interpath{D_4}(c,p))< |b|$}\\\interpath{D_4}\circ S_{a-b}(\interpath{D_4}(c,p))\\\changelargeur{\qquad\quad\text{if $\pi_{pos}(\interpath{D_4}\circ S_{a-b}(\interpath{D_4}(c,p)))< |b|\leq\pi_{pos}(\interpath{D_4}(c,p))$}}{}\\\interpath{D_4}\circ S_{a-b}(\interpath{D_4}\circ S_{a-b}(\interpath{D_4}(c,p)))\quad\text{otherwise}\end{cases}}{}
\end{longtable}
where the composition is: $g\circ f (c,p)=((c'',p''), U'U)$ with $f(c,p)=((c',p'),U)$ and $g(c',p')=((c'',p''), U')$; $\pi_{pos}:[a]\times \M\to \mathbb N:: ((c,p),U)\mapsto p$ is the projector on the position; and $S_a:[b]\times \M \to [a\oplus b] \times \M :: ((c,p),U) \mapsto ((c,p+|a|),U)$ and $S_{a-b}:[b]\times \M \to [a] \times \M :: ((c,p),U) \mapsto ((c,p+|a|-|b|),U)$ shift the position.

Given $D:a\to b$ and $(c,p)\in[a]$, we denote respectively by $c^D_{c,p}$, $p^D_{c,p}$ and $U^D_{c,p}$ the polarisation, the position and the element of $\M$, such that $\interpath{D}(c,p)=((c^D_{c,p}, p^D_{c,p}),U^D_{c,p})$.
In the case where $\M$ is the free monoid $\GG^*$, its elements can be seen as words, so we will use the notation $w^D_{c,p}$ instead of $U^D_{c,p}$.
\end{definition}

Notice that the semantics of the trace is not  defined as a fixed point but as a finite number of unfoldings.  Indeed, like for PBS-diagrams, one can show that any wire of a diagram is used at most twice, each time with a distinct polarisation.

\begin{proposition}\label{semanticsiswelldefined}
$\interpath .$ is well-defined, i.e. the axioms of the traced coloured prop are sound and the semantics of the trace is well-defined. 
\end{proposition}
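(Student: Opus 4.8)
The plan is to prove the two assertions of the proposition separately: first that the operational definition of the trace always terminates, so that $\interp{\cdot}$ is a total, well-defined function; and second that $\interp{\cdot}$ is invariant under the equations of traced coloured props, hence descends to a functor on $\diag{\M}$. The crux — and the main obstacle — is the well-definedness of the trace, which rests on the structural invariant announced after \cref{defactionsem}: \emph{along the trajectory of a single input configuration, every wire is traversed at most twice, and if twice then once with polarisation $\Vpol$ and once with $\Hpol$}.

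I would establish this invariant by a "follow-the-particle" argument that is independent of the inductive syntax. Fix a diagram presented as a graph of generator-boxes linked by wires, and form the finite directed graph $G$ whose vertices are pairs $(w,\sigma)$ of a wire $w$ and a polarisation $\sigma$ compatible with the colour of $w$, with an edge from $(w,\sigma)$ to $(w',\sigma')$ whenever some generator has $w$ among its inputs, $w'$ among its outputs, and sends the configuration on $(w,\sigma)$ to the one on $(w',\sigma')$. Reading off the clauses of \cref{defactionsem}, every generator acts as a bijection between its admissible input and output configurations: the four $3$-leg and the $4$-leg beam splitters permute, $\neg$ flips the polarisation, and gates, wires and swaps are identities on configurations. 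Since moreover each wire abuts at most one generator on each side, every vertex of $G$ has in-degree and out-degree at most $1$, so $G$ is a disjoint union of simple paths and simple cycles. A left-boundary input is a source, so the trajectory issued from it is a finite simple path ending at a right-boundary output; consequently no state $(w,\sigma)$ repeats, which is exactly the invariant.

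With the invariant in hand, the trace clause is justified as follows. In $Tr_e(D_1)$ with $e\in\{\top,\vtype,\htype\}$, the $e$-output of $D_1$ is joined to its $e$-input by a single feedback wire, so one loop iteration corresponds to exactly one traversal of that wire; by the invariant the wire is crossed at most twice (at most once per polarisation), hence the particle re-enters $D_1$ at most twice and makes at most three passes before exiting on a $b$-wire. This is precisely what the three cases encode: once the shift $S_{a-b}$ re-injects the particle on the $e$-port, the predicates $\pi_{pos}(\cdots)<|b|$ single out exit after one, two, or three passes. They are mutually exclusive (the computation stops as soon as a $b$-wire is reached) and, by the bound just proved, exhaustive; thus each branch is well-typed, the trace is a total function $[a]\to[b]\times\M$, and the same itinerary shows the side-effect word of $\M$ accumulates correctly along the finite trajectory.

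Finally, for soundness of the prop axioms I would observe that the follow-the-particle semantics depends only on the diagram as a morphism of the \emph{free} traced coloured prop — i.e. only on its graph up to isomorphism — and is therefore automatically invariant under deformation. It then suffices to check that the inductive map $\interp{\cdot}$ computes this graph semantics, by structural induction: the clauses for $D_2\circ D_1$ and $D_1\oplus D_3$ match concatenation and disjoint union of the underlying graphs immediately, and the trace clause matches by the decomposition of the feedback path into its at most three segments established above. Associativity, unitality, functoriality of $\oplus$, naturality of the symmetry, and the trace axioms (vanishing, superposing, yanking, and sliding) then hold because each is an identity between graphs with the same path structure. I expect the index-level reconciliation of $S_a$, $S_{a-b}$ and $\pi_{pos}$ with the particle's itinerary in the trace clause to be the most delicate bookkeeping, while the entire conceptual weight rests on the at-most-twice invariant.
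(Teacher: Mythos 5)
Your proposal is correct and takes essentially the approach the paper itself appeals to: the paper's proof is a two-line deferral which notes that the action semantics can equivalently be defined as a big-step path semantics (exactly your follow-the-particle graph semantics) and then cites Appendix B.1.1 of \cite{alex2020pbscalculus} and Appendix B.1 of \cite{branciard2021coherent} for the details. Your degree-at-most-one argument yielding the at-most-twice-per-wire invariant, the resulting three-pass bound justifying the trace clause, and the graph-isomorphism invariance giving soundness of the prop axioms are precisely the content of that cited argument, here spelled out self-containedly.
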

\begin{proof}
This can be proved in a similar way as in Appendix~B.1.1 of~\cite{alex2020pbscalculus} and Appendix~B.1 of~\cite{branciard2021coherent}, noting that our action semantics could equivalently be defined as a big-step path semantics similar to those of \cite{alex2020pbscalculus} and \cite{branciard2021coherent} (cf. Section 4.2 of~\cite{clem2023}).
\end{proof}
\subsection{Quantum Semantics}

Any diagram whose underlying monoid consists of linear maps admits a \emph{quantum semantics}, defined as follows:

\begin{definition}[Quantum semantics]
Given a  monoid $\M$ of  linear maps (with the standard composition) on a complex vector space $\Ve$, for any $\M$-diagram $D:a\to b$ the quantum semantics of $D$ is the linear map $V_D: \mathbb C^{[a]}\otimes \Ve\to \mathbb C^{[b]}\otimes \Ve :: \ket{c,p}\otimes \ket \varphi \mapsto \ket{c^D_{c,p},p^D_{c,p}}\otimes U^D_{c,p}\ket  \varphi$.
\end{definition}

The diagrams in $\diag \H$ are valid by construction, in the sense that their semantics are  valid quantum evolutions:

\begin{proposition}\label{quantumsemiso}
For any $D\in \diag \H$, $V_D: \mathbb C^{[a]}\otimes \H\to \mathbb C^{[b]}\otimes \H$ is an isometry. 
\end{proposition}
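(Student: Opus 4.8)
The plan is to prove the statement by structural induction on $D$, after reducing the isometry condition to two simpler invariants of the action semantics $\interp D$. Recall that $V_D$ acts on the orthonormal basis $\{\ket{c,p}\otimes\ket{e_i}\}$, with $\{\ket{e_i}\}$ an orthonormal basis of $\H$, by $\ket{c,p}\otimes\ket{e_i}\mapsto\ket{c^D_{c,p},p^D_{c,p}}\otimes U^D_{c,p}\ket{e_i}$. A direct computation of inner products of images shows that $V_D$ is an isometry as soon as (i) every side effect $U^D_{c,p}$ is an isometry of $\H$, and (ii) the underlying configuration map $\varphi_D:[a]\to[b]$, $(c,p)\mapsto(c^D_{c,p},p^D_{c,p})$, is injective. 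Indeed, since the target is a tensor product, the inner product of the images of two input basis vectors factors as $\scalprod{c^D_{c_1,p_1},p^D_{c_1,p_1}}{c^D_{c_2,p_2},p^D_{c_2,p_2}}\cdot\bra{e_i}(U^D_{c_1,p_1})^\dagger U^D_{c_2,p_2}\ket{e_j}$: when the two input configurations differ, injectivity (ii) forces the first factor to vanish, matching the vanishing input inner product; when they coincide, the first factor is $1$ and (i) makes the second factor equal to $\scalprod{e_i}{e_j}$. In both cases the inner product is preserved, so $V_D$ is an isometry.

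Condition (i) is immediate. Since the underlying monoid is $\UU(\H)$, which consists of isometries and is closed under composition, and since every side effect produced by the inductive clauses of \cref{defactionsem} is a product of gate labels (each in $\UU(\H)$) and identities, every $U^D_{c,p}$ is an isometry. The real content is therefore condition (ii), which I would establish by induction on the structure of $D$.

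For the base cases, one checks by inspection that each generator of \cref{def:diag} induces an injective (in fact bijective onto $[b]$) configuration map: the four- and three-leg polarising beam splitters, the negation, the gates, the identity wire and the swap all permute or relabel configurations without collisions, as read off directly from \cref{defactionsem}. For sequential composition $D_2\circ D_1$, the configuration map is the set-theoretic composite $\varphi_{D_2}\circ\varphi_{D_1}$, hence injective as a composite of injections. For parallel composition $D_1\oplus D_3$, the two blocks send configurations into the disjoint position ranges $\{0,\dots,|b|-1\}$ and $\{|b|,\dots\}$ via the shift $S_a$, so injectivity of each block lifts to injectivity of the whole.

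The main obstacle is the trace $Tr_e(D_1)$, whose semantics is defined not as a fixed point but as a bounded number of unfoldings. Here injectivity must be argued from the reversibility of the token dynamics: as noted after \cref{defactionsem}, any wire is traversed at most twice and then with distinct polarisations, so the feedback loop induced by the trace behaves as a deterministic and reversible routing of the token. I would make this precise exactly as in the soundness proofs of the PBS-calculus (Appendix B.1.1 of \cite{alex2020pbscalculus} and Appendix B.1 of \cite{branciard2021coherent}), showing that two distinct traced inputs cannot merge into a common output, since that would force the underlying reversible path dynamics of $D_1$ to be non-injective, contradicting the induction hypothesis on $D_1$. Combining the three inductive steps with the base observations yields (ii), and hence, together with (i), the isometry of $V_D$.
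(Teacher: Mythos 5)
Your proposal is correct and follows essentially the same route as the paper's proof: both reduce the isometry of $V_D$ to preservation of inner products of vectors of the form $\ket{c,p}\otimes\ket{\phi}$, factor that inner product into a configuration part and a data-register part, and conclude from the injectivity (in the paper, bijectivity) of $(c,p)\mapsto(c^D_{c,p},p^D_{c,p})$ together with the fact that each $U^D_{c,p}\in\UU(\H)$ is an isometry. The only difference is one of presentation: where you sketch a structural induction for the injectivity of the configuration map (deferring the trace case to the cited appendices), the paper simply cites the same prior work for that claim.
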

\begin{proof}
The proof is given in Appendix \ref{preuvequantumsemiso}.
\end{proof}

Note that $\interpath{D} = \interpath {D'}$ implies $V_D = V_{D'}$; the converse is true if and only if $0\notin \M$:
\begin{proposition}\label{equivquantumaction}
Given a monoid $\M$ of complex linear maps, we have $\forall D,D', \interpath{D} = \interpath {D'} \Leftrightarrow V_D=V_{D'}$, if and only if $0\notin \M$. 
\end{proposition}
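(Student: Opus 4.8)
The plan is to split the outer biconditional into its two implications, with $0\notin\M$ as the governing hypothesis, and to recall that $\interp D=\interp{D'}\Rightarrow V_D=V_{D'}$ already holds unconditionally: indeed $V_D$ is defined purely from the data $c^D_{c,p},p^D_{c,p},U^D_{c,p}$, so equal action semantics force equal quantum semantics (this is the remark stated just before the proposition). Throughout I take $D,D'$ to range over diagrams of a common type $a\to b$, so that $[a],[b]$ agree and both equalities make sense. Hence the whole content lies in the converse inner implication $V_D=V_{D'}\Rightarrow\interp D=\interp{D'}$ and in how it depends on $0\notin\M$.

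For the ``if'' direction I would assume $0\notin\M$ and prove the converse. The key preliminary remark is that each side effect $U^D_{c,p}$ is a finite composition of elements of $\M$ (gate generators contribute their label in $\M$, all other generators contribute $I\in\M$, and the trace is unfolded only finitely many times); since $\M$ is a monoid of linear maps, it is closed under composition, so $U^D_{c,p}\in\M$ and therefore $U^D_{c,p}\neq 0$. Now assume $V_D=V_{D'}$ and fix $(c,p)\in[a]$. Choosing $\ket{\phi_0}$ with $U^D_{c,p}\ket{\phi_0}\neq 0$ and evaluating $V_D=V_{D'}$ on $\ket{c,p}\otimes\ket{\phi_0}$ gives $\ket{c^D_{c,p},p^D_{c,p}}\otimes U^D_{c,p}\ket{\phi_0}=\ket{c^{D'}_{c,p},p^{D'}_{c,p}}\otimes U^{D'}_{c,p}\ket{\phi_0}$; as the left side is a nonzero vector supported on the single basis index $(c^D_{c,p},p^D_{c,p})$ of $\mathbb C^{[b]}$, the output configurations must coincide. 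Once the index agrees, evaluating on $\ket{c,p}\otimes\ket{\phi}$ for every $\ket{\phi}$ and cancelling the common basis vector yields $U^D_{c,p}\ket{\phi}=U^{D'}_{c,p}\ket{\phi}$, i.e. $U^D_{c,p}=U^{D'}_{c,p}$. Since $(c,p)$ is arbitrary, $\interp D=\interp{D'}$.

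For the ``only if'' direction I would argue by contraposition: assuming $0\in\M$, I exhibit two diagrams of the same type that $V$ identifies but $\interp\cdot$ distinguishes. Take $D$ to be the gate $\stikzfig{gateU-}$ carrying the zero map (legal since $0\in\M$), with action $(\Vpol,0)\mapsto((\Vpol,0),0)$ and $(\Hpol,0)\mapsto((\Hpol,0),0)$, and $D'=\stikzfig{neg-}\circ\stikzfig{gateU-}$ with the same zero label, with action $(\Vpol,0)\mapsto((\Hpol,0),0)$ and $(\Hpol,0)\mapsto((\Vpol,0),0)$. Both have all side effects equal to the zero map, so $V_D$ and $V_{D'}$ send every $\ket{c,0}\otimes\ket{\phi}$ to $0$; thus $V_D=V_{D'}$, yet $\interp D\neq\interp{D'}$ because the output polarisations differ. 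This falsifies the statement $\forall D,D',(\interp D=\interp{D'}\Leftrightarrow V_D=V_{D'})$, as required.

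The main obstacle, and really the conceptual core, is the converse in the first direction: matching quantum semantics can pin down the output configuration of a branch only when that branch's side effect is nonzero, since a zero side effect annihilates the whole tensor and erases all position/polarisation information. The hypothesis $0\notin\M$, combined with closure of $\M$ under composition, is exactly what guarantees every side effect is nonzero and hence every branch is ``visible'' in $V_D$; the counterexample in the second direction confirms that this hypothesis cannot be dispensed with.
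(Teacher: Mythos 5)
Your proof is correct and follows essentially the same route as the paper's: the unconditional implication $\interp{D}=\interp{D'}\Rightarrow V_D=V_{D'}$, reconstruction of the action semantics from $V_D$ when $0\notin\M$ by using that every side effect lies in $\M$ and is hence nonzero, and the same counterexample (a zero-labelled gate with and without a trailing negation) when $0\in\M$. Your version merely spells out more explicitly the evaluation on basis vectors and the closure-under-composition point that the paper leaves implicit.
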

\begin{proof}
The proof is given in Appendix \ref{preuveequivquantumaction}
\end{proof}

In particular, two diagrams in $\diag \H$ have the same action semantics if and only if they have the same quantum semantics.  

\subsection{Interpretation}
Given a monoid homomorphism $\gamma: \M \to \M'$, one can transform any $\M$-diagram into an $\M'$-diagram straightforwardly, by applying $\gamma$ on each gate of the diagram:

\begin{definition}\label{def:interp}
Given an $\M$-diagram $D:a\to b$ and a monoid homomorphism $\gamma\colon\M\to\M'$, we define its $\gamma$-inter\-pretation $\gamma(D):a\to b$ as the $\M'$-diagram obtained by applying $\gamma$ to each gate of $D$. It is defined inductively as: $\gamma(\sgtikzfig{cgateU}:a\to a)=\sgtikzfig{cgategammadeU}:a\to a$; for any other generator $g$, $\gamma(g)=g$; $\gamma(D_2\circ D_1)=\gamma(D_2)\circ\gamma(D_1)$; $\gamma(D_1\oplus D_2)=\gamma(D_1)\oplus\gamma(D_2)$; and $\gamma(Tr_e(D))=Tr_e(\gamma(D))$.
\end{definition}

\begin{proposition}\label{prop:inter} Any $\M$-diagram is the interpretation of an abstract diagram.
\end{proposition}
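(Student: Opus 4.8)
The plan is to prove this by structural induction, exploiting the fact that the only generators of $\diag\M$ that depend on the monoid $\M$ are the three gate families, each parameterised by a single element $U\in\M$; all the beam splitters and negations, together with the three structural operations $\circ$, $\oplus$ and $Tr$, are monoid-independent and are left untouched by any interpretation $\gamma(-)$. So the whole task reduces to supplying a canonical preimage for the gates.

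First I would fix the free monoid and the homomorphism once and for all. Take $\GG$ to be the underlying set of $\M$, so that $\GG^*=\M^*$ is the free monoid on $\M$ (this is allowed even when $\M$ is infinite, as for $\UU(\H)$). By the universal property of the free monoid, the identity function $\GG=\M\to\M$ extends to a unique monoid homomorphism $\mu\colon\M^*\to\M$, sending a word $U_1\cdots U_n$ to the product $U_1\cdots U_n$ computed in $\M$ and the empty word to $I$. In particular, $\mu$ restricted to one-letter words is the identity on $\M$.

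Next I would construct, by induction on the structure of a given $\M$-diagram $D\colon a\to b$, an abstract (that is, $\M^*$-)diagram $\dbtilde D\colon a\to b$: whenever $D$ is a gate labelled by some $U\in\M$, let $\dbtilde D$ be the corresponding abstract gate of the same type labelled by the length-one word $U\in\M^*$; for every other generator $g$ set $\dbtilde g=g$; and let the construction commute with the structural operations, so that $\widetilde{D_2\circ D_1}=\dbtilde{D_2}\circ\dbtilde{D_1}$, $\widetilde{D_1\oplus D_2}=\dbtilde{D_1}\oplus\dbtilde{D_2}$, and $\widetilde{Tr_e(D)}=Tr_e(\dbtilde D)$. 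By construction $\dbtilde D$ is an abstract diagram of the same type as $D$.

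Finally I would verify that $\mu(\dbtilde D)=D$, again by structural induction, following the inductive clauses in the definition of the $\gamma$-interpretation. The only base case with content is the gate: $\mu$ sends the abstract gate labelled by the length-one word $U$ to the gate labelled by $\mu(U)=U$, recovering the original gate; every other base case is an equality of identical monoid-independent generators; and the inductive cases hold because both $\mu(-)$ and $\widetilde{(-)}$ commute with $\circ$, $\oplus$ and $Tr$. I do not expect a genuine obstacle here: the statement is essentially the observation that each gate admits the corresponding one-letter word as a canonical preimage under the evaluation homomorphism $\mu$, so that the interpretation is surjective on morphisms. The only point requiring a little care is to set up $\GG$, $\M^*$ and $\mu$ so that the two inductions line up cleanly with the inductive definitions of diagrams and of $\gamma(-)$.
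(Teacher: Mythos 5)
Your proof is correct and takes essentially the same route as the paper: the paper also takes $\GG$ to be the underlying set of $\M$, extends the identity function to the evaluation homomorphism $\gamma\colon\GG^*\to\M$, and observes that $D$, with each gate label read as a one-letter word, is an abstract diagram satisfying $\gamma(D)=D$. The only difference is presentational — you spell out as explicit structural inductions the relabelling and the verification that the paper compresses into the single remark ``$D$ can be seen as an abstract diagram of $\diag{\GG^*}$ and $\gamma(D)=D$''.
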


\begin{proof}
Given an $\M$-diagram $D$, let $\GG$ be the underlying set of $\M$ and $\gamma : \GG \to \M$  s.t. $\forall U\in \GG$, $\gamma(U)=U$. The function $\gamma$ can be extended trivially into a homomorphism $\gamma\colon\GG^*\to\M$. Notice that $D$ can be seen as a (abstract) diagram of $\diag {\GG^*}$ and $\gamma(D) = D$. 
\end{proof}

It is easy to see that the action of monoid homomorphisms on diagrams is well-behaved with respect to the semantics:

\begin{proposition}
Given any $\M$-diagram $D:a\to b$ and any monoid homomorphism $\gamma\colon \M\to\M'$, for any configuration $(c,p)\in[a]$, if $\interpath{D}(c,p)=((c',p'),U)$ then $\interpath{\gamma(D)}(c,p)=((c',p'),\gamma(U))$.
\end{proposition}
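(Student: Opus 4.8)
The plan is to prove the statement by structural induction on the $\M$-diagram $D$, following exactly the inductive clauses defining $\interp{\cdot}$ (\cref{defactionsem}) and the inductive clauses defining $\gamma(\cdot)$. The guiding observation is that $\gamma$ fixes every generator except the gates, each gate of type $a\to a$ labelled $U$ being sent to the gate (of the same type) labelled $\gamma(U)$. Consequently $\interp{D}$ and $\interp{\gamma(D)}$ will always produce \emph{identical control data} — the same polarisation $c'$ and the same position $p'$ — and will differ only in the accumulated side effect, where each elementary contribution $U$ is replaced by $\gamma(U)$. The whole argument amounts to tracking this bookkeeping through the three composition operators and checking that it is compatible with $\gamma$ being a monoid homomorphism.

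For the base cases corresponding to the beam splitters, the negation, the short wire and the swap, the semantics always yields the side effect $I$; since $\gamma(g)=g$ for each such generator $g$ and $\gamma(I)=I$ (a homomorphism preserves the unit), the claim holds with $(c',p')$ unchanged and $\gamma(U)=\gamma(I)=I$. For a gate generator of type $a\to a$ labelled $U$ we have $\interp{\cdot}(c,0)=((c,0),U)$, and $\gamma$ turns it into the gate labelled $\gamma(U)$, whose semantics is $(c,0)\mapsto((c,0),\gamma(U))$; this is exactly the required conclusion.

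For sequential composition, recall that $\interp{D_2\circ D_1}=\interp{D_2}\circ\interp{D_1}$, where the composition multiplies side effects on the left, $g\circ f(c,p)=((c'',p''),U'U)$. Applying the induction hypothesis first to $D_1$ and then to $D_2$ at the (unchanged) intermediate configuration, and using $\gamma(D_2\circ D_1)=\gamma(D_2)\circ\gamma(D_1)$, the output configuration of $\gamma(D_2\circ D_1)$ coincides with that of $D_2\circ D_1$, while the side effect becomes $\gamma(U')\gamma(U)=\gamma(U'U)$ by the homomorphism property. Parallel composition is analogous but easier: the case split in $\interp{D_1\oplus D_3}$ depends only on the position, which the induction hypothesis preserves; the shift $S_a$ acts only on positions and not on side effects; and $\gamma(D_1\oplus D_3)=\gamma(D_1)\oplus\gamma(D_3)$. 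Hence this case reduces directly to the induction hypothesis applied to the relevant subdiagram.

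The only case needing a little care is the trace, and it is where I expect the only real obstacle to lie. Here $\interp{Tr_e(D_1)}$ is defined by a bounded number of unfoldings of $\interp{D_1}$, interleaved with the shift $S_{a-b}$, and the branch selected among the three cases is determined \emph{solely} by the values of $\pi_{pos}$ at the intermediate configurations. By the induction hypothesis these positions are identical for $D_1$ and $\gamma(D_1)$, so $Tr_e(\gamma(D_1))=\gamma(Tr_e(D_1))$ follows exactly the same branch and yields the same output configuration. Since each unfolding contributes its side effect multiplicatively while $S_{a-b}$ leaves side effects untouched, the accumulated side effect of $\gamma(Tr_e(D_1))$ is the $\gamma$-image of that of $Tr_e(D_1)$, once more by the homomorphism property. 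The crux is thus the observation that the control flow of the trace — in particular which of the three cases applies — is insensitive to replacing each gate $U$ by $\gamma(U)$; this is precisely guaranteed by the configuration-preservation half of the induction hypothesis, which makes the argument go through uniformly and completes the induction.
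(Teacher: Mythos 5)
Your proof is correct. The paper in fact gives no proof of this proposition at all (it is introduced with ``It is easy to see that\ldots''), and your structural induction --- checking that $\gamma$ fixes all non-gate generators, that configurations are preserved so the case splits in $\oplus$ and in the trace unfoldings select the same branches, and that side effects are mapped through $\gamma$ using the homomorphism property at each composition --- is exactly the routine argument the paper is alluding to.
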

\begin{proof}Straightforward by induction.\end{proof}

As a consequence, given two abstract diagrams $D_1, D_2\in \diag {\GG^*}$, if $\interpath{D_1} =\interpath{D_2}$ then for any homomorphism $\gamma:\GG^* \to \M$, $\interpath{\gamma(D_1)} =  \interpath{\gamma(D_2)}$. The converse is not true in general. Notice that in the framework of graphical languages, an equation holds in graphical languages for traced symmetric (resp. dagger compact closed) monoidal categories if and only if it holds in finite-dimensional vector (resp. Hilbert) spaces \cite{hasegawa2008finite,selinger2011finite}. We prove a similar result by showing that interpreting abstract diagrams using 2-dimensional Hilbert spaces is enough to completely characterise their semantics:

\begin{proposition}\label{matricescompletes}
Given a Hilbert space $\H$ of dimension at least $2$ and a set  $\GG$,
$\forall D_1,D_2\in \diag {\GG^*}$, there exists a monoid homomorphism $\gamma\colon\GG^*\to\UU(\H)$ s.t. $\interpath{D_1}=\interpath{D_2} \Leftrightarrow \interpath{\gamma(D_1)}=\interpath{\gamma(D_2)}$.
\end{proposition}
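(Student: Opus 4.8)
The plan is to observe that interpreting a diagram through $\gamma$ leaves its action on positions and polarisations untouched and only rewrites the recorded word, so the whole statement reduces to a question about separating finitely many words of $\GG^*$. First, the forward implication $\interp{D_1}=\interp{D_2}\Rightarrow\interp{\gamma(D_1)}=\interp{\gamma(D_2)}$ needs no special choice of $\gamma$: it holds for \emph{every} monoid homomorphism, as recorded in the remark following the proposition on the behaviour of interpretations. So all the work lies in producing a single $\gamma$ for which the converse also holds.

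For the converse I would use that same preceding proposition, which gives $\interp{\gamma(D)}(c,p)=((c^D_{c,p},p^D_{c,p}),\gamma(w^D_{c,p}))$ for every configuration. Hence $\interp{\gamma(D_1)}=\interp{\gamma(D_2)}$ holds exactly when, for all $(c,p)$, the output polarisations and positions of $D_1$ and $D_2$ agree and $\gamma(w^{D_1}_{c,p})=\gamma(w^{D_2}_{c,p})$. The polarisation/position part of this condition is literally the same as the one occurring in $\interp{D_1}=\interp{D_2}$; it therefore suffices to choose $\gamma$ injective on the set $W$ of all words $w^{D_1}_{c,p}$ and $w^{D_2}_{c,p}$ produced by the two diagrams, so that $\gamma(w^{D_1}_{c,p})=\gamma(w^{D_2}_{c,p})$ forces $w^{D_1}_{c,p}=w^{D_2}_{c,p}$, i.e. $\interp{D_1}=\interp{D_2}$.

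Since each diagram is finite and, as observed just after the definition of the semantics, the trace is unfolded only a bounded number of times, $W$ is a finite set of finite words, built from a finite subset $\GG_0\subseteq\GG$ of generators. It is thus enough to make $\gamma$ injective on $\GG_0^{*}$ and to define it arbitrarily (say as the identity) on the remaining generators. The one genuinely non-bookkeeping step --- and the step I expect to be the main obstacle --- is exhibiting a faithful unitary representation of the finitely generated free monoid $\GG_0^{*}$. I would obtain it from the classical fact that $\mathrm{SU}(2)$ contains a free subgroup of rank $2$ (for instance generated by two suitably chosen rotations about orthogonal axes), and hence free subgroups of every finite rank. Embedding $\mathrm{SU}(2)$ into $\UU(\H)$ as a block acting on a fixed two-dimensional subspace and the identity on its orthogonal complement --- which is legitimate since $\dim\H\geq 2$ and such block unitaries are isometries, hence elements of $\UU(\H)$ --- and sending the generators in $\GG_0$ to free generators of a subgroup of rank $|\GG_0|$ yields a monoid homomorphism $\gamma\colon\GG_0^{*}\to\UU(\H)$ that is injective, because the free monoid embeds into the free group on the same generators. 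With this $\gamma$ both implications hold, establishing the claimed equivalence.
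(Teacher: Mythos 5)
Your proof is correct, and it diverges from the paper's precisely at the word-separation step. Both arguments start from the same reduction: since $\interp{\gamma(D)}(c,p)=((c^{D}_{c,p},p^{D}_{c,p}),\gamma(w^{D}_{c,p}))$, the forward implication holds for \emph{every} homomorphism, and the converse only needs $\gamma$ to be injective on the words the diagrams can produce; both also use the same block trick of acting on a fixed $2$-dimensional subspace of $\H$ and as the identity on its orthogonal complement. The paper then constructs a $\gamma$ injective on the whole free monoid over a countable alphabet, by sending each generator $U$ to $H\left(\begin{smallmatrix}1&0\\0&\alpha_U\end{smallmatrix}\right)$ with $(\alpha_U)$ a $\QQ$-algebraically independent family of phases (concretely $e^{\mathrm{i}\pi^k}$, to avoid the axiom of choice), and it separates distinct words by specialising each $\alpha_U$ to $e^{\mathrm{i}\pi/4}$ or $e^{3\mathrm{i}\pi/4}$ and invoking the Matsumoto--Amano normal-form theorem for Clifford+$T$ products. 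You avoid all of that machinery: having observed that only a finite alphabet $\GG_0$ and a finite set $W$ of words matter for a fixed pair $(D_1,D_2)$, you send the elements of $\GG_0$ to free generators of a rank-$|\GG_0|$ free subgroup of $\mathrm{SU}(2)$ and use the embedding of the free monoid into the free group on the same generators. This is more elementary (a classical group-theoretic fact replaces algebraic independence plus normal forms) and equally choice-free. What the paper's heavier construction buys is uniformity: the same phases separate \emph{all} pairs of words at once, which is what yields the stronger \cref{interpretationuniverselle}, for alphabets of cardinality up to $2^{\dim(\H)}$, where a free-subgroup argument would need free groups of uncountable rank; your $\gamma$ genuinely depends on $D_1,D_2$ (through $\GG_0$), which is all the present statement requires. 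One cosmetic point: the finiteness of $W$ already follows from the finiteness of $[a]$ (a diagram has finitely many input configurations); the bounded unfolding of the trace is only needed to know that each individual word $w^{D}_{c,p}$ is finite, which is built into the definition of the semantics.
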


A stronger version of \cref{matricescompletes}, where the homomorphism $\gamma$ is independent of the diagrams, is also true, assuming the axiom of choice:

\begin{proposition}\label{interpretationuniverselle}
Given a Hilbert space $\H$ of dimension at least $2$, and a set $\GG$ of cardinality at most the cardinality of $\UU(\H)$,  there exists a monoid homomorphism $\gamma\colon\GG^*\to\UU(\H)$ s.t. $\forall D_1,D_2\in \diag {\GG^*}$, $\interpath{D_1}=\interpath{D_2} \Leftrightarrow \interpath{\gamma(D_1)}=\interpath{\gamma(D_2)}$.
\end{proposition}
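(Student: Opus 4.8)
The plan is to reduce the statement to the purely group-theoretic problem of embedding the free monoid $\GG^*$ into $\UU(\H)$, and then to solve that embedding problem using free subgroups of large rank. First I would observe that it suffices to exhibit a single \emph{injective} monoid homomorphism $\gamma\colon\GG^*\to\UU(\H)$. Indeed, by the proposition on $\gamma$-interpretation, for every abstract diagram $D$ and configuration $(c,p)$, if $\interp D(c,p)=((c',p'),w)$ with $w\in\GG^*$ then $\interp{\gamma(D)}(c,p)=((c',p'),\gamma(w))$: the position–polarisation part $(c',p')$ is left unchanged, and only the side-effect word is transported along $\gamma$. Hence for $D_1,D_2$ of the same type (if the types differ, both sides of the equivalence are false), $\interp{\gamma(D_1)}=\interp{\gamma(D_2)}$ holds iff for every $(c,p)\in[a]$ the configurations produced by $D_1$ and $D_2$ coincide \emph{and} $\gamma(w^{D_1}_{c,p})=\gamma(w^{D_2}_{c,p})$. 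When $\gamma$ is injective the latter is equivalent to $w^{D_1}_{c,p}=w^{D_2}_{c,p}$, so the whole conjunction becomes exactly $\interp{D_1}=\interp{D_2}$, yielding both implications of the desired equivalence at once. This is where the present statement genuinely strengthens the earlier proposition: there one only needs to embed the finitely (or countably) many gates occurring in a fixed pair, so $F_2\subseteq SU(2)$ is enough, whereas here the single $\gamma$ must be injective on all of $\GG^*$ simultaneously.

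It then remains to build an injective $\gamma\colon\GG^*\to\UU(\H)$, equivalently a free sub-monoid of $\UU(\H)$ of rank $|\GG|$. I would obtain this from a free \emph{subgroup}: if $\{x_g\}_{g\in\GG}$ freely generate a free subgroup of the unitary group, then sending each $g$ to $x_g$ and extending multiplicatively sends any nonempty word to a nonempty positive reduced word, which is never trivial and never coincides with the image of a distinct word; so $\gamma$ is injective. Thus the crux becomes the following fact: \emph{for every $\H$ with $\dim\H\ge 2$, the unitary group of $\H$ contains a free subgroup of rank $|\UU(\H)|$}. Since the hypothesis gives $|\GG|\le|\UU(\H)|$, restricting such a free generating family to a subfamily of cardinality $|\GG|$ completes the construction.

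To establish that fact I would split on the dimension. If $\dim\H=n$ is finite, then $|\UU(\H)|=\mathfrak c$ and $\UU(\H)\supseteq U(2)\supseteq SU(2)$; here I would invoke the classical fact that $SO(3)$, and hence $SU(2)$ (via the projectivity of free groups along the double cover $SU(2)\to SO(3)$), contains a free subgroup of rank $\mathfrak c$. If $\dim\H=\kappa$ is infinite, fix an orthonormal basis $B$ of cardinality $\kappa$; the permutation unitaries embed $\mathrm{Sym}(B)$ into $\UU(\H)$, and de Bruijn's theorem supplies a free subgroup of $\mathrm{Sym}(B)$ of rank $2^{\kappa}$, which equals $|\UU(\H)|$ by a short cardinal computation giving $|\UU(\H)|=\kappa^{\kappa}=2^{\kappa}$. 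In both regimes $\UU(\H)$ contains a free subgroup of rank $|\UU(\H)|\ge|\GG|$, as required. The existence of the orthonormal basis, the cardinal comparisons, and de Bruijn's construction all rest on the axiom of choice, consistently with the hypothesis of the statement.

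The step I expect to be the real obstacle is precisely this last one: producing free subgroups of \emph{full} rank. The reduction to injectivity of $\gamma$ is essentially bookkeeping once the $\gamma$-interpretation lemma is available, but controlling free subgroups of cardinality up to $|\UU(\H)|$ is delicate in both regimes — it requires the continuum-rank free subgroup of $SO(3)$ in finite dimension and de Bruijn's uncountable-rank free subgroups of symmetric groups in infinite dimension — and it is exactly here that the axiom of choice is unavoidable.
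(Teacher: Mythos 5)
Your proposal is correct, and it shares the paper's first move: the paper also reduces the statement to showing that $\gamma$ is injective on words, i.e.\ to embedding the free monoid $\GG^*$ into $\UU(\H)$. Where you genuinely diverge is in how that embedding is built. The paper constructs it by hand: by Zorn's lemma it takes a maximal family of $\QQ$-algebraically independent unimodular complex numbers (necessarily of cardinality $2^{\aleph_0}$), sends each generator $U$ to $H\begin{pmatrix}1&0\\0&\alpha_U\end{pmatrix}$ on a two-dimensional subspace, and proves injectivity on words by specialising the phases to $e^{\mathrm{i}\pi/4}$ or $e^{3\mathrm{i}\pi/4}$ and invoking the Matsumoto--Amano normal-form theorem for products of $H$, $T$, $S$; for $\dim(\H)>\aleph_0$ it boosts the rank to $2^{\dim(\H)}$ by a block-diagonal construction over $\dim(\H)$ orthogonal two-dimensional subspaces, with generators indexed by families of phases. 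You instead outsource the hard part to classical group theory: positive words in free generators of a free \emph{group} are automatically distinct, so it suffices to find a free subgroup of $\UU(\H)$ of rank $|\UU(\H)|$, which you get from the rank-$\mathfrak{c}$ free subgroup of $SO(3)$ (lifted to $SU(2)$ by projectivity of free groups) in finite dimension, and from de Bruijn's free subgroups of $\mathrm{Sym}(B)$ realised as permutation unitaries in infinite dimension, together with the cardinal identity $|\UU(\H)|=2^{\dim(\H)}$ there. Both routes use the axiom of choice, as the statement permits. What each buys: your argument is shorter and more modular, at the price of invoking two nontrivial external theorems; the paper's is essentially self-contained (its only import is the Matsumoto--Amano theorem), and the same phase construction, instantiated with an explicit countable algebraically independent family such as the $e^{\mathrm{i}\pi^k}$, also yields the choice-free Proposition~\ref{matricescompletes} --- a byproduct your route does not directly provide, though it is not needed for the present statement.
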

\begin{proof}[Proof of \cref{matricescompletes,interpretationuniverselle}]
The proof is given in \cref{preuveinterpretationuniverselle}.
\end{proof}
\begin{remark}Notice that the cardinality of $\UU(\H)$ is $\max(2^{\aleph_0},2^{\dim(\H)})$ (where $2^{\aleph_0}$ is the cardinality of $\RR$ and $\dim(\H)$ is the Hilbert dimension of $\H$).\end{remark}

\section{Equational Theory}
\label{thecalculus}
In this section, we introduce an equational theory which allows one to transform any $\M$-diagram into an equivalent one. Indeed, all the equations we present in this section  preserve the semantics of the diagrams (see \cref{soundnessCPBS}).  

These equations are given in \cref{axiomsCPBS}. They form what we call
the \textup{CPBS}-calculus:
\begin{definition}[CPBS-calculus]Two $\M$-diagrams $D_1, D_2$ are equivalent according to the rules of the $\textup{CPBS}$-calculus, denoted $\textup{CPBS}\vdash D_1=D_2$,  if one can transform $D_1$ into $D_2$ using the equations given in Figure \ref{axiomsCPBS}. 
More precisely, $\textup{CPBS}\vdash \cdot = \cdot$ is defined as the smallest congruence 
which satisfies 
equations of Figure \ref{axiomsCPBS} in addition to the axioms of coloured traced prop.
\begin{figure*}[!htb]
\scalebox{.8333}{\begin{minipage}{1.2\textwidth}
\begin{multicols}{3}
\setcounter{equation}{0}
\newcommand{\nspazer}{-0.2em}
\begin{equation}\label{idboxR}\begin{array}{r@{\ }c@{\ }l}\tikzfig{gateIrouge-t}&=&\tikzfig{filmoyenR-label}\end{array}\end{equation}
\vspace{\nspazer}
\begin{equation}\label{fusionUVR}\begin{array}{r@{\ }c@{\ }l}\tikzfig{gateUgateVrouge-t}&=&\tikzfig{gateVUrouge-t}\qquad\end{array}\end{equation}
\vspace{\nspazer}
\begin{equation}\label{neguRB}\begin{array}{r@{\ }c@{\ }l}\tikzfig{UnegRB-label}&=&\tikzfig{negRBU-label}\end{array}\end{equation}
\vspace{\nspazer}\vspace{\nspazer}
\begin{equation}\label{nbscol}\begin{array}{r@{\ }c@{\ }l}\tikzfig{negbeamsplitterNRwB-sh}&=&\tikzfig{beamsplitterwBNRnegneg-sh}\end{array}\end{equation}
\vspace{\nspazer}\vspace{\nspazer}
\begin{equation}\label{ubscol}\begin{array}{r@{\ }c@{\ }l}\tikzfig{UbeamsplitterNRwB-sh}&=&\tikzfig{beamsplitterNRwBUU-sh}\end{array}\end{equation}
\vspace{\nspazer}\vspace{\nspazer}
\begin{equation}\label{bouclevideUR}\begin{array}{r@{\ }c@{\ }l}\ptikzfig{bouclevideUR-label}{0.65}{0.74286}{1}&=&\ptikzfig{diagrammevide}{1}{0.5}{0.5}\end{array}\end{equation}
\vspace{\nspazer}
\begin{equation}\label{negnegRBR}\begin{array}{r@{\ }c@{\ }l}\tikzfig{negnegRBR-label}&=&\tikzfig{filmoyenR-label}\end{array}\end{equation}
\vspace{\nspazer}
\begin{equation}\label{negnegBRB}\begin{array}{r@{\ }c@{\ }l}\tikzfig{negnegBRB-label}&=&\tikzfig{filmoyenB-label}\end{array}\end{equation}
\vspace{\nspazer}
\begin{equation}\label{bsbssplitmerge}\begin{array}{r@{\ }c@{\ }l}\tikzfig{NbsRBbsN-sh}&=&\tikzfig{filmoyen}%\tikzfig{fillongbs}
\end{array}\end{equation}
\vspace{\nspazer}
\begin{equation}\label{bsbsmergesplit}\begin{array}{r@{\ }c@{\ }l}\tikzfig{RBbsNbsRB-sh}&=&\tikzfig{filsparallelesmoyensRB-label}\end{array}\end{equation}
\vspace{\nspazer}
\begin{equation}\label{bsxcol}\begin{array}{r@{\ }c@{\ }l}\tikzfig{beamsplitterwBNR-sh}&=&\tikzfig{beamsplitterNRwBswap-sh}\end{array}\end{equation}
\vspace{\nspazer}
\begin{equation}\label{xbscol}\begin{array}{r@{\ }c@{\ }l}\tikzfig{beamsplitterBwRN-sh}&=&\tikzfig{swapbeamsplitterBwRN-sh}\end{array}\end{equation}
\vspace{\nspazer}
\begin{equation}\label{bsbigebre}\begin{array}{r@{\ }c@{\ }l}\tikzfig{beamsplitter-sh}&=&\ptikzfig{bsbigebre}{0.5}{0.8}1\end{array}\end{equation}
\vspace{\nspazer}\vspace{\nspazer}\vspace{\nspazer}
\begin{equation}\label{bsnrrn}\begin{array}{@{}r@{\ }c@{\ }l}\tikzfig{beamsplitterNRRN-label-sh}&=&\ptikzfig{defbsNRRN}{0.5}{0.8}1\end{array}\end{equation}
\vspace{\nspazer}\vspace{\nspazer}\vspace{\nspazer}
\begin{equation}\label{bsrnnr}\begin{array}{@{}r@{\ }c@{\ }l}\tikzfig{beamsplitterRNNR-label-sh}&=&\ptikzfig{defbsRNNR}{0.5}{0.8}1\end{array}\end{equation}
\vspace{\nspazer}\vspace{\nspazer}\vspace{\nspazer}
\begin{equation}\label{bsnnbb}\begin{array}{r@{\ }c@{\ }l}\tikzfig{beamsplitterNNBB-label-sh}&=&\tikzfig{defbsNNBB-s}\end{array}\end{equation}
\vspace{\nspazer}\vspace{\nspazer}\vspace{\nspazer}
\begin{equation}\label{bsbbnn}\begin{array}{r@{\ }c@{\ }l}\tikzfig{beamsplitterBBNN-label-sh}&=&\tikzfig{defbsBBNN-s}\end{array}\end{equation}\vspace{\nspazer}\vspace{\nspazer}
\end{multicols}
\end{minipage}}
\caption[Axioms of the CPBS-calculus]{Axioms of the CPBS-calculus. $U,V\in\M$. 
Equations (\ref{idboxR}) and (\ref{fusionUVR}) reflect the monoid structure of $\M$; Equations (\ref{neguRB}) to %, \ref{nbscol}, and
 (\ref{ubscol}) show how the three generators commute; Equation (\ref{bouclevideUR}) means that a disconnected diagram (with no inputs/outputs) can be removed;  
 Equations (\ref{negnegRBR}) to %, \ref{negnegBRB}, \ref{bsbssplitmerge}, 
(\ref{bsbsmergesplit}) witness the fact that the negation and the 3-leg PBS are invertible;  Equations (\ref{bsxcol}) and (\ref{xbscol}) are essentially topological rules; Equations (\ref{bsbigebre}) to (\ref{bsbbnn}) show how 4-leg PBS can be decomposed into 3-leg PBS. Notice in particular that the other rules do not use 4-leg PBS, as a consequence one could define the language using 3-leg PBS only and see the 4-leg PBS as syntactic sugar. 
\label{axiomsCPBS}}
\end{figure*}
\end{definition}

Notice that the CPBS-calculus subsumes the  PBS-calculus: the fragment of monochromatic (black) $\CC^{q\times q}$-diagrams of the CPBS-calculus coincides with the set of PBS-diagrams, 
moreover, the completeness of both languages implies that for any two  
PBS-diagrams $D_1$ and $D_2$, $\textup{PBS} \vdash D_1=D_2$ if and only if $\textup{CPBS}\vdash D_1=D_2$.

\begin{proposition}[Soundness]\label{soundnessCPBS}
For any two $\M$-diagrams $D_1$ and $D_2$, if $\textup{CPBS}\vdash D_1 = D_2$ then $\interpath{D_1}=\interpath{D_2}$.
\end{proposition}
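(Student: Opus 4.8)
The plan is to exploit the fact that $\textup{CPBS}\vdash\cdot=\cdot$ is \emph{by definition} the smallest congruence --- with respect to the three composition operations, and containing the coloured traced prop axioms --- that satisfies the equations of \cref{axiomsCPBS}. Consequently it suffices to show that the relation $\approx$ defined by $D_1\approx D_2$ iff $\interp{D_1}=\interp{D_2}$ is itself such a congruence; then minimality of $\textup{CPBS}\vdash\cdot=\cdot$ forces $\textup{CPBS}\vdash D_1=D_2 \Rightarrow D_1\approx D_2$, which is exactly the statement.

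First I would observe that $\approx$ is trivially an equivalence relation, being the kernel of the map $D\mapsto\interp{D}$. The crucial point is that $\approx$ is a \emph{congruence}, i.e. compatible with sequential composition $\circ$, parallel composition $\oplus$, and the trace $Tr_e(\cdot)$. This is immediate from the inductive definition of the semantics (\cref{defactionsem}): the clauses for $\interp{D_2\circ D_1}$, $\interp{D_1\oplus D_3}$ and $\interp{Tr_e(D_1)}$ each express the interpretation of a composite diagram purely in terms of the interpretations of its immediate subdiagrams. Hence replacing a subdiagram by a $\approx$-equivalent one leaves the interpretation unchanged, so $\approx$ is preserved under all three constructors and under arbitrary contexts.

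Next I would dispatch the axioms of the coloured traced prop, which are already known to be sound for $\interp{\cdot}$ (this is precisely the content of the earlier proposition stating that $\interp{\cdot}$ is well defined). It then remains to verify that each of the equations \eqref{idboxR}--\eqref{bsbbnn} of \cref{axiomsCPBS} is sound, i.e. that its two sides, which share a common type $a\to b$, have equal interpretation. For each equation this is a finite computation: the configuration set $[a]$ on which the two sides must agree is finite (with at most four elements for these equations), and one evaluates $\interp{\cdot}$ on each $(c,p)\in[a]$ on both sides using \cref{defactionsem}, checking that the resulting output configuration and accumulated monoid element coincide. The monoid-structure equations \eqref{idboxR}--\eqref{fusionUVR} follow from the unit and associativity laws of $\M$; the commutation equations \eqref{neguRB}--\eqref{ubscol} and the topological ones \eqref{bsxcol}--\eqref{xbscol} are direct; and the invertibility equations \eqref{negnegRBR}--\eqref{bsbsmergesplit} amount to checking that composing the two generators returns each input configuration unchanged with side effect $I$.

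The main obstacle will be the equations whose evaluation involves the trace, namely the dead-loop removal \eqref{bouclevideUR} and, above all, the decompositions \eqref{bsbigebre}--\eqref{bsbbnn} of the four-leg PBS into three-leg PBS. Here one must unfold the trace according to its three-case semantics (at most three passes through the traced subdiagram) and track simultaneously the polarisation, the position, and the accumulated element of $\M$ along each branch. The bookkeeping --- in particular verifying that a horizontally polarised photon is correctly transmitted and reflected across the chain of three-leg PBS so as to reproduce the behaviour of the single four-leg PBS, and that no spurious side effect is introduced --- is the delicate part. Once these configurations are checked case by case, soundness of every equation follows, and together with the congruence property and the soundness of the prop axioms this completes the proof.
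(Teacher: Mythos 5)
Your proposal is correct and takes essentially the same approach as the paper, whose entire proof is the one-line observation you expand on: semantic equality is a congruence (by the compositional definition of $\interp{\cdot}$ in \cref{defactionsem}), so it suffices to verify each equation of \cref{axiomsCPBS} on its finitely many input configurations. The only inaccuracy is your assessment of where the work lies: the decompositions \eqref{bsbigebre}--\eqref{bsbbnn} are built only from splits, merges and swaps (no trace to unfold), and \eqref{bouclevideUR} equates two diagrams of type $I\to I$ whose configuration set $[I]=\emptyset$ is empty, so those checks are in fact the most trivial ones rather than the delicate ones.
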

\begin{proof}
Since the semantic equality is a congruence, it suffices to check that for every equation of \cref{axiomsCPBS}, both sides have the same semantics, which is easy to do.
\end{proof}

We introduce normal forms, that
will be useful to prove that the equational theory is complete, and will also play a role in optimising the number of gates in a diagram in Section~\ref{resourceoptimisation}.

\begin{definition}\label{defNFcoloree}
A diagram is said to be in \emph{normal form} if it is of the form $M\circ P\circ F\circ G\circ S$, where:
\begin{itemize}
\item $S$ is of the form $b_1\oplus\cdots\oplus b_n$, where each $b_i$ is either $\tikzfig{cfilcourtR-s}$, $\tikzfig{cfilcourtB-s}$ or $\xstikzfig{beamsplitterNRwB}$
\item $G$ is of the form $g_1\oplus\cdots\oplus g_k$, where each $g_i$ is
either $\tikzfig{cfilcourtR-s}$, $\tikzfig{cfilcourtB-s}$, $\xntikzfig{cgateUiR}$ or $\xntikzfig{cgateUiB}$, with $U_i\neq I$
\item $F$ is of the form $n_1\oplus\cdots\oplus n_k$, where each $n_i$ is either $\tikzfig{cfilcourtR-s}$, $\tikzfig{cfilcourtB-s}$, $\xntikzfig{cnegRB}$ or $\xntikzfig{cnegBR}$
\item $P$ is a permutation of the wires, that is, a trace-free diagram in which all generators are identity wires or swaps
\item $M$ is of the form $w_1\oplus\cdots\oplus w_m$, where each $w_i$ is either $\tikzfig{cfilcourtR-s}$, $\tikzfig{cfilcourtB-s}$ or $\xstikzfig{beamsplitterRNBw}$\nolinebreak.
\end{itemize}
\end{definition}

For example, the diagram shown in \cref{fig:ex1}~(right)
is in normal form.

\begin{theorem}\label{existenceNF}
For any $\M$-diagram $D$, there exists an $\M$-diagram in normal form $N$ such that $\CPBS\vdash D=N$.
\end{theorem}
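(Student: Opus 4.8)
The plan is to proceed by structural induction on the term describing $D$, following the inductive generation of diagrams from the generators of \cref{def:diag} by sequential composition, parallel composition and trace, and to show at each step that the axioms of \cref{axiomsCPBS} (together with the coloured-prop structure) suffice to reach the shape $M\circ P\circ F\circ G\circ S$ of \cref{defNFcoloree}. For the base cases, the coloured gates and negations are already normal forms with trivial $S$, $M$ and empty middle layers, while identity wires and swaps are absorbed into the permutation $P$. A black gate $\sgtikzfig{gateU}$ is rewritten, via \eqref{ubscol}, as a split carrying a copy of $U$ on each coloured leg followed by a merge, and the black negation is handled analogously (a split, two colour-changing negations, a reordering, a merge). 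The $4$-leg and partial beam splitters are precisely the left-hand sides of \eqref{bsbigebre}--\eqref{bsbbnn}, whose right-hand sides are split/permute/merge patterns, hence normal forms with empty $G$ and $F$.

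For parallel composition the interchange law of the prop immediately gives, from normal forms $N_i = M_i\circ P_i\circ F_i\circ G_i\circ S_i$, the normal form $(M_1\oplus M_2)\circ(P_1\oplus P_2)\circ(F_1\oplus F_2)\circ(G_1\oplus G_2)\circ(S_1\oplus S_2)$. Sequential composition is the first substantial step, and rests on one structural observation: in a normal form every black output is produced by a merge of $M$ and every black input is consumed by a split of $S$. Thus at the interface of $N_2\circ N_1$ the layers $M_1$ and $S_2$ meet with their black wires in matching positions, and each merge-then-split collapses to a pair of coloured wires by \eqref{bsbsmergesplit} (coloured pass-through wires simply line up). After erasing $M_1$ and $S_2$ one is left with $M_2\circ P_2\circ F_2\circ G_2\circ P_1\circ F_1\circ G_1\circ S_1$, which is brought back to normal form by: merging $P_1$ and $P_2$ into a single permutation (commuting them through the diagonal layers by relabelling), pushing every gate toward $S$ past the negations with \eqref{neguRB} (a gate keeps its label and only changes colour), fusing consecutive gates with \eqref{fusionUVR} and deleting trivial ones with \eqref{idboxR}, and cancelling pairs of negations on a wire with \eqref{negnegRBR}--\eqref{negnegBRB}.

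The trace is the main obstacle, and is the only place where the fact that every wire is traversed at most twice (once per polarisation) is essential. Given $Tr_e(N)$ with $N = M\circ P\circ F\circ G\circ S$: if $e=\top$, the traced black output comes from a merge of $M$ and the traced black input feeds a split of $S$; sliding the split along the feedback wire (using the trace axiom) brings it next to the merge, where \eqref{bsbsmergesplit} turns the black feedback into a red feedback together with a blue feedback while deleting that merge and that split. A black trace therefore reduces to the composition of a red and a blue trace over a strictly smaller diagram, which guarantees termination. For a coloured trace, say $e=\vtype$, the feedback wire threads through $S$ and $M$ as a plain red wire, so the trace acts only on the coloured region $P\circ F\circ G$; tracing the permutation $P$ produces a permutation on fewer wires together with disconnected loops, each loop carrying some gates and an even number of negations (an odd number would be ill-typed). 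Each such loop collapses to a single coloured self-loop erased by \eqref{bouclevideUR} (after fusing its gates and cancelling its negations as above), and re-collecting the surviving gates, negations and reduced permutation with the reordering of the $\circ$-case yields the desired normal form.
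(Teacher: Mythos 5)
Your proposal is correct and follows essentially the same route as the paper's own proof: the same structural induction, the same base-case rewritings, the same collapse of the $M_1$/$S_2$ interface by \eqref{bsbsmergesplit} for sequential composition, and the same trace analysis (a disconnected loop erased by \eqref{bouclevideUR} when the traced wire of $P$ returns to its own position, yanking plus the $\circ$-case machinery otherwise, and reduction of the black trace to a red trace and a blue trace). Two details need tightening: black identity wires and black swaps cannot simply be ``absorbed into $P$'' (every wire between $S$ and $M$ is coloured), so they must be expanded using \eqref{bsbssplitmerge}; and, more importantly, the $Tr_\htype$ case --- which your own reduction of $Tr_\top$ relies on --- is not literally symmetric to the $Tr_\vtype$ case, because the axioms of \cref{axiomsCPBS} are colour-asymmetric (indeed minimal), so the blue counterparts of \eqref{bouclevideUR} and \eqref{neguRB} are not axioms and must first be derived, as the paper does with Equations \eqref{bouclevideUB} and \eqref{neguBR} in \cref{proofexistenceNF}.
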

\begin{proof}
The proof is given in \cref{preuveexistenceNF}.
\end{proof}

Note that the structure of the normal form as well as the proof of \cref{existenceNF} use in an essential way the removal of useless wires made possible by the use of colours, and in particular Equation \eqref{bsbsmergesplit}, which has no equivalent in the monochromatic \textup{PBS}-calculus of 
\cite{alex2020pbscalculus}.
An example of a diagram and its normal form are given in Figure \ref{fig:ex2}. 

\begin{figure}[!]
\centerline{$\ptikzfig{exampleNF}{0.8}{0.8}1$\qquad\qquad$\ptikzfig{example-NF-2}{0.8}{0.8}1$}
\caption{\label{fig:ex2} An example of a diagram (\emph{left}) and its equivalent diagram in normal form (\emph{right}).} %
\end{figure}

Now we use the normal form to prove the completeness of the \CPBS-calculus:

\begin{lemma}[Uniqueness of the normal form]\label{uniquenessNF}
For any two diagrams in normal form $N$ and $N'$, if $\interpath N=\interpath{N'}$ then $N=N'$.
\end{lemma}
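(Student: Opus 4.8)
The plan is to show that $\interp N$ determines each of the five layers of a normal form $N = M\circ P\circ F\circ G\circ S$ individually; uniqueness of $N$ then follows immediately. I would begin by recovering the types. The domain of $\interp N$ is $[a]$, which determines $a$ (position $i$ is $\vtype$, $\htype$ or $\top$ according to whether the domain contains $(\Vpol,i)$ only, $(\Hpol,i)$ only, or both). Every generator occurring in a normal form induces a bijection on configurations, so the position--polarisation part $(c,p)\mapsto(c^N_{c,p},p^N_{c,p})$ of $\interp N$ is a bijection $[a]\to[b]$; its image is a well-defined set of configurations that determines $b$. The layers $S$ and $M$ are then forced by the types alone: since $S$ offers no $\top$-wire option, every $\top$-input must be split by the unique splitter $\top\to\vtype\oplus\htype$ and every coloured input left untouched, so $S$ is the unique diagram of the prescribed form with source $a$; dually, every $\top$-output of $M$ must be produced by the merger $\vtype\oplus\htype\to\top$ and every coloured output by a bare wire, so $M$ is determined by $b$. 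In particular the common number $k$ of middle wires, and the colour $c_j$ carried by wire $j$ after $S$, are fixed.

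Next I would read off the semantics of the middle layer wire by wire. As $S,G,F,P,M$ are trace-free, each input configuration follows a single feedforward path. Let $\mathrm{in}\colon[a]\to\{1,\dots,k\}$ be the bijection induced by $S$ (sending $(c,p)$ to the wire it occupies after splitting, whose colour is precisely $c$, since splitters and coloured wires preserve polarisation), and let $\mathrm{out}\colon\{1,\dots,k\}\to[b]$ be the bijection induced by $M$. The structural point is that $G$ and $F$ act diagonally --- wire $j$ carries a gate $U_j\in\M$ and a negation bit $\nu_j\in\{0,1\}$, neither of which moves the photon --- while $P$ merely permutes positions by some permutation $\sigma$ and preserves polarisation. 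Hence, writing $j=\mathrm{in}(c,p)$, the semantics satisfies $U^N_{c,p}=U_j$, the pair $(c^N_{c,p},p^N_{c,p})$ equals $\mathrm{out}(\sigma(j))$, and $c^N_{c,p}=c$ exactly when $\nu_j=0$, because among all the generators appearing in a normal form only the negations in $F$ can change the polarisation.

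From this closed form the three remaining layers are recovered directly. For the unique configuration $(c,p)$ with $\mathrm{in}(c,p)=j$ we have $U_j=U^N_{c,p}$; since the normal form forbids gates labelled $I$, the presence or absence of a gate on wire $j$ is unambiguous, and its colour is fixed by $c_j$, so $G$ is determined. The negation bit is $\nu_j=1$ iff $c^N_{c,p}\neq c$, so $F$ is determined. Finally $\sigma(j)=\mathrm{out}^{-1}(c^N_{c,p},p^N_{c,p})$ recovers the underlying bijection of $P$; since a permutation diagram in a symmetric monoidal category is determined as a morphism by its underlying bijection, $P$ itself is determined. All five layers therefore coincide for $N$ and $N'$, whence $N=N'$.

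The main obstacle I anticipate is the position bookkeeping needed to make the two bijections $\mathrm{in}$ and $\mathrm{out}$ and the diagonal action of $G$ and $F$ precise --- in particular, verifying from \cref{defactionsem} that the polarisation carried by each middle wire always equals the input polarisation $c$, so that a wire's colour and its photon's polarisation never disagree, and that only the negation layer can flip it. A minor but necessary point is the standard fact that two permutation diagrams inducing the same bijection are equal as morphisms of the coloured prop, so that recovering $\sigma$ genuinely recovers the layer $P$.
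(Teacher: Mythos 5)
Your proof is correct and follows essentially the same route as the paper's: the types force $S$ and $M$, and then the three components of the semantics --- monoid element, polarisation, position --- separately determine $G$, $F$ and $P$, using at the end the same fact (which you rightly flag) that a permutation diagram of the coloured prop is determined by its underlying bijection. The only difference is cosmetic: the paper peels off $S$ and $M$ diagrammatically, by composing with their horizontal reflections and invoking Equation \eqref{bsbsmergesplit} together with soundness, whereas you achieve the same reduction purely semantically via the bijections $\mathrm{in}$ and $\mathrm{out}$ induced by $S$ and $M$.
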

\begin{proof}
The proof is given in \cref{preuveuniquenessNF}.
\end{proof}

\begin{theorem}[Completeness]\label{completenessCPBS}
Given any two $\M$-diagrams $D_1$ and $D_2$, if $\interpath{D_1}=\interpath{D_2}$ then $\CPBS\vdash D_1=D_2$.%
\end{theorem}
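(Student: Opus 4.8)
The plan is to run the standard completeness-via-normal-forms argument, which at this point is almost immediate thanks to the machinery already assembled. Suppose $\interp{D_1}=\interp{D_2}$. First I would apply \cref{existenceNF} to each diagram separately, obtaining diagrams in normal form $N_1$ and $N_2$ together with derivable equalities $\CPBS\vdash D_1=N_1$ and $\CPBS\vdash D_2=N_2$.

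Next I would transport the semantic hypothesis onto the normal forms using soundness. By \cref{soundnessCPBS}, the derivation $\CPBS\vdash D_1=N_1$ yields $\interp{D_1}=\interp{N_1}$, and symmetrically $\interp{D_2}=\interp{N_2}$. Chaining these with the hypothesis gives $\interp{N_1}=\interp{D_1}=\interp{D_2}=\interp{N_2}$, so the two normal forms have the same semantics.

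Then I would invoke uniqueness. Since $N_1$ and $N_2$ are both in normal form and $\interp{N_1}=\interp{N_2}$, \cref{uniquenessNF} forces the \emph{syntactic} identity $N_1=N_2$. Finally, because $\CPBS\vdash\cdot=\cdot$ is a congruence, hence in particular symmetric and transitive, I combine $\CPBS\vdash D_1=N_1$, the identity $N_1=N_2$, and $\CPBS\vdash D_2=N_2$ into $\CPBS\vdash D_1=D_2$, which is the desired conclusion.

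The point worth flagging is that there is essentially no obstacle remaining at this stage: all the genuine difficulty has been front-loaded into \cref{existenceNF} (every diagram rewrites to a normal form) and \cref{uniquenessNF} (semantically equal normal forms are syntactically identical), which together encode exactly the "reduce to a canonical representative and compare" strategy. The completeness statement is then obtained simply by composing these two facts through soundness, so the proof reduces to a short chain of equalities rather than a fresh argument. If one prefers, the existence and uniqueness parts can be bundled as in \cref{existenceuniciteNF}, and the completeness proof reads off directly from that corollary plus soundness.
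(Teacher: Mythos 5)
Your proposal is correct and follows exactly the same route as the paper's proof: normalise both diagrams via \cref{existenceNF}, transfer the semantic hypothesis through \cref{soundnessCPBS}, conclude syntactic equality of the normal forms by \cref{uniquenessNF}, and close the chain using the congruence properties of $\CPBS\vdash\cdot=\cdot$. The only cosmetic difference is that the paper invokes reflexivity and transitivity where you invoke symmetry and transitivity; both suffice to assemble the final derivation.
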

\begin{proof}
By \cref{existenceNF}, there exist $N_1,N_2$ in normal form such that $\CPBS\vdash D_1=N_1$ and $\CPBS\vdash D_2=N_2$. By \cref{soundnessCPBS}, $\interpath{N_1}=\interpath{D_1}=\interpath{D_2}=\interpath{N_2}$. Therefore, by \cref{uniquenessNF}, $N_1=N_2$. By transitivity, this proves that $\CPBS\vdash D_1=D_2$.
\end{proof}

Finally, each equation of Figure \ref{axiomsCPBS} is necessary for the completeness:

\begin{theorem}[Minimality]\label{minimalityCPBS}
None of the equations of Figure \ref{axiomsCPBS} is a consequence of the others.
\end{theorem}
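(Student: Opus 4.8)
The plan is to prove minimality by the standard technique for equational theories: for each of the seventeen equations, I exhibit an interpretation (an alternative semantics) that is preserved by every \emph{other} equation but is \emph{violated} by the targeted one. Since $\CPBS \vdash D_1 = D_2$ is generated as the smallest congruence containing the equations together with the coloured-traced-prop axioms, any such invariant shows that the targeted equation cannot be derived from the rest: if it were, the invariant would have to be preserved by it too, contradicting the exhibited counterexample. So the whole proof is a catalogue of seventeen invariants, one per equation, and the work is in designing them cleverly rather than in any single deep argument.

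First I would organise the equations by what structure they touch, because a single invariant often handles a whole group at once, and then I refine. Equations \eqref{idboxR} and \eqref{fusionUVR} encode the monoid structure of $\M$; to show \eqref{fusionUVR} is independent I would interpret gates into a monoid where concatenation is not respected (e.g. send every gate to a fixed counter that merely \emph{counts} gate occurrences modulo something, so that fusing two gates changes the count), while \eqref{idboxR} is isolated by an interpretation sensitive to whether a box labelled $I$ is present. For the commutation equations \eqref{neguRB}–\eqref{ubscol} I would use semantics that track the \emph{order} in which a negation/PBS and a gate are crossed along a path — an invariant that records the side-effect word together with positional data breaks exactly the equation that permutes two such generators. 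For the invertibility equations \eqref{negnegRBR}–\eqref{bsbsmergesplit} and the topological equations \eqref{bsxcol},\eqref{xbscol}, I would count occurrences of the relevant generator (or a parity thereof) in a way that only the targeted rule changes. The decomposition equations \eqref{bsbigebre}–\eqref{bsbbnn}, which rewrite a 4-leg PBS into 3-leg pieces, are isolated by an invariant that counts 4-leg versus 3-leg PBS (e.g. assign weights to generators and track a total weight that only the corresponding decomposition rule alters). Equation \eqref{bouclevideUR}, the removal of a closed component, is broken by any invariant that detects the presence of a disconnected loop, such as counting connected components or total trace usage.

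The technically delicate point, and what I expect to be the main obstacle, is ensuring that each designed invariant is genuinely a \emph{congruence} compatible with composition, tensor, and trace, and that it is preserved by \emph{all sixteen} of the non-targeted equations while failing on the one targeted — these two demands pull in opposite directions, so the invariants must be tuned carefully. Concretely I would package each invariant as a functor (or a lax/monoidal-style assignment) from $\diag{\M}$ into some explicitly described target structure — often a variant of the action semantics $\interp{\cdot}$ augmented with extra bookkeeping, or a homomorphism into an auxiliary monoid — so that congruence and soundness under the prop axioms are automatic by construction, leaving only a finite, mechanical check equation-by-equation. The trace cases require the most care, since the invariant must behave consistently under the finite-unfolding semantics of $Tr$ described after \cref{defactionsem}; here I would lean on the fact, noted in the excerpt, that every wire is traversed at most twice, which bounds the unfoldings and makes the invariant well defined.

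Since the detailed construction of all seventeen interpretations is long but individually routine, I would relegate the full case analysis to an appendix and in the main text present one or two representative cases in full — for instance the invariant separating \eqref{bsbsmergesplit} (the colour-specific merge/split with no monochromatic analogue, singled out as essential in the remark after \cref{existenceNF}) and the weight-counting invariant separating one of the 4-leg decomposition rules — to illustrate the method, stating that the remaining cases follow the same pattern with the interpretations tabulated in the appendix.

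\begin{proof}
The full case analysis, exhibiting for each equation an interpretation preserved by all the others but not by the equation in question, is given in the appendix.
\end{proof}
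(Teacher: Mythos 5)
Your overall strategy is exactly the paper's: for each equation, exhibit a property of diagrams that is preserved when any \emph{other} equation is applied inside a diagram but changed by the targeted one, so that the targeted equation cannot be derived from the rest. The difficulty is that this reduces the theorem entirely to the seventeen concrete verifications, and your proposal defers all of them to an appendix that is never written; the only invariants you describe concretely enough to check are precisely the ones that fail.

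Two concrete failures. First, for \eqref{fusionUVR} you propose counting gate occurrences modulo something; but any such count is also changed by \eqref{idboxR} (which deletes an $I$ gate) and by \eqref{bouclevideUR} (which deletes a gate on an isolated loop), so it is not preserved by the other equations. The deeper point your plan misses is that no invariant can be violated by \emph{every} instance of \eqref{fusionUVR} while being preserved by \eqref{idboxR}, because the instances of \eqref{fusionUVR} containing an $I$ gate are literally consequences of \eqref{idboxR}: minimality has to be argued per instance or per class of instances, not per schema. The paper accordingly restricts to the instances of \eqref{fusionUVR} with no $I$ gate and uses the invariant ``the maximum number of non-$I$ gates a particle entering from an input can traverse'', which ignores $I$ gates and unreachable loops. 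Second, for \eqref{bouclevideUR} all three of your candidates break: the number of connected components is changed by \eqref{bsbsmergesplit} (its left-hand side is connected, its right-hand side is two parallel wires); ``total trace usage'' is not even well defined up to the traced-prop axioms (yanking); and the graph-theoretic presence of a disconnected loop is also changed by \eqref{bsbsmergesplit}, since applying it under a trace detaches a wire and creates an isolated loop that was not there before. The paper's working invariant is semantic rather than graph-theoretic: the existence of a closed \emph{particle path} whose product of gate labels lies in a prescribed equivalence class of $\M$ under the relation generated by $WT\sim TW$ --- the conjugation quotient being forced because the product around a loop depends on the choice of starting point, and the partition into classes being forced because instances of \eqref{bouclevideUR} within one class are inter-derivable using \eqref{fusionUVR}. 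For the remaining equations your sketch is in the right spirit (the paper indeed uses simple presence/colour invariants such as ``all gates labelled $U$ are red'' for each instance of \eqref{neguRB}, ``the diagram contains a black negation'' for \eqref{nbscol}, ``all wires are red'' for \eqref{negnegRBR}, or containment of a specific 4-leg PBS for \eqref{bsbigebre}--\eqref{bsbbnn}), though even there care is needed: for \eqref{bsxcol} and \eqref{xbscol} counting a single generator fails because \eqref{nbscol} also exchanges the two 3-leg splitters, which is why the paper uses membership in a \emph{set} of generators. So while the method matches, the two genuinely delicate cases --- which are the reason the paper's proof has the structure it does --- are exactly where your proposed invariants would not go through.
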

\begin{proof}
The proof is given in Appendix \ref{preuveminimalityCPBS}.
\end{proof}

\section{Resource Optimisation}\label{resourceoptimisation}

We show in this section that the equational theory of  the CPBS-calculus can be used for resource optimisation.

\subsection{Minimising the Number of Oracle Queries}

We consider the problem of minimising the number of oracle queries: given a set $\GG$ of (distinct) oracles and a $\GG^*$-diagram $D$, the objective is to find a diagram $D'$ equivalent to  $D$  (i.e. $\interpath D = \interpath{D'}$) such that $D'$ uses a minimal number of queries to each oracle. Since there are several oracles, the definition of the optimal diagrams should be made  precise.

First, we define the number of queries to a given oracle:

\begin{definition}
Given a $\GG^*$-diagram $D$, for any $U\in \GG$, let $\noa_U(D)$ be the number of queries to $U$ in $D$, inductively defined as follows: $\noa_U(\sgtikzfig{cgatew}) = |w|_U$; $\noa_U(g)=0$ for all the other generators; $\noa_U(D_1\oplus D_2) = \noa_U(D_2\circ D_1) = \noa_U(D_1)+\noa_U(D_2)$; and $\noa_U(Tr_a(D)) = \noa_U(D)$, where $|w|_U$ is the number of occurrences of $U$ in the word $w\in \GG^*$. 
\end{definition}

We can now define a query-optimal diagram as follows: 

\begin{definition}
A $\GG^*$-diagram $D$ is query-optimal if $\forall D'\in \diag{\GG^*}$, $\forall U\in \GG$, $\interpath{D} = \interpath{D'}$ implies $\noa_U(D)\le \noa_U(D')$.
\end{definition}

Note that given a diagram, it  is not \emph{a priori} guaranteed that there exists an equivalent diagram which is query-optimal: for instance, it might be that all the diagrams which minimise the number of queries  to some oracle $U$ do not  minimise the number of queries  to another oracle $V$. We actually show (Proposition \ref{gateoptimisationworks}) that any diagram can be turned into a query-optimal one. To this end, we first need a lower bound on the number of queries to a given oracle: 

\begin{proposition}[Lower bound]\label{prop:lowerbound}
For any $\GG^*$-diagram $D:a\to b$ and any $U\in \GG$, $\noa_U(D)\ge \left\lceil\sum_{(c,p)\in [a]}\frac{|w_{c,p}^D|_U}2\right\rceil$ 
where $w_{c,p}^D\in \GG^*$ is such that $\interpath D(c,p)=((c',p'),w^D_{c,p})$.
\end{proposition}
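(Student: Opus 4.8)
The plan is to read the quantity $\sum_{(c,p)\in[a]}|w^D_{c,p}|_U$ as a count of how many times the gates of $D$ are traversed by the trajectories of the particle, and then to bound this count using the fact --- already recorded after \cref{defactionsem}, and established as for the \textup{PBS}-calculus --- that in any diagram each wire is used at most twice, each time with a distinct polarisation.

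Concretely, for each input configuration $(c,p)\in[a]$ the word $w^D_{c,p}$ is the ordered concatenation of the labels of all the gates met along the trajectory of $(c,p)$; hence $|w^D_{c,p}|_U$ is exactly the number of occurrences of $U$ contributed by those gates. Summing over all input configurations therefore rearranges into a sum over the gate generators of $D$: writing $w_g$ for the word labelling a gate $g$,
\[
\sum_{(c,p)\in[a]}|w^D_{c,p}|_U \;=\; \sum_{g}\,t_g\,|w_g|_U,
\]
where $t_g$ is the number of trajectories (counted with their loop iterations through traces) that traverse $g$. Since each gate sits on a single wire, the at-most-twice property gives $t_g\le 2$ for every $g$, so the right-hand side is at most $2\sum_g|w_g|_U = 2\,\noa_U(D)$. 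Dividing by $2$ and using that $\noa_U(D)$ is a nonnegative integer to pass to the ceiling yields the claimed bound.

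To make this counting rigorous one proves, by structural induction on $D$, the sharper inequality $\sum_{(c,p)\in[a]}|w^D_{c,p}|_U\le 2\,\noa_U(D)$. The generator and the $\oplus$ cases are immediate, and sequential composition $D=D_2\circ D_1$ follows from the concatenation rule $|w^D_{c,p}|_U = |w^{D_1}_{c,p}|_U + |w^{D_2}_{c^{D_1}_{c,p},p^{D_1}_{c,p}}|_U$ together with injectivity of the configuration map $(c,p)\mapsto(c^{D_1}_{c,p},p^{D_1}_{c,p})$, which lets the second summand be bounded by a sum over $[b]$ and hence by the induction hypothesis for $D_2$. The one delicate case, and the main obstacle, is the trace $D=Tr_e(D_1)$: here a single trajectory may pass through $D_1$ up to three times, so one must argue that the multiset of all pass-entry configurations --- over all inputs of $D$ and all their loop iterations --- is in fact a set, and a subset of $[a\oplus e]$. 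This again reduces to injectivity of the configuration map of $D_1$: the re-entry step is a partial injective self-map of $[a\oplus e]$, so its orbits are disjoint simple paths and distinct pass-entry configurations never coincide; the induction hypothesis for $D_1$ then bounds the trace sum by $2\,\noa_U(D_1)=2\,\noa_U(D)$. Injectivity of the configuration map is precisely the content of the at-most-twice property, and also follows from \cref{quantumsemiso}, since the configuration part of the semantics does not depend on the gate labels and an isometry cannot send two distinct basis configurations to the same one.
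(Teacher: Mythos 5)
Your proposal is correct and takes essentially the same route as the paper's (sketched) proof: both rest on the observation that each gate of $D$ is used at most twice according to the semantics, deduce $\sum_{(c,p)\in[a]}|w^D_{c,p}|_U\le 2\,\noa_U(D)$, and conclude via the integrality of $\noa_U(D)$ to pass to the ceiling. Your structural induction, with injectivity of the configuration map $(c,p)\mapsto(c^{D_1}_{c,p},p^{D_1}_{c,p})$ handling the composition and trace cases, is a rigorous elaboration of the counting argument that the paper leaves as a sketch, not a different method.
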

\begin{proof}Notice that each gate \sgtikzfig{cgatew} of the diagram $D$ is used at most twice according to the semantics,\footnote{This can be stated more formally by replacing the contents of the gates by distinct names in order to get a (coloured) bare diagram (see \cite{branciard2021coherent}), and then proved in a similar way as Proposition 3 of \cite{branciard2021coherent}.} in other words, there are either at most two pairs $(c,p)$, $(c',p')$ such that $w$ contributes once to $w^D_{c,p}$ and once to $w^D_{c',p'}$; or at most a single pair $(c,p)$ such that  $w$ contributes twice to $w^D_{c,p}$. As a consequence, $\sum_{(c,p)\in [a]} |w_{c,p}^D|_U \le 2\noa_U(D)$, which leads to the  lower bound.
\end{proof}

Note that Proposition \ref{prop:lowerbound} provides a lower bound on the minimal number of queries to $U$ one can reach in optimising a diagram since the right-hand side of the inequality only depends on the semantics of the diagram.

We are now ready to introduce an optimisation procedure that transforms any diagram into an equivalent query-optimal one:

\paragraph*{Query optimisation procedure of a $\GG^*$-diagram $D$:}
\begin{enumerate}
\item\label{mettreenNF} Transform $D$ into its normal form $D_{\mathit{NF}}$. A recursive procedure for doing this can easily be deduced from the proof of \cref{existenceNF} (given in \cref{preuveexistenceNF}).
\item Split all gates into elementary gates (that is, gates whose label is a single letter), using the  following variants of Equation \eqref{fusionUVR}, which are consequences of the equations of \cref{axiomsCPBS} (see \cref{preuveequationsgateoptproc}): $\forall U\in \GG$, $\forall w\in \GG^*$, $w\neq I$:\\[-0.5cm]\scalebox{0.9}{\begin{tabularx}{1.11\linewidth}{@{}XXX@{}}\begin{equation}\label{fusionUV-}\begin{array}{@{}r@{\ }c@{\ }l@{}}\tikzfig{gateVUrouge-t-}&\to &\tikzfig{gateUgateVrouge-}\qquad\end{array}\end{equation}&
\begin{equation}\label{fusionUVB}\begin{array}{@{}r@{\ }c@{\ }l@{}}\tikzfig{gateVUbleue-}&\to &\tikzfig{gateUgateVbleu-}\qquad\end{array}\end{equation}&
\begin{equation}\label{fusionUV}\begin{array}{@{}r@{\ }c@{\ }l@{}}\tikzfig{gateVUbis-}~&~\to &~\tikzfig{gateUgateVbis-}\qquad\end{array}\end{equation}\end{tabularx}}\vspace{-0.25cm}
\item As long as the diagram contains two non-black gates with the same label, merge them. To do so, deform the diagram to put one over the other, and apply one of the following equations, which are also consequences of the equations of \cref{axiomsCPBS}:\vspace{0.325cm}
\begingroup
\tikzset{tikzfig/.style={baseline=-0.25em,scale=0.3,every node/.style={scale=0.75}}}\\[-0.5cm]\begin{tabularx}{\linewidth}{XX}
\begin{equation}\label{mergegatesRB}\tikzfig{gateURsurgateUB-label}\ \to \ \tikzfig{RBbsNUbsRB}\end{equation}&
\begin{equation}\label{mergegatesBR}\tikzfig{gateUBsurgateUR-label}\ \to \ \tikzfig{BRbsNUbsBR}\end{equation}\\
\begin{equation}\label{mergegatesRR}\tikzfig{gateURsurgateUR-label}\ \to \ \tikzfig{RRnbbsNUbsnbRR}\end{equation}&
\begin{equation}\label{mergegatesBB}\tikzfig{gateUBsurgateUB-label}\ \to \ \tikzfig{BBnhbsNUbsnhBB}\end{equation}\end{tabularx}
\endgroup
\end{enumerate}

An example of query-optimised diagram is given in Figure \ref{fig:exopt}. 
The query-optimisation procedure transforms any diagram into an equivalent query-optimal one:

\begin{proposition}\label{gateoptimisationworks}
The diagram $D_0$ output by the query optimisation procedure is query-optimal: for any U and any $D'$ s.t. $\interpath{D'}=\interpath {D_0}$, %we have
one has $\noa_U(D_0)\le \noa_U(D')$.
\end{proposition}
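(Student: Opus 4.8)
The plan is to show that $D_0$ attains with equality the lower bound of \cref{prop:lowerbound}; since that bound depends only on the common semantics of all equivalent diagrams, equality immediately yields optimality. First I would note that the whole procedure is semantics-preserving: the normalisation step applies \cref{existenceNF}, while the splitting and merging steps only rewrite along \eqref{fusionUV-}--\eqref{fusionUV} and \eqref{mergegatesRB}--\eqref{mergegatesBB}, which are derivable in the calculus and hence sound by \cref{soundnessCPBS}. Thus $\interp{D_0}=\interp D$, so $w^{D_0}_{c,p}=w^{D}_{c,p}$ for every configuration; setting $N_U:=\sum_{(c,p)\in[a]}|w^{D}_{c,p}|_U$, \cref{prop:lowerbound} gives $\noa_U(D')\ge\lceil N_U/2\rceil$ for every $D'$ with $\interp{D'}=\interp D$. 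It remains to prove the matching equality $\noa_U(D_0)=\lceil N_U/2\rceil$.

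For this I would maintain, throughout the merging step, the invariant that each gate labelled $U$ is \emph{either} carried by a black wire and traversed exactly twice by the path semantics, \emph{or} carried by a red or blue wire and traversed exactly once. This holds just after splitting: a normal form is trace-free, so each wire lies on a unique input--output path and each of its elementary gates is traversed once; since moreover all gates of a normal form sit on red or blue wires, splitting produces exactly $N_U$ gates labelled $U$, all non-black and each traversed once. The invariant is preserved by every merge, because the right-hand sides of \eqref{mergegatesRB}--\eqref{mergegatesBB} replace two non-black $U$-gates (two traversals in all) by a single $U$-gate on a black wire that is traversed exactly twice, absorbing both traversals and leaving all other gates untouched.

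Writing $n_b,n_{nb}$ for the numbers of black and non-black $U$-gates, the invariant gives $2n_b+n_{nb}=\sum_{(c,p)}|w^{D_0}_{c,p}|_U=N_U$ at every stage, whereas each merge performs $n_{nb}\mapsto n_{nb}-2$ and $n_b\mapsto n_b+1$. Hence the merging step strictly decreases $n_{nb}$ and halts once $n_{nb}\le 1$; as $n_{nb}\equiv N_U\pmod 2$ this forces $n_{nb}=N_U\bmod 2$ and $n_b=\lfloor N_U/2\rfloor$, so $\noa_U(D_0)=n_b+n_{nb}=\lceil N_U/2\rceil$. Together with the lower bound, $\noa_U(D_0)=\lceil N_U/2\rceil\le\noa_U(D')$ for every equivalent $D'$.

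The step I expect to be the real obstacle is justifying that the merging never stalls while two non-black $U$-gates remain, i.e.\ that any two of them can always be deformed into one of the four patterns on the left of \eqref{mergegatesRB}--\eqref{mergegatesBB}. This is exactly where the traced coloured prop is used: a gate can be slid along its wire past negations and beam-splitters via \eqref{neguRB} and \eqref{ubscol}, wires can be reordered by swaps, and the trace allows two $U$-gates lying in sequence on a single path to be redrawn side by side. Once this manoeuvrability is established, a merge is available whenever $n_{nb}\ge2$, and the counting above closes the proof.
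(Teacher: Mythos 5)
Your proof is correct and follows essentially the same route as the paper's: both show that the output of the procedure attains the lower bound of \cref{prop:lowerbound}, via the two facts that after normalisation and splitting every $U$-gate is non-black and traversed exactly once (so there are $\sum_{(c,p)}|w^{D}_{c,p}|_U$ of them), and that merging reduces this count to $\left\lceil\sum_{(c,p)}|w^{D}_{c,p}|_U/2\right\rceil$. Your invariant-based bookkeeping, and your explicit observation that deformation in the traced coloured prop always lets two non-black gates be brought into a mergeable position, are simply more detailed versions of what the paper's sketch asserts without proof.
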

\begin{proof}
Notice that in $D_{\mathit{NF}}$, for each gate there is one and only one input state $(c,p)$ which goes to this gate. As a consequence, $\forall U, \noa_U(D_{\mathit{NF}}) = \sum_{(c,p)\in [a]}{|w_{c,p}^D|_U}$ (where $D_{\mathit{NF}}:a\to b$). Moreover, 
$\forall U, \noa_U(D_0) = \left\lceil \frac {\noa_U(D_{\mathit{NF}})}{2}\right\rceil$, thus $D_0$ meets the  lower bound of \cref{prop:lowerbound} and hence is query-optimal.
\end{proof}

Note that the query-optimisation procedure is efficient: one can naturally define the size $|D|$ of a diagram $D\in \diag{\GG^*}$ as follows: $|\sgtikzfig{cgatew}|  = |w|$; $|g|=1$ for all the other generators; $|D_1\oplus D_2| = |D_2\circ D_1| = |D_1|+|D_2|$; and $|Tr_a(D)| = |D|+1$. Step \ref{mettreenNF} of the procedure, which consists in putting the diagram in normal  form, can be done using a number of elementary equations of Figure \ref{axiomsCPBS} which  is quadratic in the size of the diagram, the other two steps being linear.  Notice that here we only count the number of basic equations. The procedure also requires some diagrammatic transformations (that is, deformations), which can be handled efficiently (more precisely, at most in quadratic time) 
using  appropriate data  structures.

\subsection{Optimising Both Queries and PBS}

We refine the resource optimisation of a diagram by considering not only the number of queries but also the number of instructions, and in particular the number of polarising beam splitters. Note that the number of beam splitters and the number of queries cannot be minimised independently, in the sense that there might not exist a diagram that  is both query-optimal and PBS-optimal (see such an example  in Figure \ref{fig:notoptimal}).  As the implementation of an oracle is \emph{a priori} more expensive than the implementation of  a single PBS, we optimise the number of queries and then the number of PBS  in this order, i.e. the  measure of complexity is the lexicographic order number of queries, number of polarising beam splitters.

\begin{figure}
\centerline{$\ptikzfig{gateUBsurgateUR-label-}{0.6}{0.75}1\qquad\qquad \ptikzfig{BRbsNUbsBR-}{0.6}{0.75}1$}\caption{\label{fig:notoptimal}Two equivalent diagrams: the diagram on the left  is optimal in terms of the number of polarising beam splitters, the diagram on the right is optimal in terms of queries. Note that there is no equivalent diagram with no polarising beam splitter and at most a single query.}
\end{figure}

\begin{definition} A diagram $D$ is query-PBS-optimal if $D$ is  query-optimal and for  any  query-optimal diagram $D'$ equivalent to $D$ (i.e. $\interpath D = \interpath {D'}$), $\nopbs(D)\le \nopbs(D')$, where $\nopbs(D)$ be the number of PBS of $D$.\end{definition}

We introduce an efficient heuristic, called \emph{PGT  procedure} that, when applied on a query-optimal diagram $D_0$, preserves the number of queries. The produced diagram, called in PGT form (see \cref{fig:PGT}), contains at most as many PBS as the original diagram, and moreover is query-PBS-optimal when there is at most one query to each oracle (see \cref{pasplusdePBSquavant,optimalitequanduneportedechaque}).

\begin{figure}[tb]
\vspace{-0.75cm}
\begin{tabularx}{\linewidth}{X@{\qquad\qquad\qquad}X}\begin{eqnABC}\label{snail}\ptikzfig{CPNFnoire-presquecentree}{0.5}{0.7}1\end{eqnABC} &
\begin{eqnABC}\label{stairs}\hspace{-1.5em}\ptikzfig{superpetpermsnegpotentiels-ms}{0.55714}{0.7}1\end{eqnABC}\end{tabularx}
$\ptikzfig{escalierbsnoir-abrege}{0.43}{0.688}1\quad\ \ \ptikzfig{escalierbsrouge-label-abrege}{0.43}{0.688}1\quad\ \ \ptikzfig{escalierbsbleu-label-abrege}{0.43}{0.688}1\quad\ \ \ptikzfig{escalierbsrougemerge-label-abrege}{0.43}{0.688}1\quad\ \ \ptikzfig{escalierbsrougemergeinverse-label-abrege}{0.43}{0.688}1$
\caption[Schematic description of a diagram in PGT form]{\label{fig:PGT}Schematic description of a diagram in PGT form (for Permutation, Gates and Traces). A diagram is in PGT form if it is of the form \eqref{snail}, with $P$ of the form \eqref{stairs}, and the $C_i$ of the forms depicted on the second line. $\ptikzfig{negpotentiel-m}{0.4}{0.8}1$ denotes either $\tikzfig{cfilcourt-s}$ or $\xntikzfig{cneg}$ with $a\in\{\vtype,\htype\}$, and $\sigma_1,\sigma_2$ are permutations of the wires.}
\end{figure}

More precisely, the PGT procedure consists in putting $D_0$ in the so-called PGT form, which we prove to contain few PBS. 
First, we consider query-free diagrams:
\begin{definition}\label{defNMF}
A diagram {D} is in \emph{stair form} if it is of the form
\begin{equation}\tag{\ref{stairs}}
\ptikzfig{superpetpermsnegpotentiels-m}{0.75}{0.75}1
\end{equation}
where $\sigma_1$ and $\sigma_2$ are permutations of the wires, $\ptikzfig{negpotentiel-m}{0.4}{0.8}1$ denotes either $\tikzfig{cfilcourt-s}$ or $\xntikzfig{cneg}$ with $a\in\{\vtype,\htype\}$, and $C_1,...,C_k$ are each of one of the following forms:
\[\ptikzfig{escalierbsnoir-abrege}{0.43}{0.688}1\quad\ \ \ptikzfig{escalierbsrouge-label-abrege}{0.43}{0.688}1\quad\ \ \ptikzfig{escalierbsbleu-label-abrege}{0.43}{0.688}1\quad\ \ \ptikzfig{escalierbsrougemerge-label-abrege}{0.43}{0.688}1\quad\ \ \ptikzfig{escalierbsrougemergeinverse-label-abrege}{0.43}{0.688}1\]
The diagrams of these forms will be called \emph{staircases}. The $C_i$ will be called the staircases of $D$.
\end{definition}

\begin{remark}Note that in the diagram \eqref{stairs}, all wires can be of arbitrary colours. We did not represent the labels in order to not overload the figures.
\end{remark}
Diagrams in stair form are optimal in terms of the number of polarising beam splitters:  

\begin{theorem}\label{stairsoptimal}
Any diagram $D:a\to b$ in stair form is PBS-optimal (that is, for any diagram $D':a\to b$, $\interpath{D}=\interpath{D'} \Rightarrow \nopbs(D)\leq \nopbs(D')$).
\end{theorem}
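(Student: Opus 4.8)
The plan is to exhibit a lower bound on $\nopbs$ that is a function of the action semantics alone, and then to verify that diagrams in stair form saturate it. Since $\interp{D}$ determines, for every input configuration $(c,p)\in[a]$, the output polarisation and position $p^D_{c,p}$, together with the colours of the input and output ports, I would first isolate the combinatorial content of $\interp{D}$ that any implementation must pay for in PBS. The guiding observation is that every generator other than a PBS is cheap in a precise sense: a swap permutes two ports identically for both polarisations, a negation flips the polarisation at a single fixed port, and a gate acts only on the data register. None of these can make the output position depend on the input polarisation, nor split a $\top$-beam (carrying both polarisations) into single-polarisation beams, nor merge two single-polarisation beams into a $\top$-beam. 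Consequently all of them can be absorbed into the three free layers $\sigma_1$, the negation-or-identity layer, and $\sigma_2$ of a stair form at no PBS cost, and within the remaining PBS network the polarisation is conserved (only negation flips it).

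Building on this, I would define a semantic quantity $L(\interp{D})$ as follows. After separating the negations, the PBS network conserves polarisation, so $\interp{D}$ induces two position routings $\pi_{\Vpol}$ and $\pi_{\Hpol}$ (well defined up to the free negation and permutation layers), together with the split/merge pattern dictated by the $\top$/single-colour type data. I would set $L(\interp{D})$ to be the sum of a crossing count measuring how $\pi_{\Vpol}$ and $\pi_{\Hpol}$ disagree — each disagreement forcing a polarisation-dependent reroute, i.e. a PBS — and the number of split and merge events forced by the port colours. By construction $L$ depends only on $\interp{D}$, hence takes the same value on all equivalent diagrams.

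Next I would prove $\nopbs(D')\ge L(\interp{D'})$ for every diagram $D'$, by structural induction. The key is to show that $L$ is subadditive for $\circ$, $\oplus$ and $Tr$, that it vanishes on every non-PBS generator, and that it is at most $1$ on each of the $3$-leg and $4$-leg PBS generators. The essential tool is the structural fact recalled after the semantics, that each wire is used at most twice and each time with a distinct polarisation; this is exactly what bounds the number of paths meeting at a given PBS, and hence caps its per-generator contribution to $L$, preventing a single $4$-leg PBS (whose two ports each carry both polarisations) from being charged more than once. Since $L$ is semantic, the lower bound then follows uniformly.

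Finally, for $D$ in stair form I would check saturation directly: $\sigma_1$, $\sigma_2$ and the negation-or-identity layer contribute $0$ to $\nopbs$ and leave $L$ unchanged, and the staircases act on pairwise-disjoint sets of wires, so $\nopbs(D)=\sum_i\nopbs(C_i)$. It then remains to compute, for each of the five staircase shapes, that $\nopbs(C_i)$ equals the contribution of $C_i$ to $L$: a black, red or blue staircase of a given height realises exactly that many polarisation-dependent crossings, and a merge or inverse-merge staircase realises exactly that many split/merge events, with no cancellation between distinct staircases. Summing gives $\nopbs(D)=L(\interp{D})$, whence $\nopbs(D)=L(\interp{D})=L(\interp{D'})\le\nopbs(D')$ for any $D'$ with $\interp{D}=\interp{D'}$, which is the claim. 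The main obstacle is the precise definition of $L$ together with the per-generator bound in the induction: the bookkeeping that guarantees a $4$-leg PBS is counted once rather than twice — which is precisely where the ``each wire used at most twice'' property is indispensable — whereas the achievability computation on the five staircase shapes is routine by comparison.
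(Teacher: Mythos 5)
Your overall architecture is the same as the paper's: define a quantity computable from $\interp{D}$ alone that lower-bounds $\nopbs$, prove the bound by induction over the construction of the diagram, and check that stair forms saturate it. But the entire mathematical content of the theorem lives in the precise definition of that invariant, and this is exactly where your proposal stops — and the hints you do give point in a wrong direction. First, a pointwise ``crossing count'' of positions where the two routings $\pi_{\Vpol}$ and $\pi_{\Hpol}$ disagree is not a valid lower bound: a black staircase on $m$ wires realises $\pi_{\Vpol}=\mathrm{id}$ and $\pi_{\Hpol}$ an $m$-cycle using only $m-1$ PBS, yet the two routings disagree at all $m$ positions. The correct replacement, which is what the paper uses, is not a disagreement count but a connectivity count: the number of ports minus the number $k_Q$ of classes of the finest partition of inputs/outputs jointly compatible with both routings (equivalently, connected components of the bipartite graph whose edges are all pairs $(p,p^{Q}_{c,p})$). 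For $(\mathrm{id},m\text{-cycle})$ this gives $m-1$, as needed, while for a plain swap (where both routings are the same transposition) it correctly gives $0$.

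Second, adding ``the number of split and merge events'' on top of the routing term double-counts. A single split PBS $\top\to\vtype\oplus\htype$ already has routing term $1$ (one input port, two output ports, one component), so charging its split event separately yields a ``lower bound'' of $2$ for a one-PBS diagram; symmetrically for a single merge PBS if the routing term is taken on the input side. The paper's resolution is an asymmetric double bookkeeping: it partitions the PBS generators into two disjoint families — the merge-type $3$-leg PBS on one hand, everything else (all $4$-leg PBS and the split-type $3$-leg PBS) on the other — and proves two separate inequalities, $p_R(Q)\geq|e|-k_Q$ and $p'_L(Q)\geq s_L(Q)$, where $s_L$ counts only the components forced by the colours to contain a merge; summing the two disjoint counts gives $\nopbs(Q)\geq(|e|-k_Q)+s_L(Q)$, and stair forms achieve equality in both inequalities simultaneously. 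Without this (or an equivalent) careful split, the invariant you describe either fails to be a lower bound or fails to be saturated by stair forms, so the induction you outline cannot be completed as stated. (A minor further point: the ``each wire is used at most twice'' property you lean on is not what drives the inductive step here — the paper's induction is a case analysis of how appending a single generator or applying a trace affects the partition $\{a^Q_i\}$ — that property is the key tool for the \emph{query} lower bound, not the PBS one.)
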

\begin{proof}
The proof is given in Appendix \ref{preuvestairsoptimal}.
\end{proof}

We extend the stair form to diagrams with queries as follows, leading to the  PGT form (for Permutation/Gates/Traces):

\begin{definition}
A $\GG^*$-diagram is in \emph{PGT form} (for Permutation, Gates and Traces) if it is of the form
\begin{equation}\tag{\ref{snail}}
\mtikzfig{CPNFnoire}\end{equation}
where $P$ is in stair form {and $U_1, ...., U_\ell\in\GG$}.

\end{definition}
\begin{remark}Like in the diagram  \eqref{stairs}, all wires of  \eqref{snail} can be of arbitrary colours. 
\end{remark}

Contrary to the stair form, the PGT form is  not  optimal (see as an example Figure \ref{fig:notoptimal-PBS}). Intuitively, if there are several queries to an oracle $U$, then decomposing the corresponding gates into blue and red  gates and then recomposing them in a different way may lead to a diagram with a  smaller number of PBS. However, we will prove that applying the PGT procedure after the query optimisation procedure gives us a query-PBS-optimal diagram when there is at most one query to each oracle (see \cref{optimalitequanduneportedechaque}).

\begin{figure}
~\\\centerline{$\ntikzfig{PGTUUnonoptimal}$\qquad\qquad\qquad\qquad\qquad$\ntikzfig{PGTUUoptimal}$}
\caption{\label{fig:notoptimal-PBS} [\emph{Left}] An example of diagram in PGT form which is optimal in the number of  queries but not in the number of polarising beam splitters. Indeed it is equivalent to  the diagram on the right which is query-optimal and PBS-free.}\end{figure}

\bigskip

 The procedure relies on equations of Figure \ref{axiomsCPBS}, together with easy to derive variants of these equations. The derivations of the additional equations are given in \cref{derivationsforPGTproc}. The procedure, with all steps detailed, more pictures and explicit statement of the variants of the equations, is given in Appendix \ref{PBSoptprocillustree}.

\paragraph*{PGT  procedure:}
Given a query-optimal diagram $D_0$: 
\begin{enumerate}
\setcounter{enumi}{-1}
\item During all the procedure, every time there are two consecutive negations, we remove them using Equation \eqref{negnegRBR}, \eqref{negnegBRB} or their all-black version.
\item\label{putinsnailformabbr} Deform the query-optimal diagram $D_0$ to put it in the form \eqref{snail} with $P$ gate-free. The goal of the following steps is to put $P$ in stair form.
\item Split all PBS of the form $\stikzfig{abbeamsplitter}$ into combinations of $\stikzfig{beamsplitterNRwB}$, $\stikzfig{beamsplitterwBNR}$, $\stikzfig{beamsplitterRNBw}$ and $\stikzfig{beamsplitterBwRN}$, using Equations \eqref{bsbigebre} to \eqref{bsbbnn}.
\item As long as there are two PBS connected by a black wire, with possibly a black negation on this wire, push the possibly remaining negation out using Equation \eqref{nbscol}, and cancel the PBS together using Equation \eqref{bsbsmergesplit} and its variants. For example:
\[\tikzset{tikzfig/.style={baseline=-0.25em,scale=0.3,every node/.style={scale=0.6857}}}\tikzfig{RBbsNbsBRcompresse}\to\tikzfig{swapRB-label}\quad\ \ 
\tikzset{tikzfig/.style={baseline=-0.25em,scale=0.3,every node/.style={scale=0.6857}}}
\quad\tikzfig{tracenbsNRwBgapbsBwRN}\to\tikzfig{traceBdanstraceRgapnegs-label}\]
When there are not two such PBS anymore, all black wires are connected to at least one side of $P$ (possibly through negations), and the PBS are connected together with red and blue wires with possibly negations on them.
\item Remove all isolated loops. Note that since $D_0$ is query-optimal, there cannot be loops containing gates at this point.
\item Deform $P$ to put it in the form\eqref{stairs} with the $C_i$ of the form \smallskip\ptikzfig{escalierbsnegpotentielshaut}{0.6}{0.6857}1 and $\sigma_1$ and $\sigma_2$ being wire permutations, where $\ptikzfig{negpotentiel-m}{0.4}{0.8}1$ is either $\tikzfig{cfilcourt-s}$ or $\xntikzfig{cneg}$ with $a\in\{\vtype,\htype\}$, $\xstikzfig{bs3pattessymsplit}$ is either $\xstikzfig{beamsplitterNRwB}$ or $\xstikzfig{beamsplitterwBNR}$ and $\xstikzfig{bs3pattessymmerge}$ is either $\xstikzfig{beamsplitterRNBw}$ or $\xstikzfig{beamsplitterBwRN}$.
\item\label{pushnegations} Remove the negations in the middle of the $C_i$ by pushing them to the bottom by means of variants of Equation \eqref{nbscol}.
\item Transform each $C_i$, which is now, up to deformation, a ladder of PBS without negations, into one of the five kinds of stairs depicted in Definition \ref{defNMF}, depending on its type. To do so, deform it and apply Equations \eqref{bsxcol} and \eqref{xbscol} appropriately, and repeatedly apply the appropriate equation among \eqref{bsnrrn}, \eqref{bsrnnr}, \eqref{bsnnbb}, and a variant of \eqref{bsbigebre}. This gives us $D_1$.
\end{enumerate}

An example of diagram  produced by the PGT  procedure is given in \cref{fig:exopt}. 
\begin{figure}[]
\centerline{$\ptikzfig{ex-query-opt}{0.8}{0.8}{0.8}\qquad\qquad\qquad\ptikzfig{ex-query-pbs-opt}{0.8}{0.8}{0.8}$}
\caption{\label{fig:exopt} The diagram on the left is the obtained by applying the query-optimisation procedure on the example of Figure \ref{fig:ex2}. The diagram on the right is (up to deformation) obtained by applying the PGT procedure to the diagram on the left. Note that this diagram is both query- and PBS-optimal.} 
\end{figure}

Since the PGT procedure consists in putting a subdiagram of $D_0$ in stair form (except Step \ref{putinsnailformabbr} which is just deformation and does not change the number of PBS), Theorem \ref{stairsoptimal} implies in particular that this procedure does not increase the number of PBS in $D_0$:

\begin{proposition}\label{pasplusdePBSquavant}
The diagram $D_1$ output by the PGT procedure contains at most as many PBS as  the initial diagram $D_0$.\end{proposition}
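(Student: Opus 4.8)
The plan is to track how the number of PBS evolves through the PGT procedure and to reduce the entire claim to a single application of \cref{stairsoptimal}. The starting observation is that every transformation used in the procedure is either a pure deformation (an instance of the axioms of the traced coloured prop) or an application of one of the equations of \cref{axiomsCPBS} together with their derived variants; by \cref{soundnessCPBS} all of these preserve the action semantics. Moreover, Step~0 (removing pairs of consecutive negations) and Step~\ref{putinsnailform} (deforming $D_0$ into the snail form \eqref{snail}) neither create nor destroy any PBS: the former only touches negations, and the latter is a deformation. Hence $\nopbs$ is left unchanged by these two preliminary steps.

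The key reduction is to isolate the subdiagram on which the real work is done. After Step~\ref{putinsnailform}, $D_0$ has been deformed into the snail form \eqref{snail} with a gate-free inner diagram $P$. Since the surrounding structure consists only of gates $U_1,\dots,U_\ell$, wire permutations and traces, none of which is a PBS, every PBS of the deformed diagram lies inside $P$. Write $P_0$ for this inner diagram at the end of Step~\ref{putinsnailform} and $P_1$ for the stair-form diagram produced at the end of Step~7, so that $\nopbs(D_0)=\nopbs(P_0)$ and $\nopbs(D_1)=\nopbs(P_1)$. All of Steps~2--7 act only within this inner diagram and, being applications of sound equations, preserve its semantics; therefore $P_0$ and $P_1$ have the same type $a\to b$ and $\interp{P_0}=\interp{P_1}$.

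It then suffices to compare $\nopbs(P_0)$ with $\nopbs(P_1)$. By construction $P_1$ is in stair form, so \cref{stairsoptimal} applies and tells us that $P_1$ is PBS-optimal among all diagrams of the same type with the same semantics. Taking $P_0$ as the competitor, which has the same type $a\to b$ and satisfies $\interp{P_0}=\interp{P_1}$, we obtain $\nopbs(P_1)\le\nopbs(P_0)$, and chaining this with the equalities above yields $\nopbs(D_1)\le\nopbs(D_0)$, as desired. The only point that requires care — and the one I would spell out explicitly — is the bookkeeping that guarantees the invariant that all PBS remain confined to the inner diagram $P$ throughout Steps~2--7, so that these steps never push a PBS across the boundary into the gate, permutation or trace part of the snail form; once that invariant is established, the inequality follows immediately from \cref{stairsoptimal}.
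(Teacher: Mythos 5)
Your proof is correct and follows exactly the paper's own argument: the PGT procedure only rewrites the gate-free inner subdiagram $P$ into stair form (Step~\ref{putinsnailform} being mere deformation), so \cref{stairsoptimal} applied to the stair-form result, with the original inner diagram as competitor, gives the inequality. The confinement invariant you flag is precisely the implicit bookkeeping the paper also relies on, so nothing essential is missing.
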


This also implies that given any diagram $D$, there exists an equivalent query-PBS-optimal diagram in PGT form. Indeed, by \cref{gateoptimisationworks}, there exist query-optimal diagrams equivalent to $D$, and among these diagrams, some of them have minimal number of PBS and are therefore query-PBS-optimal. Finally, applying the PGT procedure to one of these diagrams gives us an equivalent diagram in PGT form, which, since the PGT procedure does not change the gates or increase the number of PBS, is still query-PBS-optimal.

Applying the PGT procedure after the query optimisation procedure produces an interesting heuristic: the output diagram is necessarily query-optimal and, although it is not necessarily query-PBS-optimal in general, it is whenever it does not contain two queries to the same oracle:

\begin{theorem}\label{optimalitequanduneportedechaque}
Given a diagram $D_1$ obtained by applying first the query optimisation procedure then the PGT procedure to a diagram $D$, if $D_1$ does not contain two queries to the same oracle (i.e. $\forall U\in \GG, \noa_U(D_1)\le 1$), then it is query-PBS-optimal.
\end{theorem}
\begin{proof}
The proof is given in Appendix \ref{preuveoptimalitequanduneportedechaque}.
\end{proof} 

Finally, note that, like the query optimisation procedure, the PGT procedure is efficient: it can be done using a number of elementary graphical transformations (those of  Figure \ref{axiomsCPBS}) which is 
linear in the  size of  the diagram. It  also requires  some diagrammatic transformations, which can  be handled  using  appropriate data  structures, leading to a quadratic algorithm.

\subsection{Hardness}

We show in this section that the query-PBS optimisation problem is actually NP-hard.

\begin{theorem}\label{gatePBSNPhard}
The problem of, given an abstract diagram, finding an equivalent query-PBS-optimal diagram, is NP-hard.
\end{theorem}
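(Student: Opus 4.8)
The plan is to reduce from the \emph{maximum Eulerian cycle decomposition problem}, shown to be NP-hard in \cite{caprara1999eulerian}: given a connected multigraph $G$ in which every vertex has even degree, determine the maximum number of edge-disjoint cycles into which $E(G)$ can be partitioned. Given such an instance $G$, I would construct in polynomial time an abstract diagram $D_G\in\diag{\GG^*}$ whose query-PBS-optimal form encodes an Eulerian decomposition of $G$, arranged so that the number of PBS is a strictly decreasing affine function of the number of cycles used. Since any procedure producing a query-PBS-optimal diagram lets one read off $\nopbs$ of the optimum, and hence (via this affine relation) the maximum number of cycles, the decision version of the Eulerian problem is answered, establishing NP-hardness.

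For the construction, I would introduce one oracle $U_v\in\GG$ per vertex $v$ of $G$, and build $D_G$ so that its action semantics forces, for a fixed family of input configurations (one per edge of $G$, each entering with a prescribed polarisation), the application on the data register of the two-letter word $U_vU_w$ whenever the corresponding edge joins $v$ and $w$. By the lower bound of \cref{prop:lowerbound} and \cref{gateoptimisationworks}, the number of queries is then fixed: in any query-optimal diagram equivalent to $D_G$, each occurrence of $U_v$ must be shared between exactly two branches, so the oracle $U_v$ appears as $\deg(v)/2$ gates, each traversed twice. The only remaining freedom is \emph{how} the $\deg(v)$ branch-ends incident to the gates labelled $U_v$ are paired at each vertex $v$; such a simultaneous pairing at all vertices is exactly a way of threading the edges of $G$ into edge-disjoint closed walks, i.e. a cycle decomposition. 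Each closed walk can be realised by a traced sub-diagram, and the routing needed to merge and re-split branches at the shared gates determines the PBS count: closing a branch into a loop (a cycle) saves beam splitters, so $\nopbs$ decreases by a fixed amount for each additional cycle.

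Concretely, I would prove the two inequalities separately. For soundness of the reduction, given a decomposition of $E(G)$ into $k$ cycles, I would exhibit an explicit equivalent query-optimal diagram realising these loops, using the equational theory and the PGT procedure, whose PBS count equals $C-k$ for a constant $C$ computable from $G$ in polynomial time. For completeness, I would show that any query-optimal diagram equivalent to $D_G$, put into PGT form, induces an edge-disjoint cycle decomposition of $G$ with at least $C-\nopbs(D)$ cycles: the structural analysis of the stair/PGT form, together with the PBS-optimality result of \cref{stairsoptimal}, lets one read the staircases and traces of the diagram as the gluings that pair branch-ends at each gate, and the number of PBS bounds how many of these gluings fail to close into loops.

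The main obstacle I expect is this completeness (lower-bound) direction: proving that \emph{no} equivalent query-optimal diagram can use fewer PBS than the best cycle decomposition permits. This requires a careful structural argument — leveraging \cref{existenceNF}, the stair form, and \cref{stairsoptimal} — to show that the beam splitters of an arbitrary equivalent diagram must implement a valid pairing of the branch-ends at each vertex, so that a diagram with fewer PBS would yield a decomposition of $G$ into strictly more edge-disjoint cycles, contradicting the maximality assumed for $G$. Establishing the exact affine correspondence between saved beam splitters and closed walks, uniformly over all equivalent diagrams, is where the technical weight of the proof lies.
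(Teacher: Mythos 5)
Your high-level plan is indeed the paper's: reduce from MAX-ECD, introduce one oracle $U_v$ per vertex, use \cref{prop:lowerbound} to force $\deg(v)/2$ gates per vertex, each traversed exactly twice, and read the resulting pairing of edge-incidences at each vertex as a decomposition into closed walks whose number is affinely related to $\nopbs$. The gap is in your encoding of the edges, and it is fatal as stated. You attach to each edge $\{v,w\}$ a \emph{single} input configuration, with a prescribed polarisation, mapped to the two-letter word $U_vU_w$. Now follow a closed walk $e_1,\ldots,e_m$ of the decomposition induced by the pairings: consecutive configurations share a gate, and since any wire of a diagram is used at most twice, once with each polarisation, two configurations meeting at a gate must have \emph{opposite} polarisations there. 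In a negation-free diagram the polarisation of a configuration is constant along its entire path (only negations change it), so the prescribed polarisations would have to properly $2$-colour every closed walk --- impossible when $m$ is odd. Odd cycles are not a corner case: the NP-hardness of MAX-ECD rests on edge-partition into \emph{triangles} \cite{holyer1981edgepartition}. So your claimed correspondence ``every additional cycle saves a fixed number of PBS'' fails for exactly the instances that matter: odd cycles can only be realised by inserting negations, and although negations are free in the query-PBS measure, their arbitrary placement in an arbitrary equivalent diagram is precisely what your completeness direction would then have to control; your sketch is silent there, and that is where the entire technical weight of the proof sits.

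The paper's construction is engineered to dodge this obstruction. An edge is encoded as a \emph{wire}, not as a single configuration: $(\Vpol,p)$ reads the one-letter word at the tail of edge $p$, while $(\Hpol,p)$ reads the one-letter word at its head. Every gate is then automatically shared by one $\Vpol$-reading and one $\Hpol$-reading, so closed walks of any parity are realisable, and reversing an edge just exchanges the roles of the two configurations of its wire, i.e.\ conjugates by negations on an external wire; the paper eliminates these by symmetrising the optimal PGT-form diagram (replacing $P_2$ by $P_1^{-1}$) and erasing the boundary negations, recovering an orientation $\tilde G$ together with its cycle decomposition. Two further points your version leaves open: (i) with one prescribed configuration per edge, the remaining ``spectator'' configurations of the input type must be assigned (empty) words and routed around all gates, and they too enter the sum in \cref{prop:lowerbound}; (ii) with two-letter words each path crosses two gates, so an optimal diagram does not have the simple route--gates--route shape, and \cref{stairsoptimal} cannot be applied to a single gate-free block as you propose --- the one-letter encoding is what makes the paper's decomposition of the optimum into $P_1$, gates, $P_2$ (and hence the whole PBS-counting argument) possible.
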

The proof, given in \cref{preuvegatePBSNPhard}, is based on a reduction from the maximum Eulerian cycle decomposition problem (MAX-ECD) which  is known to be NP-hard \cite{caprara1999eulerian}. The MAX-ECD  problem consists, given a graph, in finding a partition of its set of edges into the maximum number of cycles.  Intuitively, the reduction goes as follows: given an Eulerian  graph $G=(V=\{v_0,\ldots v_{n-1}\},E)$, let $\sigma$ be a permutation of the vertices of the graph s.t. $\forall i, (v_i,\sigma(v_i))\in E$ (such a $\sigma$ exists since $G$ is Eulerian), we construct a $V^*$-diagram $D$ such that the number of occurrences of each $v_i$ in $D$ is half its degree in $G$; and such that $\forall i$, $\interpath{D}(\Vpol,i)=((\Vpol,i), v_i)$ and $\interpath{D}(\Hpol,i)=((\Hpol,i), \sigma(v_i))$. Roughly speaking, we show that the edge-partitions of $G$ into cycles correspond to the possible implementations of $D$, and that an implementation with a minimal number of PBS leads to a partition with a maximal number of cycles.

In the following, we explore a few variants of the problem, which remain NP-hard.

First, query-PBS optimisation is still hard when restricted to negation-free diagrams:
\begin{corollary}\label{gatePBSnegfreeNPhard}
The problem of, given a negation-free abstract diagram, finding an equivalent diagram which is query-PBS-optimal among negation-free diagrams, is NP-hard.
\end{corollary}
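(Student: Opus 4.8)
The plan is to reuse verbatim the reduction from the maximum Eulerian cycle decomposition problem (MAX-ECD) that underlies \cref{gatePBSNPhard}, and to show that it already lives entirely in the negation-free fragment, so that the very same reduction certifies hardness of the restricted problem. First I would recall that, given an Eulerian graph $G=(V,E)$ and a permutation $\sigma$ of $V$ with $(v_i,\sigma(v_i))\in E$, the reduction builds a $V^*$-diagram $D$ with $\interp{D}(\Vpol,i)=((\Vpol,i),v_i)$ and $\interp{D}(\Hpol,i)=((\Hpol,i),\sigma(v_i))$. The crucial observation is that this semantics is \emph{polarisation-preserving}: for every input configuration $(c,p)\in[a]$ one has $c^D_{c,p}=c$ and $p^D_{c,p}=p$. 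Since, among all the generators of \cref{def:diag}, only the three negations alter the polarisation (the PBS and the gates leave $c$ unchanged, as can be read off \cref{defactionsem}), the concrete diagram $D$ output by the reduction can be, and is, taken negation-free; this makes it a legitimate instance of the restricted problem.

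Next I would show that, for such a $D$, restricting to negation-free diagrams does not change the optimal number of PBS. One inequality is immediate: negation-free diagrams form a subset of all diagrams, so the minimal $\nopbs$ over negation-free implementations of $\interp{D}$ is at least the minimal $\nopbs$ over all implementations. For the converse I would take an arbitrary query-PBS-optimal diagram $D'$ with $\interp{D'}=\interp{D}$ and turn it into a negation-free diagram of the same size. Because every input--output path of $D'$ preserves polarisation, each path carries an \emph{even} number of negations; I would then commute all negations towards the outputs, using Equation \eqref{neguRB} (and its black and blue variants) to slide a negation past a gate without affecting it, and Equation \eqref{nbscol} (and its variants) to slide a negation past a PBS. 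The latter replaces one negation before a PBS by two negations after it, so the number of PBS is unchanged by this process, and likewise the gates, hence the queries, are untouched.

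Once every negation sits on a boundary wire, the even-parity property guarantees that the negations stacked on each output wire come in an even number: on a $\vtype$- or $\htype$-typed output the alternative would flip the wire's colour, and on a $\top$-typed output it would flip the polarisation of both components exiting there, in either case contradicting that $\interp{D'}$ preserves polarisation. I would therefore cancel them pairwise with Equations \eqref{negnegRBR}, \eqref{negnegBRB} and their black version, obtaining a negation-free diagram $D''$ with $\interp{D''}=\interp{D}$, $\noa_U(D'')=\noa_U(D')$ for all $U$, and $\nopbs(D'')=\nopbs(D')$. Consequently the negation-free query-PBS optimum of $D$ equals its unrestricted query-PBS optimum, which by \cref{gatePBSNPhard} encodes the maximal number of cycles in an edge-decomposition of $G$. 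Any algorithm solving the negation-free query-PBS optimisation problem on the instance $D$ thus solves MAX-ECD, which is NP-hard \cite{caprara1999eulerian}, proving the corollary.

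I expect the main obstacle to be the middle step: rigorously justifying that negations can be driven to the boundary and annihilated \emph{without ever creating or destroying a PBS}, so that the restricted optimum cannot be strictly larger than the unrestricted one. The subtlety is that an a priori clever implementation might seem to trade a cancelling pair of negations for a saved beam splitter; the parity argument together with the fact that \eqref{nbscol} is PBS-count-preserving is exactly what rules this out, and it is the point I would write out most carefully.
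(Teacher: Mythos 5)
Your proof breaks at its central lemma: for polarisation-preserving diagrams, the negation-free query-PBS optimum does \emph{not} coincide with the unrestricted one, and the negation-pushing argument you outline cannot be repaired. The cleanest counterexample is the right-hand side of Equation \eqref{mergegatesRR}: a diagram of type $\vtype\oplus\vtype\to\vtype\oplus\vtype$ with a single black gate $U$, two PBS and two negations, whose semantics $(\Vpol,p)\mapsto((\Vpol,p),U)$ for $p\in\{0,1\}$ is polarisation-preserving. If your procedure were sound, it would produce a negation-free equivalent with one query to $U$. But in a negation-free diagram the polarisation of a particle never changes, and two distinct input configurations can never traverse the same wire with the same polarisation (otherwise, by determinism of the path semantics, they would have the same output, contradicting the bijectivity of $(c,p)\mapsto(c^D_{c,p},p^D_{c,p})$ invoked in the proof of \cref{quantumsemiso}); hence the two $\Vpol$-particles cannot share a single $U$-gate, and every negation-free equivalent needs two queries. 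Mechanically, what fails in your pushing step is that Equation \eqref{nbscol} and its variants (e.g.\ Equation \eqref{bsRNBwncomm}) trade a negation on one leg of a PBS for negations on the \emph{two other} legs, which lie on other particles' paths: the per-path parity you identify is preserved, but the total number of negations never reaches zero --- in the example above, pushing merely moves the negation pair from one wire to the other, indefinitely.

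This falsity propagates to your reduction. For the instances $C_{w,\sigma}$, the unrestricted optimum encodes MAX-ECD of the undirected graph $G$ (negations act as edge reversals, as the proof of \cref{gatePBSNPhard} makes precise via the graph $\tilde G$), whereas the negation-free optimum encodes the maximum \emph{directed} cycle decomposition of the chosen orientation $\vec G$. These genuinely differ: for the triangle with every edge doubled, oriented by following an Eulerian circuit (so that both copies of each edge point the same way), the undirected graph decomposes into three $2$-cycles, while $\vec G$ decomposes into at most two directed triangles. So an oracle for the negation-free problem, run on your instance, does not compute the MAX-ECD value of $G$, and NP-hardness does not follow. The paper's proof sidesteps exactly this issue by changing the source problem: it reduces from the maximum directed cycle decomposition problem (maxDCD) on Eulerian \emph{directed} graphs, which is NP-hard by \cite{amir2010permutation}, reusing the same construction $C_{w,\sigma}$; since negation-freeness forces $\dbtilde G=\tilde G=\vec G$ throughout the analysis of \cref{gatePBSNPhard}, the negation-free optimum counts directed cycle decompositions of $\vec G$ itself, which completes that reduction.
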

\begin{proof}
The proof is given in \cref{preuvegatePBSnegfreeNPhard}.
\end{proof}

Additionally, it is also hard, in a query-optimal diagram, to optimise the PBS and the negations together,
respectively: with respect to a cost function (at least in the case where the cost of a negation is not less than the cost of a PBS); with the negations prioritised over the PBS; and with the PBS prioritised over the negations. Note that the NP-hardness is clear in the third case since the considered problem is a refinement of the query-PBS-optimisation problem addressed in \cref{gatePBSNPhard}.

\begin{corollary}\label{gateplusPBSNPhard}
For any $\alpha\geq 1$, the problem of, given an abstract diagram $D$, finding an equivalent query-optimal diagram $D'$ such that $\nopbs(D')+\alpha \noneg(D')$ is minimal,  is NP-hard, where $\noneg(D)$ is the number of negations in $D$.
\end{corollary}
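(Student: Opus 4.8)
The plan is to recycle the reduction from the maximum Eulerian cycle decomposition problem (MAX-ECD) underlying \cref{gatePBSNPhard}, and to show that weighting negations by any $\alpha\geq 1$ leaves the optimum unchanged because, on the instances produced by that reduction, the minimum number of PBS is already realised without negations. Concretely, from an Eulerian graph $G=(V,E)$ I would form the same query-optimal abstract diagram $D$ as in \cref{gatePBSNPhard}, whose semantics preserves both position and polarisation, $\interp{D}(\Vpol,i)=((\Vpol,i),v_i)$ and $\interp{D}(\Hpol,i)=((\Hpol,i),\sigma(v_i))$. The proof of \cref{gatePBSNPhard} establishes that the query-optimal diagrams equivalent to $D$ correspond to the partitions of $E$ into cycles, with the number of PBS decreasing as the number of cycles grows; in particular the global minimum PBS count $p^{*}$ over query-optimal diagrams encodes the maximum number of cycles. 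I would additionally record that the diagrams realising these cycle decompositions are negation-free, so that $p^{*}$ is attained by a negation-free query-optimal diagram $D^{*}$.

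Granting this, fix $\alpha\geq 1$ and range over query-optimal diagrams $D'$ equivalent to $D$. Since $p^{*}$ is the minimum of $\nopbs$ over all such diagrams, every $D'$ satisfies $\nopbs(D')\geq p^{*}$, and therefore
\[
\nopbs(D')+\alpha\,\noneg(D')\;\geq\;p^{*}+\alpha\,\noneg(D')\;\geq\;p^{*},
\]
the second inequality being strict whenever $\noneg(D')\geq 1$, as $\alpha\geq 1>0$. Because $D^{*}$ achieves $\nopbs=p^{*}$ with $\noneg=0$, its cost is exactly $p^{*}$; hence the minimum of $\nopbs+\alpha\,\noneg$ over query-optimal diagrams equals $p^{*}$ and is attained precisely by negation-free PBS-optimal diagrams. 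Reading off this optimal cost thus recovers $p^{*}$, and with it the maximum number of cycles in an Eulerian decomposition of $G$. As the construction of $D$ from $G$ is polynomial, the NP-hardness of MAX-ECD \cite{caprara1999eulerian} transfers to the present problem.

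The delicate point --- and what I expect to be the main obstacle --- is the assertion that $p^{*}$ is attained by a negation-free diagram, i.e.\ that negations can never be traded against PBS to drop strictly below the negation-free optimum on these instances. I expect this to fall out of the same staircase/PGT analysis that drives \cref{gatePBSNPhard}, together with the companion negation-free hardness result \cref{gatePBSnegfreeNPhard}: the implementations indexed by cycle decompositions are assembled solely from PBS and gates routing the cycles and carry no negations, while the PBS lower bound coming from the cycle correspondence is valid for all query-optimal diagrams, with or without negations. Once this is in hand, the role of the weight $\alpha\geq 1$ is merely to make every negation at least as expensive as a PBS, which forbids such a trade and collapses the weighted problem onto the pure PBS-minimisation already shown to be NP-hard.
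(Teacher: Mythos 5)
There is a genuine gap, and it sits exactly where you flagged it: the assertion that the unrestricted PBS optimum $p^{*}$ is attained by a negation-free diagram on the MAX-ECD instances is not just unproven --- it is false, and the paper's own analysis shows why. In the reduction of \cref{gatePBSNPhard}, negations are precisely the mechanism by which a diagram equivalent to $C_{w,\sigma}$ realises a cycle decomposition of a \emph{re-orientation} $\tilde G\neq\vec G$ of $G$: since gates and PBS never change polarisation, in a negation-free diagram every $(\Vpol,p)$-trajectory stays $\Vpol$, so such a diagram can only implement decompositions of the \emph{fixed} orientation $\vec G$ (this is the content of \cref{gatePBSnegfreeNPhard}), whereas the unrestricted optimum $p^{*}$ corresponds to the best decomposition over \emph{all} orientations. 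These two quantities differ in general: take $G$ to be a triangle with every edge doubled, oriented by following an Eulerian circuit; then MAX-ECD$(G)=3$ (three $2$-cycles of parallel edges), but the fixed orientation $\vec G$ has all arcs going the same way around and decomposes into at most $2$ directed cycles. On such an instance every query-optimal diagram with $\nopbs=p^{*}$ must carry negations, your $D^{*}$ does not exist, and your chain of inequalities collapses: for $\alpha\geq 1$ the weighted optimum equals the \emph{negation-free} PBS optimum, which encodes maxDCD$(\vec G)$, not MAX-ECD$(G)$. Hence reading off the optimal weighted cost does not recover the MAX-ECD value, and the reduction from MAX-ECD breaks.

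The paper's proof therefore takes a different route, with two ingredients you are missing. First, it reduces from the \emph{directed} problem maxDCD (NP-hard by \cite{amir2010permutation}), reusing the argument of \cref{gatePBSnegfreeNPhard} rather than that of \cref{gatePBSNPhard}, precisely because the weighted problem collapses onto negation-free PBS minimisation. Second --- and this is the real work --- it proves \cref{negfreeisbetter}: any diagram of $\mathcal P$ with at least one negation admits an equivalent negation-free diagram with the same gates and at most $\nopbs(D)+\noneg(D)-1$ PBS. This lemma is established by a nontrivial surgery on PGT-form diagrams (symmetrising $P_2$ into $P_1^{-1}$, trading paired same-label gates against PBS, eliminating polarisation flips), not by the staircase lower bound of \cref{stairsoptimal} alone. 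Given the lemma, any candidate with a negation is strictly beaten in weighted cost by a negation-free one when $\alpha\geq 1$, so the optimiser is negation-free, weighted cost coincides with $\nopbs$, and hardness follows from the negation-free case. Your closing intuition that ``$\alpha\geq 1$ forbids trading negations for PBS'' is the right slogan, but making it precise \emph{is} \cref{negfreeisbetter}; it cannot be replaced by the two-line inequality, which presupposes the non-existent $D^{*}$.
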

\begin{proof}
The proof is given in \cref{preuvegateplusPBSNPhard}.
\end{proof}

Although we were only able to prove \cref{gateplusPBSNPhard} for $\alpha\geq1$, we conjecture that the optimisation is actually NP-hard even if the negations cost less than the PBS:
\begin{conjecture}
For any $\alpha\geq 0$, the problem of, given an abstract diagram $D$, finding an equivalent query-optimal diagram $D'$ such that $\nopbs(D')+\alpha \noneg(D')$ is minimal,  is NP-hard. 
\end{conjecture}

\begin{corollary}\label{querynegPBSNPhard}
The problem of, given an abstract diagram $D$, finding an equivalent query-$\neg$-PBS-optimal\footnote{A diagram is query-$\neg$-PBS-optimal if it is optimal according to the  lexicographic order: the number of queries then the number of negations and finally the number of polarising beam splitters. The definition of a query-PBS-$\neg$-optimal diagram is analogous.} diagram is NP-hard.
\end{corollary}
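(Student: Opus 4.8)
The plan is to reduce from the negation-free query-PBS problem of \cref{gatePBSnegfreeNPhard}, reusing the very instances it produces via the reduction from MAX-ECD. Recall that, given an Eulerian (multi)graph $G=(V,E)$ and a permutation $\sigma$ with $(v_i,\sigma(v_i))\in E$, that reduction builds a \emph{negation-free} abstract diagram $D$ with $\interp{D}(\Vpol,i)=((\Vpol,i),v_i)$ and $\interp{D}(\Hpol,i)=((\Hpol,i),\sigma(v_i))$. The whole argument hinges on one extra observation: for these instances the minimal number of negations over all query-optimal equivalent diagrams is $0$. Granting this, the lexicographic order behind query-$\neg$-PBS-optimality first fixes the number of queries to its global minimum, then forces $\noneg$ down to $0$, and only afterwards minimises $\nopbs$ among the query-optimal \emph{negation-free} diagrams. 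Since this last quantity is precisely what \cref{gatePBSnegfreeNPhard} asks to minimise, any algorithm returning a query-$\neg$-PBS-optimal diagram for $D$ solves the NP-hard negation-free problem, giving the result.

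The crux, and the main obstacle, is establishing that some query-optimal diagram equivalent to $D$ is negation-free. First I would recall that negations are the only generators that alter the polarisation (every beam splitter, gate, swap and wire preserves it), so $\interp{D}$, being polarisation-preserving, is at least compatible with a negation-free realisation. To attain the query lower bound of \cref{prop:lowerbound}, one must pair, for each oracle $U$, the configurations whose side effect contains $U$, so that each physical gate is shared by two of them. A pair consisting of one vertical and one horizontal configuration merges through a black beam splitter \emph{without} any negation (Equations \eqref{mergegatesRB} and \eqref{mergegatesBR}), whereas a pair of configurations of the same polarisation forces negations (Equations \eqref{mergegatesRR} and \eqref{mergegatesBB}). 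Hence it suffices to prove that, for every oracle $U$, the number of vertical configurations carrying $U$ equals the number of horizontal ones, so that a perfect vertical–horizontal pairing exists. In the construction these two counts are respectively the out-degree and in-degree of the vertex associated with $U$ in $G$; since $G$ is Eulerian they coincide, and a negation-free query-optimal diagram therefore exists, i.e. $\nu_{\min}=0$.

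It then remains to check that, on these instances, "query-optimal among negation-free diagrams" and "query-optimal among all diagrams" denote the same query count, so that the third level of the lexicographic order genuinely reproduces the negation-free query-PBS problem. This is immediate from the previous paragraph: the global query lower bound is already met by a negation-free diagram, so the minimal query count is the same in both regimes, and minimising $\nopbs$ among negation-free query-optimal diagrams coincides with query-PBS-optimality among negation-free diagrams. I would then assemble the reduction: because negations are prioritised over beam splitters and $\nu_{\min}=0$, a query-$\neg$-PBS-optimal diagram for $D$ is necessarily negation-free, query-optimal, and of minimal $\nopbs$ among such, hence an optimal solution of the negation-free query-PBS problem for $G$; as the latter is NP-hard by \cref{gatePBSnegfreeNPhard}, so is query-$\neg$-PBS-optimisation. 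Note that the possible existence of cheaper implementations using negations is irrelevant here precisely because $\noneg$ is minimised \emph{before} $\nopbs$: the optimum is locked into the negation-free regime, which is exactly the hard one. The only delicate bookkeeping is the parity/matching step pairing the vertical and horizontal occurrences of each oracle with the in- and out-edges of $G$, which is where Eulerianness is indispensable.
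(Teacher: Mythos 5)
Your proposal is correct, and at the top level it is the same reduction as the paper's: feed the hard instances behind \cref{gatePBSnegfreeNPhard} to a query-$\neg$-PBS-optimiser and argue that, since negations are minimised before PBS and the minimum number of negations on these instances is $0$, the optimum is locked into the negation-free regime, so the optimiser solves the negation-free query-PBS problem. The genuine difference is where the burden of proof is placed, and your placement is the more careful one. The paper's proof asserts, for an \emph{arbitrary} negation-free diagram $D$, that the query optimisation procedure returns a negation-free query-optimal diagram $D'$; read literally, this blanket claim fails. For instance, let $D$ be two parallel $\vtype$-wires each carrying a gate labelled $U$ (type $\vtype\oplus\vtype\to\vtype\oplus\vtype$): its query lower bound from \cref{prop:lowerbound} is $1$, but any one-query realisation must route both vertical configurations through the same gate and hence must flip the polarisation of one of them, i.e.\ must contain negations (this is why Step 3 of the procedure falls back on Equation \eqref{mergegatesRR} there, which introduces them). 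On such an instance the query-$\neg$-PBS optimum is \emph{not} a solution of the negation-free problem, so the reduction is sound only on instances admitting a negation-free query-optimal equivalent. Your ``crux'' paragraph supplies exactly this missing justification: the hard instances of \cref{gatePBSnegfreeNPhard} are the diagrams $C_{w,\sigma}$ built from Eulerian graphs, for which each oracle labels as many vertical as horizontal configurations (out-degree equals in-degree), so all merges can be performed with Equations \eqref{mergegatesRB} and \eqref{mergegatesBR} and the query bound is met with zero negations; your final check that the query minima of the two regimes then coincide is also needed and correct. In short, your argument is the airtight form of the paper's. One cosmetic slip: the negation-free corollary reduces from the directed cycle-decomposition problem maxDCD rather than from MAX-ECD, but the constructed family of diagrams is the same, so nothing in your argument changes.
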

\begin{proof}
The NP-hardness of this problem directly follows from \cref{gatePBSnegfreeNPhard}. Indeed, given a negation-free diagram $D$, the query optimisation procedure gives us a negation-free query-optimal diagram $D'$ equivalent to $D$. Any query-$\neg$-PBS-optimal diagram equivalent to $D$ has to contain at most as many negations as $D'$, namely $0$, that is, be negation-free. Thus, finding a query-$\neg$-PBS-optimal equivalent to $D$ amounts to finding a negation-free query-PBS-optimal diagram equivalent to $D$.
\end{proof}
Finally, as noted above, the NP-hardness when the PBS are prioritised over the negations is a direct consequence of \cref{gatePBSNPhard}:
\begin{remark}
The problem of, given an abstract diagram $D$, finding an equivalent query-PBS-$\neg$-optimal
diagram is NP-hard.
\end{remark}

\section{Discussions and Future Work}

The power and limits of quantum coherent control is an intriguing question. Maybe surprisingly,\footnote{One may argue that this could be due to the simplicity of the language. It belongs to future work to know whether things would be as simple if the language were to be extended to allow for a more general quantum control.} we have proved that coherently controlled quantum computations, when expressed in the PBS-calculus, can be efficiently optimised: any PBS-diagram can be transformed in polynomial time into a diagram that is optimal in terms of oracle queries. We have refined the procedure to also decrease the number of polarising beam splitters. It leads to an optimal diagram when each oracle is queried only once, but the corresponding optimisation problem is NP-hard in general. We leave to future work an experimental evaluation of the PGT procedure when each oracle is not necessarily queried only once.

It might be that the NP-hardness result is even more significant than the optimisation heuristic, as the hardness might scale up as the language is further developed. There is however no certainty that things will necessarily happen as badly, and it might be a perspective for further developments of this language to find extensions of it in which such optimisation problems are easy to solve.

To perform the resource optimisation, we have introduced a few add-ons to the framework of the PBS-calculus. First, we have refined the syntax in order to allow the representation of unsaturated (or 3-leg) polarising beam splitters. They are essential ingredients for resource optimisation, as they provide a way to decompose a diagram into elementary components and then remove the useless ones.  However, note that one can perform resource optimisation of vanilla PBS-diagrams, using the refined one only as an intermediate language. Indeed, given a vanilla PBS-diagram (where all wires are black), one can apply  the  optimisation procedures described in this paper. The resulting optimised PBS-diagram may contain some unsaturated PBS, but all these 3-leg PBS can be saturated by adding useless traces and then one can make the diagram monochromatic. The resulting vanilla PBS-diagram keeps the same number of queries and PBS.

We have also generalised the gates of the diagrams, by considering arbitrary monoids. This is a natural abstraction that allows one to consider various examples and in particular the one of the free monoid which is appropriate to model the oracle queries. The query complexity is a convenient model to prove lower bounds, but note that the optimisation procedures described in this paper can be applied with any arbitrary monoid (for instance using Proposition \ref{prop:inter}). However, there is no guarantee of minimality with an arbitrary monoid. 

Another direction of research is to consider resource optimisation in a more expressive language for quantum control. Indeed, the polarisation of a particle can only be flipped within a PBS-diagram. The PBS-calculus is well suited for most applications of coherent control in quantum computing, by allowing the description of superpositions of classical controls (in particular superposition of causal orders) since the input particle can be in any superposition of polarisations. However, it would be interesting to develop  resource optimisation techniques for quantum computation involving arbitrary quantum control.

\bibliographystyle{plainurl}
\bibliography{bibpbsdiagrams}

\appendix

\section{Proofs}

\subsection{Proof of Proposition \ref{quantumsemiso}}\label{preuvequantumsemiso}

Since there exists an orthonormal basis of $\CC^{[a]}\otimes\Ve$ composed of vectors of the form $\ket{c,p}\otimes\ket\varphi$, it suffices to check that $V_D$ preserves all scalar products of vectors of this form. For any $c$, $p$, $c'$, $p'$, $\ket\varphi$ and $\ket{\varphi'}$, one has $(\bra{c,p}\otimes\bra\varphi)V_D^\dag V_D^{}(\ket{c'\!,p'}\otimes\ket{\varphi'})=\scalprod{c^D_{c,p},p^D_{c,p}}{c^D_{c'\!,p'},p^D_{c'\!,p'}}\otimes\bra\varphi {U^D_{c,p}}^{\!\dag} U^D_{c'\!,p'}\ket{\varphi'}$. On the one hand, it can be proved in the same way as in~\cite{alex2020pbscalculus} that the function $(c,p)\mapsto(c^D_{c,p},p^D_{c,p})$ is a bijection, so that $(c^D_{c,p},p^D_{c,p})=(c^D_{c'\!,p'},p^D_{c'\!,p'})$ if and only if $(c,p)=(c'\!,p')$\smallskip. That is, $\scalprod{c^D_{c,p},p^D_{c,p}}{c^D_{c'\!,p'},p^D_{c'\!,p'}}=\scalprod{c,p}{c'\!,p'}=\begin{cases}1&\text{if $(c,p)=(c'\!,p')$}\\0&\text{if $(c,p)\neq(c'\!,p')$}\end{cases}$\smallskip. On the other hand, since $U^D_{c,p}$ is an isometry, if $(c,p)=(c'\!,p')$ then $\bra\varphi {U^D_{c,p}}^{\!\dag} U^D_{c,p}\ket{\varphi'}=\scalprod{\varphi}{\varphi'}$. Thus,\smallskip\\ $(\bra{c,p}\otimes\bra\varphi)V_D^\dag V_D^{}(\ket{c'\!,p'}\otimes\ket{\varphi'})=\begin{cases}\scalprod{\varphi}{\varphi'}&\text{if $(c,p)=(c'\!,p')$}\\0&\text{if $(c,p)\neq(c'\!,p')$}\end{cases}\allowbreak=(\bra{c,p}\otimes\bra\varphi)(\ket{c'\!,p'}\otimes\ket{\varphi'})$\smallskip.

\subsection{Proof of Proposition \ref{equivquantumaction}}\label{preuveequivquantumaction}

Let us assume that $\M$ is a monoid of linear maps on a complex vector space $\Ve$.

Since the quantum semantics is defined from the action semantics, it is clear that $\forall D,D', \interpath{D} = \interpath {D'} \Rightarrow V_D=V_{D'}$.

Given an $\M$-diagram $D$, if $0\notin \M$, then for all $c,p$, $U^D_{c,p}\neq 0$, so that there exists $\ket\varphi\in\Ve$ such that $U^D_{c,p}\ket\varphi\neq0$. Then $\ket{c^D_{c,p},p^D_{c,p}}\otimes U^D_{c,p}\ket\varphi\neq0$, which implies that $c^D_{c,p}$ and $p^D_{c,p}$ are uniquely determined from the data of $c$, $p$ and $V_D$. Since in any case, $U^D_{c,p}$ is uniquely determined from the data of $c$, $p$ and $V_D$, this implies that if $0\notin \M$ then $\interpath{D}$ is uniquely determined from $V_D$. Hence if $0\notin \M$ then for any two $\M$-diagrams $D$ and $D'$, $V_D=V_{D'} \Rightarrow \interpath{D} = \interpath {D'}$.

Conversely, if $0\in \M$, then for example, with $D=\sgtikzfig{gate0}$ and $D'=\sgtikzfig{gate0neg}$, both of type $\top\to\top$, one has $V_D=V_D'=0$ but $\interpath{D}(\Vpol,0)=((\Vpol,0),0)\neq\interpath{D'}(\Vpol,0)=((\Hpol,0),0)$.

\subsection{Proof of Propositions \ref{interpretationuniverselle} and \ref{matricescompletes}}\label{preuveinterpretationuniverselle}

Given a monoid homomorphism $\gamma\colon\GG^*\to\UU(\H)$, a $\GG^*$-diagram $D$ and any $c,p$, one has $\interpath{\gamma(D)}(c,p)=((c^{D}_{c,p},p^{D}_{c,p}),\gamma(w^{D}_{c,p}))$. Therefore, to prove that $\forall D_1,D_2,\interpath{\gamma(D_1)}=\interpath{\gamma(D_2)}\Rightarrow\interpath{D_1}=\interpath{D_2}$, it suffices to prove that for any two words $w_1,w_2\in\GG^*$, if $\gamma(w_1)=\gamma(w_2)$ then $w_1=w_2$.

We first prove \cref{interpretationuniverselle}.

By Zorn's lemma, there exists a maximal family $(\alpha_i)_{i\in I}$ of $\QQ$-algebraically independent complex numbers of absolute value $1$. Such a family must have the cardinality of $\CC$ (that is, $2^{\aleph_0}$). Indeed, the cardinality of the set of polynomials in one variable with coefficients in the field extension of $\QQ$ generated by the $\alpha_i$, is $\max(\aleph_0,\mathrm{card}(I))$, and since each of these polynomials has finitely many roots, the set of their roots has cardinality at most $\max(\aleph_0,\mathrm{card}(I))$. If $\mathrm{card}(I)$ is strictly less than $2^{\aleph_0}$, then so is $\max(\aleph_0,\mathrm{card}(I))$; therefore, since the set $\bigl\{\alpha\in\CC~\bigm|~|\alpha|=1\bigr\}$ has cardinality $2^{\aleph_0}$, it contains an element $\alpha_\bot$ which is not a root of any of these polynomials, so that by adding $\alpha_\bot$ to the family $(\alpha_i)_{i\in I}$, we still have a family of $\QQ$-algebraically independent complex numbers of absolute value $1$, which contradicts the maximality of $(\alpha_i)_{i\in I}$\bigskip.

If the cardinality of $\GG$ is no greater than $2^{\aleph_0}$, then without loss of generality, we can assume that $\GG\subseteq I$. We start with the case where $\H=\CC^2$. We consider the function $\gamma\colon U\nolinebreak\in\nolinebreak\GG\mapsto H\begin{pmatrix}1&0\\0&\alpha_U\end{pmatrix}$, extended into a monoid homomorphism $\gamma\colon \GG^*\to\UU(\CC^2)$ (where $H=\frac{1}{\sqrt2}\begin{pmatrix}1&1\\1&-1\end{pmatrix}$). Given two words $w_1,w_2\in\GG^*$ such that $w_1\neq w_2$, the entries of $\gamma(w_1)$ and $\gamma(w_2)$ are polynomials in the $\alpha_U$ with coefficients in $\QQ$. The two matrices of polynomials obtained by replacing each $\alpha_U$ by a variable $X_U$ in $\gamma(w_1)$ and $\gamma(w_2)$ differ by at least one entry: indeed, by instantiating each variable $X_U$ by either $e^{\mathrm{i}\pi/4}$ or $e^{3\mathrm{i}\pi/4}$ in such a way that the sequence of angles induced by $w_1$ and $w_2$ are different, we get two different sequences of the patterns $HT$ and $HTS$ with $T=\begin{pmatrix}1&0\\0&e^{\mathrm{i}\pi/4}\end{pmatrix}$ and $S=T^2$, and it follows from
Theorem 4.1 of \cite{giles2019remarks} (which is Theorem 1(II) of \cite{matsumoto2008representation}) that these two products of matrices have distinct values.\footnote{Indeed, the regular expression $(HT|HTS)^*$ describes the same set of words as $\epsilon|(HT(HT|SHT)^*(\epsilon|S))$, which, since both the identity operator and $S$ belong to the Clifford group, clearly describes a subset of the Matsumoto-Amano normal forms defined in \cite{giles2019remarks} (Equation (2)).} Since the $\alpha_U$ are algebraically independent, this implies that $\gamma(w_1)\neq \gamma(w_2)$.

Still in the case where the cardinality of $\GG$ is no greater than $2^{\aleph_0}$, if $\H\neq\CC^2$, then it suffices to consider a subspace of $\H$ of dimension $2$, and to define for any $U\in\G$, $\gamma(U)$ as having matrix $H\begin{pmatrix}1&0\\0&\alpha_U\end{pmatrix}$ on this subspace (in an arbitrary, fixed, othonormal basis) and as being the identity on the orthogonal complement\bigskip.

If the (Hilbert) dimension of $\H$ is strictly greater than $\aleph_0$, then Zorn's lemma implies that $\H$ can be decomposed into a direct sum of $\dim(\H)$ orthogonal subspaces of dimension $2$: $\H=\bigoplus_{j\in J}\H_j$ with $\mathrm{card}(J)=\dim(\H)$ and $\forall j,\dim(\H_j)=2$. For each of the $(2^{\aleph_0})^{\dim(\H)}=2^{\dim(\H)}$ possible families $(i_j)_{j\in J}$ of elements of $I$ indexed by $J$, one can define a linear map $\delta\!\left((i_j)_{j\in J}\right)\in\UU(\H)$ as having matrix $H\begin{pmatrix}1&0\\0&\alpha_i\end{pmatrix}$ in an arbitrary orthonormal basis of $\H_j$ (chosen with the help of the axiom of choice) for every $j$. If the cardinality of $\GG$ is no greater than $2^{\dim(\H)}$, then without loss of generality, we can assume that $\GG\subseteq I^J$. We define the function $\gamma\colon\GG\to\UU(\H)$ by $\forall U,\gamma(U)=\delta(U)$, and extend it into a monoid homomorphism $\gamma\colon \GG^*\to\UU(\CC^2)$. Given two words $w_1,w_2\in\GG^*$ such that $w_1\neq w_2$, there exists an index $j\in J$ such that the two sequence of elements of $i$ induced by $w_1$ and $w_2$ at index $j$ are distinct, which, by the argument given above, implies that the unitary maps on $\H_j$ induced respectively by $\gamma(w_1)$ and $\gamma(w_2)$ are distinct. Hence, $\gamma(w_1)\neq \gamma(w_2)$\bigskip.

Finally, to prove Proposition~\ref{matricescompletes} without using the axiom of choice, it suffices to exhibit an infinite family of $\QQ$-algebraically independent complex numbers of absolute value $1$. One can consider for example the $e^{\mathrm{i}\pi^k}$, for $k\geq 2$, whose algebraic independence follows from the Lindemann-Weierstrass theorem and the fact that $\pi$ is transcendental. Given such a family, one can use a similar argument as above to prove a weaker version of Proposition~\ref{interpretationuniverselle} in which the cardinality of $\GG$ is required to be at most $\aleph_0$, which immediately implies Proposition~\ref{matricescompletes}.

\subsection{Proof of Theorem \ref{existenceNF}}\label{preuveexistenceNF}

The proof is by structural induction on $D$.
\begin{itemize}
\item \ntikzfig{diagrammevide-xs}, \tikzfig{cfilcourtR-s}, \tikzfig{cfilcourtB-s}, \xntikzfig{cnegRB}, \xntikzfig{cnegBR}, \xgtikzfig{cgateUR}, \xgtikzfig{cgateUB}, \xstikzfig{beamsplitterNRwB}, \xstikzfig{beamsplitterRNBw}, \xstikzfig{swapRR-s-label}, \xstikzfig{swapBB-s-label}, \xstikzfig{swapRB-s-label} and \xstikzfig{swapBR-s-label} are already in normal form.

\item The normal forms of \xstikzfig{filcourt}, \xstikzfig{beamsplitterwBNR}, \xstikzfig{beamsplitterBwRN}, \xstikzfig{beamsplitter}, \xstikzfig{beamsplitterNRRN-s-label}, \xstikzfig{beamsplitterRNNR-s-label}, \xstikzfig{beamsplitterNNBB-s-label} and \xstikzfig{beamsplitterBBNN-s-label} are given by Equation \eqref{bsbssplitmerge}, \eqref{bsxcol}, \eqref{xbscol}, \eqref{bsbigebre}, \eqref{bsnrrn}, \eqref{bsrnnr}, \eqref{bsnnbb} and \eqref{bsbbnn} respectively.

\item The normal forms of \xntikzfig{neg}, \xgtikzfig{gateU}, \xstikzfig{swap}, \xstikzfig{swapnoirR-s-label}, \xstikzfig{swapRnoir-s-label}, \xstikzfig{swapnoirB-s-label} and \xstikzfig{swapBnoir-s-label} are obtained as follows:\bigskip

\begin{longtable}{RCL@{\qquad}RCL}\tikzfig{neg}&\eqeqref{bsbssplitmerge}&\tikzfig{negbsRBbs}&\tikzfig{gateU}&\eqeqref{bsbssplitmerge}&\tikzfig{UbsRBbs}\\\\
&\eqeqref{nbscol}&\tikzfig{bsBRnegnegRBbs}&&\eqeqref{ubscol}&\tikzfig{gateUNF}\\\\
&\eqeqref{bsxcol}&\tikzfig{negNFmiroir}
\\\\\\\\\\
\tikzfig{swapnoirR-label}&\eqeqref{bsbssplitmerge}&\tikzfig{swapnoirRNF-label}&\tikzfig{swapRnoir-label}&\eqeqref{bsbssplitmerge}&\tikzfig{swapRnoirNF-label}\\\\\\\\\\
\tikzfig{swapnoirB-label}&\eqeqref{bsbssplitmerge}&\tikzfig{swapnoirBNF-label}&\tikzfig{swapBnoir-label}&\eqeqref{bsbssplitmerge}&\tikzfig{swapBnoirNF-label}
\end{longtable}\vspace{1.5em}
\[\tikzfig{swap}\quad\eqeqref{bsbssplitmerge}\quad\tikzfig{swapNF}\]
\item If $D=D_1\oplus D_2$, then by induction hypothesis there exist two diagrams in normal form $N_1$ and $N_2$ such that $\CPBS\vdash D_1=N_1$ and $\CPBS\vdash D_2=N_2$. Then $\CPBS\vdash D=N_1\oplus N_2$ and it is easy to see that $N_1\oplus N_2$ is in normal form.

\item If $D=D_2\circ D_1$, then by induction hypothesis, let $N_1$ and $N_2$ be two diagrams in normal form such that $\CPBS\vdash D_1=N_1$ and $\CPBS\vdash D_2=N_2$. Let us decompose them as $N_1=M_1\circ P_1\circ F_1\circ G_1\circ S_1$ and $N_2=M_2\circ P_2\circ F_2\circ G_2\circ S_2$, following \cref{defNFcoloree}. One has $\CPBS\vdash D=N_2\circ N_1=M_2\circ P_2\circ F_2\circ G_2\circ S_2\circ M_1\circ P_1\circ F_1\circ G_1\circ S_1$. \Cref{bsbsmergesplit} makes $S_2\circ M_1$ equal to a parallel composition of red and blue identity wires, so that $\CPBS\vdash D=M_2\circ P_2\circ F_2\circ G_2\circ P_1\circ F_1\circ G_1\circ S_1$. By naturality of the swap, one has $G_2\circ P_1=P_1\circ G'_2$, where $G'_2$ is a parallel composition of coloured non-identity gates and identity wires, obtained by permuting the ``rows'' of $G_2$. One has $\eqref{neguRB},\eqref{neguBR}\vdash G'_2\circ F_1=F_1\circ G''_2$, where $G''_2$ is obtained by changing some colours in $G'_2$, and \cref{neguBR} is the following variant of \cref{neguRB}:
\begin{equation}\label{neguBR}\tikzfig{UnegBR-label}\ =\ \tikzfig{negBRU-label}\end{equation}
which is derived from the equations of \cref{axiomsCPBS} as follows:\medskip
\begin{longtable}{RCL}\tikzfig{negBRU-label}&\eqeqref{negnegRBR}&\tikzfig{negUnegnegBRBR}\\
&\eqeqref{neguRB}&\tikzfig{negnegUnegBRBR}\\
&\eqeqref{negnegBRB}&\tikzfig{UnegBR-label}\end{longtable}
Thus, $\CPBS\vdash D=M_2\circ P_2\circ F_2\circ P_1\circ F_1\circ G''_2\circ G_1\circ S_1$. By naturality of the swap, one has $F_2\circ P_1=P_1\circ F'_2$, where $F'_2$ is a parallel composition of coloured identities and negations (obtained by permuting $F_2$). One has $\eqref{negnegRBR},\eqref{negnegBRB}\vdash F'_2\circ F_1=F''$, where $F''$ is obtained by removing all double negations in $F'_2\circ F_1$. Finally, $\eqref{fusionUVR},\eqref{idboxR}\vdash G''_2\circ G_1=G'''$, where $G'''$  is still a parallel composition of coloured non-identity gates and identity wires. Thus,  $\CPBS\vdash D=M_2\circ (P_2\circ P_1)\circ F''\circ G'''\circ S_1$, with $S_1$, $G'''$, $F''$, $(P_2\circ P_1)$ and $M_2$ respectively of the forms described in \cref{defNFcoloree}, so that their composition is in normal form. This gives us the result.
\item If $D=Tr_\vtype(D'):a\to b$, then by induction hypothesis, let $N'$ be a diagram in normal form such that $\CPBS\vdash D'=N'$. Let us decompose it as $N'=M'\circ P'\circ F'\circ G'\circ S'$, following \cref{defNFcoloree}. Since $N'$ is of type $a\oplus\vtype\to b\oplus\vtype$, $S'$ (resp. $M'$) is of the form $S\oplus\ntikzfig{cfilcourtR-s}$ (resp. $M\oplus\ntikzfig{cfilcourtR-s}$) where $S$ (resp. $M$) is a parallel composition of coloured identity wires and copies of $\xstikzfig{beamsplitterNRwB}$ (resp. $\xstikzfig{beamsplitterRNBw}$). Using the structural congruence, one can write $P'$ in the form $\ntikzfig{PsurfilR}$ or $\ntikzfig{permswaprougeenbas}$, where $P$, or $P_1$, $P_2$ and $P_3$, are permutations of the wires. In the first case, $Tr_\vtype(N')$ can (still using the structural congruence) be written in the form \[\ntikzfig{SGFPMsurbouclevideUR}\] with $S$, $G$, $F$, $P$ and $M$ of the forms demanded by \cref{defNFcoloree} (in particular, $F'$ cannot have a negation on its bottom wire since this would prevent $N'$ from being of type $a\oplus\vtype\to b\oplus\vtype$), so that $\eqref{bouclevideUR}\vdash Tr_\vtype(N')=\ntikzfig{SGFPM}$, which is in normal form. In the second case, $Tr_\vtype(N')$ can be written in the form \[\ntikzfig{SGFP1P2P3M}\] where $\xntikzfig{negpotentielRa}$ is either $\ntikzfig{cfilcourtR-s}$ or $\xntikzfig{cnegRB}$. Then using the structural congruence (in particular the yanking axiom), one can write it in the form \[\ntikzfig{SGFP1P2P3Metire}.\] By naturality of the swap, one can slide the gate $U$ and the possible negation through $P_1$. Then, possibly using \cref{neguBR}, one can move the gate $U$ to the other side of $F$. Finally, it may remain to merge $U$ with a gate of $G$ using \cref{fusionUVR} or its following variant:
\begin{equation}\label{fusionUVBgenerale}\tikzfig{gateUgateVbleu-t}=\tikzfig{gateVUbleue-t}\end{equation}
and/or to remove a double negation using \cref{negnegBRB}. Then one gets a diagram in normal form.

\Cref{fusionUVBgenerale} is derived from the equations of \cref{axiomsCPBS} as follows:\medskip
\begin{longtable}{RCL}
\tikzfig{gateUgateVbleu-t}&\eqdeuxeqref{negnegBRB}{neguRB}&\tikzfig{neggateUgateVrougeneg-t}\\\\
&\eqeqref{fusionUVR}&\tikzfig{neggateVUrougeneg-t}\\\\
&\eqdeuxeqref{neguRB}{negnegBRB}&\tikzfig{gateVUbleue-t}
\end{longtable}\medskip

\item The case $D=Tr_\htype(D'):a\to b$ is analogous to the previous case. Instead of using Equations \eqref{neguRB} and \eqref{bouclevideUR}, one uses respectively \cref{neguBR} and the following variant of \cref{bouclevideUR}:
\begin{equation}\label{bouclevideUB}\ptikzfig{bouclevideUB-label}{0.65}{0.74286}{1}\ =\ \ptikzfig{diagrammevide}1{0.5}{0.5}\end{equation}
which is derived from the equations of \cref{axiomsCPBS} as follows:\medskip

\begin{longtable}{RCL}\ptikzfig{bouclevideUB-label}{0.65}{0.74286}{1}&\eqeqref{negnegBRB}&\ptikzfig{bouclevideUBnegneg-label}{0.65}{0.74286}{1}\\\\
&=&\ptikzfig{bouclevidenegUBneg-label}{0.65}{0.74286}{1}\\\\
&\eqeqref{neguRB}&\ptikzfig{bouclevideURnegneg-label}{0.65}{0.74286}{1}\\\\
&\eqeqref{negnegRBR}&\ptikzfig{bouclevideUR-label}{0.65}{0.74286}{1}\\\\
&\eqeqref{bouclevideUR}&\ptikzfig{diagrammevide}1{0.5}{0.465}\end{longtable}\medskip

\item If $D=Tr_\top(D'):a\to b$, then by induction hypothesis, let $N'$ be a diagram in normal form such that $\CPBS\vdash D'=N'$. $Tr_\top(N')$ can be written in the form $\ntikzfig{tracebsNRwBNsecondebsRNBw}$, which by dinaturality and \cref{bsbsmergesplit}, can be transformed into $\ntikzfig{traceRBNseconde}$. It suffices then to proceed successively as in the two preceding cases to get a diagram in normal form.
\end{itemize}

\subsection{Proof of Lemma \ref{uniquenessNF}}\label{preuveuniquenessNF}

If $\interpath{N}=\interpath{N'}$, then in particular $N$ and $N'$ have same type: $N,N':a\to b$ for some $a,b$.

Let us decompose $N$ and $N'$ into $N=M\circ P\circ F\circ G\circ S$ and $N'=M'\circ P'\circ F'\circ G'\circ S'$.

It follows directly from the definition that $S$ and $S'$ are uniquely determined by their input type, so that since they both have input type $a$, $S=S'$. Similarly, $M$ and $M'$ are uniquely determined by their output type, so that since they both have output type $b$, $M=M'$.

Let $S^{-1}$ and $M^{-1}$ be the horizontal reflections of respectively $S$ and $M$, that is, the diagrams obtained by replacing $\xstikzfig{beamsplitterNRwB}$ by $\xstikzfig{beamsplitterRNBw}$ in $S$ and $\xstikzfig{beamsplitterRNBw}$ by $\xstikzfig{beamsplitterNRwB}$ in $M$. One has $\eqref{bsbsmergesplit}\vdash M^{-1}\circ N\circ S^{-1} = P\circ F\circ G$ and $\eqref{bsbsmergesplit}\vdash M^{-1}\circ N'\circ S^{-1} = P'\circ F'\circ G'$, so that by \cref{soundnessCPBS}, $\interpath{M^{-1}\circ N\circ S^{-1}} = \interpath{P\circ F\circ G}=\interpath{M^{-1}\circ N'\circ S^{-1}} = \interpath{P'\circ F'\circ G'}$. For any $c,p$, one has $\interpath{P\circ F\circ G}(c,p)=((c^F_{c,p},p^P_{c,p}),U^G_{c,p})$ and
$\interpath{P'\circ F'\circ G'}(c,p)=((c^{F'}_{c,p},p^{P'}_{c,p}),U^{G'}_{c,p})$, so that $U^{G}_{c,p}=U^{G'}_{c,p}$, $c^{F}_{c,p}=c^{F'}_{c,p}$ and $p^{P}_{c,p}=p^{P'}_{c,p}$. Because of their respective forms required by \cref{defNFcoloree}, \linebreak[3]$G$, $G'$, $F$, $F'$, $P$ and $P'$ are uniquely determined by the family of, respectively, the $U^{G}_{c,p}$, the $U^{G'}_{c,p}$, the $c^{F}_{c,p}$, the $c^{F'}_{c,p}$, the $p^{P}_{c,p}$, and the $p^{P'}_{c,p}$. Hence, $G=G'$, $F=F'$ and $P=P'$.

\subsection{Proof of Theorem \ref{minimalityCPBS}}\label{preuveminimalityCPBS}

For
\begin{itemize}
\item each of \cref{idboxR,nbscol,negnegRBR,negnegBRB,bsbssplitmerge,bsbsmergesplit,bsxcol,xbscol,bsbigebre,bsnrrn,bsrnnr,bsnnbb,bsbbnn}
\item each instance of Equations \eqref{neguRB} and \eqref{ubscol}
\item the class of all instances of Equation \eqref{fusionUVR} without $I$ gates in the left-hand side
\item each class of instances of Equation \eqref{bouclevideUR} given by an equivalence class of elements of $\M$ for the equivalence relation $\sim_{\mathrm{conj}}^*$, defined as the transitive closure of $\sim_{\mathrm{conj}}$, itself defined by $U\sim_{\mathrm{conj}}V$ if there exist $W,T\in\M$ such that $U=WT$ and $V=TW$
\end{itemize}
we give an invariant that is satisfied by exactly one side of the considered equation (or of each element of the considered class of instances of Equation \eqref{fusionUVR} or \eqref{bouclevideUR}), and such that for any diagram $D$, applying any other equation or instance inside $D$ (that is, replacing a sub-diagram of $D$ that matches one side of the equation by the other side) preserves the fact that $D$ satisfies the invariant or not. In each case, this proves that the equations that break the invariant are not consequences of those that preserve it in any diagram.

Note that the instances of Equation \eqref{fusionUVR} with an $I$ gate in the left-hand side are consequences of Equation \eqref{idboxR}, and that the elements of a class of instances of Equation \eqref{bouclevideUR} are consequences of any particular instance of Equation \eqref{bouclevideUR} of the same class together with Equation \eqref{fusionUVR}.

\begin{itemize}
\item For Equation \eqref{idboxR}, the invariant is that at least one gate can be reached by a particle from an input wire.
\item For the class of all instances of Equation \eqref{fusionUVR} without $I$ gates in the left-hand side, the invariant is the maximum number of non-$I$ gates that a particle coming from an input wire can traverse along its path in the diagram.
\item For each instance of Equation \eqref{neguRB} given by a particular $U$, the invariant is that all gates labelled with $U$ are red.
\item For Equation \eqref{nbscol}, the invariant is that the diagram contains a (black) $\xntikzfig{neg}$.
\item For each instance of Equation \eqref{ubscol} given by a particular $U$, the invariant is that the diagram contains a (black) $\xgtikzfig{gateU}$.
\item For each class of instances of Equation \eqref{bouclevideUR}, the invariant is that there exists a wire in the diagram and a polarisation $\Vpol$ or $\Hpol$ such that the path of a particle starting from this wire with this polarisation is a closed loop, and that the product of the labels of the gates traversed by the particle before getting back to its starting point with its initial polarisation for the first time, is an element of the equivalence class (note that this does not depend on the choice of the starting point).
\item For Equation \eqref{negnegRBR}, the invariant is that all wires are red.
\item For Equation \eqref{negnegBRB}, the invariant is that no particle entering the diagram by a blue input wire can reach the output without passing through a negation at some point in the diagram. Note that \cref{negnegRBR} cannot change this invariant because in order to reach a red wire, the particle coming from a blue wire has to get its polarisation changed, and therefore to pass through a negation.
\item For Equation \eqref{bsbssplitmerge}, the invariant is that all wires are black and the diagram is non-empty and does not contain any $\xstikzfig{beamsplitter}$.
\item For Equation \eqref{bsbsmergesplit}, the invariant is that the diagram contains at least one black wire.
\item For Equation \eqref{bsxcol}, the invariant is that the diagram contains at least one generator among $\xstikzfig{beamsplitterwBNR}$, $\xstikzfig{beamsplitter}$, $\xstikzfig{beamsplitterRNNR-s-label}$, $\xstikzfig{beamsplitterBBNN-s-label}$ and $\xntikzfig{neg}$.
\item For Equation \eqref{xbscol}, the invariant is that the diagram contains at least one generator among $\xstikzfig{beamsplitterBwRN}$, $\xstikzfig{beamsplitter}$, $\xstikzfig{beamsplitterNRRN-s-label}$ and $\xstikzfig{beamsplitterBBNN-s-label}$.
\item For Equation \eqref{bsbigebre}, the invariant is that the diagram contains a $\xstikzfig{beamsplitter}$.
\item For Equation \eqref{bsnrrn}, the invariant is that the diagram contains a $\xstikzfig{beamsplitterNRRN-s-label}$.
\item For Equation \eqref{bsrnnr}, the invariant is that the diagram contains a $\xstikzfig{beamsplitterRNNR-s-label}$.
\item For Equation \eqref{bsnnbb}, the invariant is that the diagram contains a $\xstikzfig{beamsplitterNNBB-s-label}$.
\item For Equation \eqref{bsbbnn}, the invariant is that the diagram contains a $\xstikzfig{beamsplitterBBNN-s-label}$.
%\qedhere
\end{itemize}

\subsection{Proof of Theorem \ref{stairsoptimal}}\label{preuvestairsoptimal}

Given any gate-free diagram $Q:d\to e$, we denote by $\{d^{Q}_i\}_{i=1,...,k_Q}$ the finest partition of $\{0,...,|d|-1\}$ such that there exists a partition $\{e^Q_i\}_{i=1,...,k_Q}$ of $\{0,...,|e|-1\}$ satisfying $\forall i,\forall c,p,(p\in d^{Q}_i\Leftrightarrow p^P_{c,p}\in e^{Q}_i)$. It is easy to see that the partition $\{e^Q_i\}_{i=1,...,k_Q}$ is unique and that symmetrically, it is the finest partition of $\{0,...,|e|-1\}$ such that there exists a partition $\{{f}^{Q}_i\}_{i=1,...,k_Q}$ of $\{0,...,|d|-1\}$ satisfying $\forall i,\forall c,p,(p\in {f}^Q_i\Leftrightarrow p^P_{c,p}\in e^{Q}_i)$ (which of course implies that $\forall i,{f}^Q_i=d^Q_i$)\bigskip.

Given $P=\ptikzfig{superpetpermsnegpotentiels-xl}{1.085}{0.7}1:d\to e$ a diagram in stair form, let $\{d_i\}_{i=1,...,k}$ be the partition of $\{0,...,|d|-1\}$ such that an index $j$ is in $d_i$ if the input wire of $P$ of index $j$ is connected to $C_i$. Similarly, let $\{e_i\}_{i=1,...,k}$ be the partition of $\{0,...,|e|-1\}$ such that an index $j$ is in $e_i$ if the output wire of $P$ of index $j$ is connected to $C_i$. One has $\forall i,\forall c,p,(p\in d_i\Leftrightarrow p^P_{c,p}\in e_i)$. It is easy to see that $\{d_i\}_{i=1,...,k}$ is the finest partition of $\{0,...,|d|-1\}$ such that there exists $\{e_i\}_{i=1,...,k}$ satisfying this property, that is, up to reordering the partitions, one has $k=k_P$ and $\forall i,d_i= d_i^{P}\text{ and }e_i=e_i^P$\bigskip.

Again given an arbitrary gate-free diagram $Q:d\to e$, let us decompose $d=x_1\oplus\cdots\oplus x_n$ and $e=y_1\oplus\cdots\oplus y_m$, with $\forall j, x_j,y_j\in\{\vtype,\htype,\top\}$. Since any gate-free diagram is equivalent to a diagram in stair form (indeed, by applying Steps \ref{splitallPBS} to \ref{recomposestairs} of the PGT procedure described below --- which does not rely on \cref{stairsoptimal} --- one can put any gate-free diagram in stair form), the preceding paragraph, because of the input/output types of the five kinds of staircases, implies that for every $i$ there are four cases:
\begin{enumerate}
\item $\left|{d}^Q_i\right|=\left|{e}^Q_i\right|$,\quad $\forall j\in {d}^Q_i, x_j=\top$\quad and\quad $\forall j\in {e}^Q_i, y_j=\top$
\item\label{equilibrecolore} $\left|{d}^Q_i\right|=\left|{e}^Q_i\right|$\quad and exactly one element of ${d}^Q_i$ and one element of ${e}^Q_i$ are not equal to $\top$
\item\label{splitL} $\left|{d}^Q_i\right|=\left|{e}^Q_i\right|+1$,\quad $\forall j\in {e}^Q_i, y_j=\top$\quad and exactly two elements of ${d}^Q_i$ are not equal to $\top$
\item\label{splitR} $\left|{e}^Q_i\right|=\left|{d}^Q_i\right|+1$,\quad $\forall j\in {d}^Q_i, x_j=\top$\quad and exactly two elements of ${e}^Q_i$ are not equal to $\top$
\end{enumerate}

We denote by $s_L(Q)$ the number of indices $i$ for which we are in Case \ref{splitL}\medskip.

Moreover, by examining more in details the semantics of the five kinds of staircases, one can show that for every index $i\in\{1,...,k_Q\}$, there exists two bijections $\rho_i\colon\ZZ/\!\left(\left|d^{Q}_i\right|\ZZ\right)\to d^{Q}_i$ and $\tau_i\colon\ZZ/\!\left(\left|e^{Q}_i\right|\ZZ\right)\to e^{Q}_i$ such that for any $p\in\{1,...,|d^{Q}_i|\}$, if $(\Vpol,\rho_i(\pi(p)))\in[d]$ then $\interp{Q}\left(\Vpol,\rho_i(\pi(p))\right)=\left(\left(c^Q_{\Vpol,\rho_i(\pi(p))},\tau_i(\pi(p))\right),I\right)$, and if $(\Hpol,\rho_i(\pi(p)))\in[d]$ then\linebreak $\interp{Q}\left(\Hpol,\rho_i(\pi(p))\right)=\left(\left(c^Q_{\Hpol,\rho_i(\pi(p))},\tau_i(\pi(p+1))\right),I\right)$, where $\pi\colon\ZZ\to\ZZ/k\ZZ$ denotes the canonical projection.

Concrete instances of the bijections $\rho_i$ and $\tau_i$ can be built by starting from any element $j\in d^Q_i$ and defining $\rho_i(1)=x$. Then, the properties of $\rho_i$ and $\tau_i$ imply that knowing the action semantics of $Q$, for any $p\in\ZZ/\left(\left|d^{Q}_i\right|\ZZ\right)$, the data of $\rho_i(p)$ uniquely determines $\tau_i(p)$ and $\tau_i(p+1)$, and the data of $\tau_i(p)$ uniquely determines $\rho_i(p)$ and $\rho_i(p-1)$, so that $\rho_i$ and $\tau_i$ can be built incrementally\bigskip.

It is easy to see that given a diagram in stair form $P:d\to e$, one has $\nopbs(P)= |e|-k_P+s_L(P)$. In the rest of this proof, our goal is to prove that for any gate-free diagram $Q:d\to e$, one has $\nopbs(Q)\geq |e|-k_Q+s_L(Q)$. Then, since $|e|-k_Q+s_L(Q)$ only depends on the semantics of $Q$, and diagrams in stair form reach this lower bound, this will imply that they are PBS-optimal\medskip.

Since any gate-free diagram $Q:d\to e$ can be deformed into a diagram of the form \eqref{snail} with $P$ trace-free, it suffices to prove, on the one hand, that the inequality holds for trace-free diagrams, and on the other hand, that it is preserved by the trace operation\medskip. 

To prove that the trace preserves the inequality, given a gate-free diagram $Q:d\oplus a\to e\oplus a$ with $a\in\{\vtype, \htype, \top\}$, it suffices to consider the sets $d^{Q}_i$ and $e^{Q}_j$ that contain the index of the bottom input (resp. output) wire, and to examine the possible cases (essentially, whether $i=j$, to which of the four cases described above the pairs $(d^{Q}_i,e^{Q}_i)$ and $(d^{Q}_j,e^{Q}_j)$ correspond, and whether the traced wire has type $\top$ or not). In each case, it suffices to build the bijections $\rho_i$ and $\tau_i$ and to look at the effect of applying the trace\medskip.

To prove that it holds for trace-free diagrams, we remark that up to deformation, a trace-free (gate-free) diagram can be written as a sequential composition of diagrams of the form $id_f\oplus g\oplus id_{f'}$ with $f,f'\in\{\vtype, \htype, \top\}^*$ and $g\in\biggl\{\stikzfig{beamsplitter},\linebreak[0]\stikzfig{beamsplitterNRRN-s-label},\linebreak[0]\stikzfig{beamsplitterRNNR-s-label},\linebreak[0]\stikzfig{beamsplitterNNBB-s-label},\linebreak[0]\stikzfig{beamsplitterBBNN-s-label},\linebreak[0]\stikzfig{beamsplitterRNBw},\linebreak[0]\stikzfig{beamsplitterBwRN},\linebreak[0]\stikzfig{beamsplitterNRwB},\linebreak[0]\stikzfig{beamsplitterwBNR},\linebreak[0]\sgtikzfig{cneg},\linebreak[0]\stikzfig{cswap}\biggr\}$ (where $id_f$ is inductively defined by $id_{\ov}=\ntikzfig{diagrammevide-xs}$ and $id_{f\oplus a}=id_f\oplus\mtikzfig{cfilcourt-s}$ for any $f\in\{\vtype, \htype, \top\}^*$ and $a\in\{\vtype, \htype, \top\}$). We call such diagrams \emph{layers}. Then we proceed by induction on the number of layers. The base case is that of an identity diagram $id_d:d\to d$, for which $k_{id_d}=|d|$ and $S_L(id_d)=0$, so that the inequality holds. It remains to prove that given any trace-free diagram $Q:d\to e$ satisfying the inequality, and any layer $id_f\oplus g\oplus id_{f'}$ of input type $e$, the composition $(id_f\oplus g\oplus id_{f'})\circ Q$ still satisfies the inequality. This can be done by considering the set(s) $e^Q_i$ and $e^Q_j$ that contain the indice(s) of the wire(s) of $Q$ where $g$ is plugged (together with the corresponding $d^Q_i$ and $d^Q_j$), and examining the possible cases. In each case, it suffices to build the bijections $\rho_i$ and $\tau_i$ and to look at the effect of appending the layer $(id_f\oplus g\oplus id_{f'})$.

\subsection{Proof of Theorem \ref{optimalitequanduneportedechaque}}\label{preuveoptimalitequanduneportedechaque}

Let $D_1:a\to b$ be an abstract diagram obtained from applying the query optimisation procedure followed by the PGT procedure, in which all gates bear different labels. We write it in the form
\[\tikzfig{CPNFnoire}\]
with $P$ in stair form and $U_1, ..., U_\ell\in\GG$ (where all wires and gates can be of arbitrary colours).

For each $(c,p)\in [a]$, let $p^{(1)}_{c,p}, ..., p^{(\ell_{c,p})}_{c,p}$ be the sequence of positions such that $w^{D_1}_{c,p}=U_{p^{(1)}_{c,p}}...U_{p^{(\ell_{c,p})}_{c,p}}$ (with $\ell_{c,p}=|w^{D_1}_{c,p}|$). This sequence is determined without ambiguity since the names $U_i$ are pairwise distinct. There exists a sequence of polarisations $c^{(1)}_{c,p}, ..., c^{(\ell_{c,p})}_{c,p}$ such that $\interpath{P}(c,p)=((c^{(1)}_{c,p},|b|+p^{(1)}_{c,p}),\epsilon)$, $\forall i\in\{1,...,\ell_{c,p}-1\},\interpath{P}(c^{(i)}_{c,p},|a|+p^{(i)}_{c,p})=((c^{(i+1)}_{c,p},|b|+p^{(i+1)}_{c,p}),\epsilon)$, and $\interpath{P}(c^{(\ell_{c,p})}_{c,p},|a|+p^{(\ell_{c,p})}_{c,p})=((c^{D_1}_{c,p},p^{D_1}_{c,p}),\epsilon)$ (where $\epsilon$ denotes the empty word).

Given a query-PBS-optimal diagram $D_1'$ equivalent to $D_1$, up to applying the query optimisation procedure and the PGT procedure, we can assume that $D_1'$ is in PGT form. Note that any diagram $E$ obtained from applying the query optimisation procedure necessarily satisfies that, for every $U\in\GG$, it contains exactly $\left\lfloor\sum_{(c,p)\in [a]}\frac{|w_{c,p}^E|_U}2\right\rfloor$ black gates labelled with $U$, and one red or blue gate labelled with $U$ if and only if $\sum_{(c,p)\in [a]}|w_{c,p}^E|_U$ is odd. Since the PGT procedure does not change the gates, it preserves this property. Therefore, $D_1$ and $D_1'$ both satisfy this property, and since they have the same semantics, this implies that they have the same gates up to turning some red gates into blue gates and vice-versa. That is, up to slightly deforming it in order to permute the gates, we can put $D_1'$ in the form
\[\tikzfig{CPNFnoirePprime}\]
with $P'$ in stair form. For each $(c,p)\in[a]$, there also exists a sequence of polarisations $c^{\prime(1)}_{c,p}, ..., c^{\prime(\ell_{c,p})}_{c,p}$ such that $\interpath{P'}(c,p)=((c^{\prime(1)}_{c,p},|b|+p^{(1)}_{c,p}),\epsilon)$, $\forall i\in\{1,...,\ell_{c,p}-1\},\interpath{P'}(c^{\prime(i)}_{c,p},|a|+p^{(i)}_{c,p})=((c^{\prime(i+1)}_{c,p},|b|+p^{(i+1)}_{c,p}),\epsilon)$, and $\interpath{P'}(c^{\prime(\ell_{c,p})}_{c,p},|a|+p^{(\ell_{c,p})}_{c,p})=((c^{D_1}_{c,p},p^{D_1}_{c,p}),\epsilon)$.

Let $P''$ be the diagram obtained from $P'$ by adding, for every position $q$ such that there exist $c,p$ and $i\in\{1,...,\ell_{c,p}\}$ satisfying $q=p^{(i)}_{c,p}$ and $c^{(i)}_{c,p}\neq c^{\prime(i)}_{c,p}$, a negation on input wire $|a|+q$ and on output wire $|b|+q$. Let $d$ be such that $P:a\oplus d\to b\oplus d$. It is easy to see that for every $(c,p)\in[a]$, and for every couple $(c,p)\in[a\oplus d]$ with $p\geq |a|$ that can be written as $(c^{\prime(i)}_{c',p'},|a|+p^{(i)}_{c',p'})$ for some $c',p'\in[a]$ and $i\in\{1,...,\ell_{c',p'}\}$, one has $\interpath{P''}(c,p)=\interpath{P}(c,p)$. Since $D_1$ is query-optimal, every black gate can be reached from two basis states $(c,p)\in[a]$ and every non-black gate can be reached from one basis state, which implies that every couple $(c,p)\in[b\oplus d]$ with $p\geq |b|$ can be written as $(c^{\prime(i)}_{c',p'},|b|+p^{(i)}_{c',p'})$, and therefore, every couple $(c,p)\in[a\oplus d]$ with $p\geq |a|$ can be written as $(c^{\prime(i)}_{c',p'},|a|+p^{(i)}_{c',p'})$. Hence, $P''$ has the same semantics as $P$. Since by construction, $P''$ contains the same number of PBS as $P'$, and by \cref{stairsoptimal}, $P$ is PBS-optimal, this implies that $P$ contains at most as many PBS as $P'$, that is, $D_1$ contains at most as many PBS as $D_1'$. Hence, $D_1$ is query-PBS-optimal.

\begin{remark}\label{rqpasoptimal}
The proof of \cref{optimalitequanduneportedechaque} uses the fact that the diagrams output by the query optimisation procedure, in addition of being query-optimal, have the property that if a gate is used only once (that is, if it is accessible from only one input state $(c,p)$, and a particle with this input state traverses only once the gate), then it is represented as red or blue. Note that a diagram in PGT form with only one query to each oracle may not be query-PBS-optimal if it contains a black gate used only once. For instance, \vspacebeforeline{3pt plus 1pt minus 1pt}$\tikzfig{ex-query-pbs-nonopt}$ is in PGT form but not query-PBS-optimal as it is equivalent to $\tikzfig{gateUrouge-t}$.
\end{remark}

\subsection{NP-Hardness Results}

\subsubsection{Proof of Theorem \ref{gatePBSNPhard}}\label{preuvegatePBSNPhard}

Let $\GG$ be a set of names. We will prove that the problem is already NP-hard when we restrict the input diagram to the family $\mathcal P$ defined as follows:

\begin{definition}
Given a word $w=w_0...w_{n-1}$ with $w_0,...,w_{n-1}\in\GG$ and a permutation $\sigma$ of $[n]$, we define $\sigma(w)$ as the rearranged word $w_{\sigma(0)}...w_{\sigma(n-1)}$.
\end{definition}

\begin{definition}
We denote by $\mathcal P$ the set of $\GG^*$-diagrams $D:\top^{\oplus n}\to\top^{\oplus n}$ such that there exists a word $w=w_0...w_{n-1}\in\GG^n$ and a permutation $\sigma$ of $[n]$ such that for every $p\in[n]$, $\interpath{D}(\Vpol,p)=((\Vpol,p),w_p)$ and $\interpath{D}(\Hpol,p)=((\Hpol,p),w_{\sigma(p)})$.
\end{definition}

We polynomially reduce this restricted problem from the \emph{maximum Eulerian cycle decomposition problem}, also called MAX-ECD \cite{caprara1999eulerian}, which consists in, given an Eulerian undirected graph $G$, finding a maximum-cardinality edge-partition of $G$ into cycles (that is, partitioning the set of edges of $G$ into the maximum number of cycles). Note that the NP-hardness of MAX-ECD follows directly from the NP-completeness of the problem of deciding whether $G$ can be edge-partitioned into triangles, which is proved in \cite{holyer1981edgepartition} (it corresponds to the case of the edge-partition into copies of the complete graph $K_3$).

The MAX-ECD problem is equivalent to the problem of, given an Eulerian graph $G$, 
finding a suitable orientation of its edges together with an edge-partition of the resulting directed graph into directed cycles, so that the number of cycles is maximal among all possible choices of orientation and partition. Indeed, given these, it suffices to erase the directions of the edges to get an undirected edge-partition into cycles, and given such a partition, it suffices to choose, for each cycle, one of the two possible ways of orienting it.

Given an Eulerian graph $G$, we construct a diagram of $\mathcal P$ as follows: first, we choose an arbitrary orientation of the edges of $G$ so as to get an Eulerian directed graph $\vec G$ (which can be done by following an Eulerian circuit of $G$, which itself can be found in polynomial time~\cite{Fleischner:Eulerian1991}) and we associate a label, more precisely an element of $\GG$, with each vertex of $G$, in such a way that any two distinct vertices bear distinct labels. Without loss of generality, we can assume that the vertices of $G$ are elements of $\GG$ and thereby identify them with their labels. We enumerate the edges of $\vec G$ as 
$e_0,...,e_{n-1}$. In $\vec G$ --- since it is Eulerian --- each vertex has in- and out-degree equal, that is, each vertex appears as many times as the head of an arrow as as the tail of an arrow, hence there exists a permutation $\sigma$ of $[n]$ and a word $w=w_0...w_{n-1}\in\GG^n$ such that for any $p\in[n]$, $e_p$ is of the form $(w_p,w_{\sigma(p)})$. We consider the following diagram:
\[C_{w,\sigma}\coloneqq\tikzfig{sigmawsigmainv}\] 
where $D_\sigma:\top^{\oplus n}\to\top^{\oplus n}$ is a $\xntikzfig{neg}$-free diagram in stair form\footnote{Note that the type of $D_\sigma$ forces all of its staircases to be made only of all-black PBS.}
such that for any $p\in[n]$, $\interpath{D_\sigma}(\Vpol,p)=((\Vpol,p),\epsilon)$ and $\interpath{D_\sigma}(\Hpol,p)=((\Hpol,\sigma(p)),\epsilon)$, and $D_\sigma^{-1}$ is the horizontal reflection of $D_\sigma$, which therefore satisfies that for any $p\in[n]$, $\interpath{D_\sigma^{-1}}(\Vpol,p)=((\Vpol,p),\epsilon)$ and $\interpath{D_\sigma^{-1}}(\Hpol,p)=((\Hpol,\sigma^{-1}(p)),\epsilon)$.

For any $p\in[n]$, one has $\interpath{C_{w,\sigma}}(\Vpol,p)=((\Vpol,p),w_p)$ and $\interpath{C_{w,\sigma}}(\Hpol,p)=((\Hpol,p),w_{\sigma(p)})$. In particular, $C_{w,\sigma}$ is in $\mathcal P$, and for any $p\in[n]$, $w^{C_{w,\sigma}}_{{\Vpol},p}$ is the tail of $e_p$ and $w^{C_{w,\sigma}}_{{\Hpol},p}$ is the head of $e_p$.\footnote{See the end of \cref{defactionsem} for the definition of $w^{C_{w,\sigma}}_{c,p}$. Note that we identify words of length $1$ with their single letter.}

Let $C_{w,\sigma}^{\mathrm{opt}}$ be a query-PBS-optimal diagram equivalent to $C_{w,\sigma}$. Up to applying the PGT procedure, which can be done in polynomial time and neither changes the gates nor increases the number of PBS, we can assume that $C_{w,\sigma}^{\mathrm{opt}}$ is in PGT form. That is, up to reordering some wires, it is of the form
\[\tikzfig{PGTw}\]
with $P$ in stair form. Since for every $c,p$, the word $w^{C_{w,\sigma}^{\mathrm{opt}}}_{c,p}$ has length $1$, $P$ is such that for any $c\in\hv$ and $p\in[n]$, one has $p^P_{c,p}\in\{n,...,2n-1\}$ and $p^P_{c,p+n}\in[n]$.

By looking at the semantics of a generic diagram in stair form (in particular by considering the functions $\rho_i$ and $\tau_i$ defined in the proof of \cref{stairsoptimal}), it is easy to see that this implies that up to reordering the wires on the sides of $P$, we can write $C_{w,\sigma}^{\mathrm{opt}}$ in the form
\[\tikzfig{PGTwsplit}\]
where $P_1$ and $P_2$ are two diagrams in stair form. Up to a few more deformations, $C_{w,\sigma}^{\mathrm{opt}}$ is of the form
\[\tikzfig{P1wP2}.\]
Due to the semantics of $C_{w,\sigma}^{\mathrm{opt}}$, for any $c,c'\in\hv$ and $p,p'\in[n]$, if $\interpath{P_1}(c,p)=((c',p'),\epsilon)$ then $\interpath{P_2}(c',p')=((c,p),\epsilon)$. Hence, one can replace $P_1$ or $P_2$ by the horizontal reflection of the other without changing the semantics. This implies that $P_1$ and $P_2$ contain the same number of PBS (otherwise, by replacing the one with more PBS by the horizontal reflection of the other, one would obtain a diagram equivalent to $C_{w,\sigma}^{\mathrm{opt}}$ with strictly fewer PBS, which would contradict its query-PBS-optimality), and subsequently, that the diagram $C_{w,\sigma}^{\mathrm{opt}\prime}$ obtained by replacing $P_2$ by the horizontal reflection $P_1^{-1}$ of $P_1$ is still query-PBS-optimal.

Up to slightly deforming $P_1$, we can write it in the form
\[\tikzfig{superpetpermsnpext}\]
where $\sigma_1$ and $\sigma_2$ are permutations of the wires, the $C_k$ are of the form \ptikzfig{escalierbsnoir-abrege}{0.375}{0.6}1\nolinebreak, and \ptikzfig{negpotentiel-m}{0.4}{0.8}1 denotes either \ptikzfig{filcourt}{0.4}{0.8}1 or \ptikzfig{neg}{0.4}{0.8}1. Using this, we can write $C_{w,\sigma}^{\mathrm{opt}\prime}$ in the form
\[\tikzfig{P1wP1invsuperpetpermsnpext}\]
where given any gate-free diagram $D$, $D^{-1}$ denotes its horizontal reflection.

Since the diagram is symmetric, we can remove the negations in the middle without changing the semantics of the diagram or its query-PBS-optimality. This gives us
\[C_{w,\sigma}^{\mathrm{opt}\prime\prime}\coloneqq\tikzfig{P1wP1invsuperpetpermsnpextext}.\]
Let us consider the diagram
\[C_{w,\sigma}^{\mathrm{opt,\not\neg}}\coloneqq\tikzfig{P1wP1invsuperpetperms}\]
obtained by removing all negations on the sides of $C_{w,\sigma}^{\mathrm{opt}\prime\prime}$. For each $p\in[n]$ such that there was a negation on the $p$th input and output wire, one now has $\interpath{C_{w,\sigma}^{\mathrm{opt,\not\neg}}}({\Vpol},p)=(({\Vpol},p),w_{\sigma(p)})$ and $\interpath{C_{w,\sigma}^{\mathrm{opt,\not\neg}}}({\Hpol},p)=(({\Hpol},p),w_p)$. Let us consider the directed graph $\tilde G$ obtained by reversing the edge $e_p$ in $\vec G$ for every such $p$. For every $p\in[n]$, we denote by $\tilde e_p$ the $p$th edge of $\tilde G$, which is either $e_p$ or its reverse $(w_{\sigma(p)},w_p)$. Then for every $p\in[n]$, $w^{C_{w,\sigma}^{\mathrm{opt,\not\neg}}}_{{\Vpol},p}$ is the tail of $\tilde e_p$ and $w^{C_{w,\sigma}^{\mathrm{opt,\not\neg}}}_{{\Hpol},p}$ is the head of $\tilde e_p$.

$\tilde G$ can be edge-partitioned into $r$ cycles as follows: For each $k\in\{1,...,r\}$, let $n_k$ be such that $C_k:\top^{\oplus n_k}\to\top^{\oplus n_k}$. Let also $N_k\coloneqq\sum_{j=1}^{k-1}n_k$. By abuse of notation, we denote by $\sigma_1$ and $\sigma_2$ the permutations of $[n]$ respectively associated with the diagrams $\sigma_1$ and $\sigma_2$, so that $\forall c,p, \interpath{\sigma_1}(c,p)=((c,\sigma_1(p)),\epsilon)$ and $\interpath{\sigma_2}(c,p)=((c,\sigma_2(p)),\epsilon)$. Note that for any $k\in\{1,...,r\}$ and any $p\in[n_k]$, one has
\begin{itemize}
\item $\forall i\in[n_k], \interpath{C_{w,\sigma}^{\mathrm{opt,\not\neg}}}({\Vpol},\sigma_1^{-1}(N_k+i))=(({\Vpol},\sigma_1^{-1}(N_k+i)),w_{\sigma_2(N_k+i)})$
\item $\forall i\in[n_k-1], \interpath{C_{w,\sigma}^{\mathrm{opt,\not\neg}}}({\Hpol},\sigma_1^{-1}(N_k+i))=(({\Hpol},\sigma_1^{-1}(N_k+i)),w_{\sigma_2(N_k+i+1)})$
\item $\interpath{C_{w,\sigma}^{\mathrm{opt,\not\neg}}}({\Hpol},\sigma_1^{-1}(N_k+n_k-1))=(({\Hpol},\sigma_1^{-1}(N_k+n_k-1)),w_{\sigma_2(N_k)})$.
\end{itemize}
Hence, there is a cycle $w_{\sigma_2(N_k)}\to w_{\sigma_2(N_k+1)}\to\cdots\to w_{\sigma_2(N_k+n_k-1)}\to w_{\sigma_2(N_k)}$ in $\tilde G$, associated with $C_k$. Considering the cycle associated with each $C_k$ gives us an edge-partition of $\tilde G$ into $r$ cycles, since these cycles are edge-disjoint and cover all edges of $\tilde G$.

It remains to prove that there is no orientation of the edges of $G$ such that the resulting directed graph can be edge-partitioned into more than $r$ cycles. Reasoning by contradiction, assume that there exists such an orientation yielding an Eulerian directed graph $\dbtilde G$ with an edge-partition into $r'$ cycles with $r'>r$. We enumerate these cycles in an arbitrary order, and denote by $m_k$ the length of the $k$th cycle, for $k\in\{1,...,r'\}$. We denote by $\dbtilde e_p$ the $p$th edge of $\dbtilde G$, which is either $\tilde e_p$ or its reverse. Note that the in- and out-degree of each vertex are the same in $\dbtilde G$ as in $\vec G$ and $\tilde G$, so that there exist two permutations $\tau_1$ and $\tau_2$ of $[n]$ such that $\forall p\in[n], \dbtilde e_p=(w_{\tau_1(p)},w_{\tau_2(p)})$. Therefore, there exists an enumeration of $[n]$ as $(i^k_\ell)_{k\in\{1,...,r'\},\ell\in[m_k]}$, such that the $k$th cycle can be written \[w_{\tau_1(i_0^k)}\xrightarrow{\dbtilde e_{i_0^k}}w_{\tau_1(i_1^k)}\xrightarrow{\dbtilde e_{i_1^k}}\cdots \xrightarrow{\dbtilde e_{i_{m_k-2}^k}}w_{\tau_1(i_{m_k-1}^k)}\xrightarrow{\dbtilde e_{i_{m_k-1}^k}}w_{\tau_1(i_0^k)}.\]

Let $s$ be the permutation of $[n]$ such that $\forall k,\ell,s(i^k_\ell)=M_k+\ell$, where $M_k\coloneqq\sum_{j=1}^{k-1}m_k$. We make the same abuse of notation as for $\sigma_1$ and $\sigma_2$ by also denoting by $s$ the diagram that is a permutation of the wires according to $s$. We consider the following diagram, where for each $k\in\{1,...,r'\}$, $C'_k:\top^{\oplus m_k}\to\top^{\oplus m_k}$ is of the form \ptikzfig{escalierbsnoir-abrege}{0.375}{0.6}1, and for each $p\in[n]$, the \!\ptikzfig{negpotentiel-m}{0.4}{0.8}1 on wire $p$ is \ptikzfig{filcourt}{0.4}{0.8}1 if $\dbtilde e_p$ and $\tilde e_p$ have the same direction, or \ptikzfig{neg}{0.4}{0.8}1 otherwise:
\[C_{w,\tau}\coloneqq\tikzfig{diagGdbtildesuperpetpermsnpext}.\]
For any $k\in\{1,...,r'\}$ and $\ell\in[m_k]$, if $\dbtilde e_p$ and $\tilde e_p$ have the same direction then one has $\interpath{C_{w,\tau}}({\Vpol},i^k_\ell)=(({\Vpol},i^k_\ell),w_{\tau_1(i^k_\ell)})$ and $\interpath{C_{w,\tau}}({\Hpol},i^k_\ell)=(({\Hpol},i^k_\ell),w_{\tau_1(i^k_{\ell+1\bmod m_k})})$, and if they have opposite directions then one has $\interpath{C_{w,\tau}}({\Vpol},i^k_\ell)=(({\Vpol},i^k_\ell),w_{\tau_1(i^k_{\ell+1\bmod m_k})})$ and $\interpath{C_{w,\tau}}({\Hpol},i^k_\ell)=(({\Hpol},i^k_\ell),w_{\tau_1(i^k_\ell)})$. That is, in any case, $w^{C_{w,\tau}}_{{\Vpol},i^k_\ell}$ is the tail of $\tilde e_{i^k_\ell}$ and $w^{C_{w,\tau}}_{{\Hpol},i^k_\ell}$ is its head. Since the indices $i^k_\ell$ span $[n]$ entirely, this implies that $C_{w,\tau}$ has the same semantics as $C_{w,\sigma}^{\mathrm{opt,\not\neg}}$. But $C_{w,\tau}$ contains $n-r'$ PBS whereas $C_{w,\sigma}^{\mathrm{opt,\not\neg}}$ contains $n-r$ PBS, so that $C_{w,\tau}$ contains strictly fewer PBS than $C_{w,\sigma}^{\mathrm{opt,\not\neg}}$, which contradicts the query-PBS-optimality of $C_{w,\sigma}^{\mathrm{opt,\not\neg}}$.

This proves that the edge-partition of $\tilde G$ into cycles obtained from $C_{w,\sigma}^{\mathrm{opt,\not\neg}}$ has maximum number of cycles among all possible choices of orientation and partition. In other words, the undirected edge-partition of $G$ obtained by erasing the directions of the edges in this edge-partition of $\tilde G$ has maximum number of cycles. This finishes the reduction.

\subsubsection{Proof of Corollary \ref{gatePBSnegfreeNPhard}}\label{preuvegatePBSnegfreeNPhard}

We reduce this problem from the problem maxDCD of, given an Eulerian directed graph $\vec G$, finding a maximum-cardinality edge-partition of $\vec G$ into directed cycles. This problem is defined and proved to be NP-hard in \cite{amir2010permutation}.

The proof has the same structure as the proof of \cref{gatePBSNPhard} : we define $C_{w,\sigma}$ in the same way, and we consider an equivalent diagram $C_{w,\sigma}^{\mathrm{opt}}$ which is now query-PBS-optimal only among negation-free diagrams. Since the PGT procedure preserves the property of being negation-free, we can still assume that it is in PGT form. With the same arguments as in the proof of \cref{gatePBSNPhard}, we can do the same deformations and define $C_{w,\sigma}^{\mathrm{opt}\prime}$ in the same way. This time, $C_{w,\sigma}^{\mathrm{opt}\prime}$ is negation-free, so that $C_{w,\sigma}^{\mathrm{opt,\not\neg}}=C_{w,\sigma}^{\mathrm{opt}\prime}$ and $\tilde G=\vec G$, so the construction of the proof of \cref{gatePBSNPhard} gives us an edge-partition of $\vec G$. To prove that this edge-partition has maximum cardinality, we only have to prove that there is no edge-partition of $\vec G$ into strictly more cycles, and the proof of this is the same as for \cref{gatePBSNPhard} (with the difference that we necessarily have $\dbtilde G=\vec G$, which allows for many simplifications).

\subsubsection{Proof of Corollary \ref{gateplusPBSNPhard}}\label{preuvegateplusPBSNPhard}

The proof relies on the following lemma:
\begin{lemma}\label{negfreeisbetter}
Given any diagram $D$ of $\mathcal P$ which is query-optimal and contains at least one negation, there exists an equivalent negation-free diagram with the same gates containing at most $\nopbs(D)+\noneg(D)-1$ PBS.
\end{lemma}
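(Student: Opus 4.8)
The plan is to exploit the Eulerian-graph correspondence already developed in the proof of \cref{gatePBSNPhard}. To a diagram $D\in\mathcal P$, with associated word $w=w_0\cdots w_{n-1}$ and permutation $\sigma$, I would associate the directed multigraph $\vec G$ on vertex set $\GG$ whose $p$-th edge is $e_p=(w_p,w_{\sigma(p)})$. Since $\sigma$ is a bijection, every vertex has equal in- and out-degree, so $\vec G$ is Eulerian. The dictionary established in the proof of \cref{gatePBSNPhard} reads as follows: a \emph{negation-free} diagram of $\mathcal P$ realizing these words corresponds exactly to a partition of the $n$ edges of $\vec G$ into directed cycles, a decomposition into $r$ cycles being realized by $r$ staircases and hence exactly $n-r$ PBS. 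Consequently the minimal number of PBS reachable by a negation-free equivalent of $D$ equals $n-r_{\max}(\vec G)$, where $r_{\max}(\vec G)$ denotes the maximum number of cycles in a directed-cycle decomposition of $\vec G$.

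First I would analyse the negations. Putting $D$ in PGT form (which changes neither the gates nor the number of PBS) and then bringing its stair-form part into the symmetric shape $P_1\cdot(\text{gates})\cdot P_1^{-1}$ while pushing all negations onto the input/output wires — exactly as in the construction of $C_{w,\sigma}^{\mathrm{opt,\not\neg}}$ in the proof of \cref{gatePBSNPhard} — shows that each negation can be taken to sit on a single wire and that erasing it reverses the orientation of the corresponding edge $e_p$. Hence erasing all the negations turns $D$ into a negation-free diagram of $\mathcal P$ with the same gates and the same number $\nopbs(D)$ of PBS, realizing the graph $\tilde G$ obtained from $\vec G$ by reversing a set $S$ of $k\leq\noneg(D)$ edges. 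In particular $\tilde G$ is again Eulerian, and it admits a directed-cycle decomposition into $r:=n-\nopbs(D)$ cycles.

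It then remains to bound $r_{\max}(\vec G)$ from below. Since both $\vec G$ and $\tilde G$ are Eulerian, the reversed set $S$ must be balanced at every vertex, so, with its orientation in $\vec G$, it decomposes into $c\geq 1$ edge-disjoint directed cycles of total length $k$. The combinatorial heart of the argument is the estimate that reversing a single directed cycle of length $\ell$ in an Eulerian digraph changes the maximal number of cycles in a decomposition by at most $\ell-1$; I would prove this by taking an optimal decomposition, deleting the $\ell$ reversed edges (which breaks the decomposition into at most $\ell$ additional open paths) and splicing the $\ell$ re-oriented edges back in so as to reconnect those paths into cycles. Applying this to the $c$ cycles of $S$ one at a time gives $r_{\max}(\vec G)\geq r_{\max}(\tilde G)-\sum_i(\ell_i-1)=r_{\max}(\tilde G)-(k-c)\geq r_{\max}(\tilde G)-k+1\geq r-k+1$, using $c\geq 1$ and $r_{\max}(\tilde G)\geq r$. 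A negation-free diagram realizing a maximal decomposition of $\vec G$ therefore has the same gates as $D$ and at most $n-(r-k+1)=\nopbs(D)+k-1\leq\nopbs(D)+\noneg(D)-1$ PBS, which is the claim.

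I expect the single-cycle reversal estimate to be the main obstacle, as it is the only genuinely combinatorial step. The delicate point is that reversing a single edge would break the Eulerian property, which forces one to reverse whole directed cycles of $S$ at a time; it is precisely the fact that $S$ contains at least one full cycle ($c\geq 1$) that yields the decisive $-1$ in the bound. By contrast, the diagrammatic steps — reduction to PGT form, symmetrisation, and reading off edge reversals from negations — are routine once the dictionary of \cref{gatePBSNPhard} is available.
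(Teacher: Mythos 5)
Your combinatorial core is sound: the splicing argument showing that reversing a directed cycle of length $\ell$ in an Eulerian digraph decreases the maximum number of cycles in an edge-decomposition by at most $\ell-1$, together with the observation that the reversed set, being balanced, contains at least one full cycle (whence the decisive $-1$), is correct, and it is essentially a graph-theoretic shadow of the case analysis that closes the paper's own proof. But the bridge you build between negations and edge reversals has a genuine gap, and it is exactly where the paper's proof spends most of its effort. You assume that after symmetrisation ``each negation can be taken to sit on a single [input/output] wire'', so that erasing all negations produces a negation-free diagram of $\mathcal P$ realizing a reorientation $\tilde G$, with $k\leq\noneg(D)$ edges reversed. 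This fails in general. A negation pair on a boundary wire $p$ flips \emph{both} photons entering at position $p$, and a negation pair on a gate wire flips \emph{both} visitors of that gate; neither can produce the routing that negations generically enable, namely a gate visited by two photons that entered with the \emph{same} polarisation, one of which arrives flipped --- see the right-hand sides of Equations \eqref{mergegatesRR} and \eqref{mergegatesBB}. For such routings the negations necessarily sit on interior, generally coloured, single-photon wires; erasing a negation on a coloured wire is not even type-correct, and erasing it does not correspond to reversing an edge. In graph terms, a $\mathcal P$-diagram with negations corresponds to a decomposition of the \emph{underlying undirected} graph into closed walks that need not respect the orientation, not merely to a reorientation of $\vec G$ plus a directed-cycle decomposition. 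Handling this case is the bulk of the paper's proof: it pairs each gate reached by two $\Vpol$-photons with a same-label gate reached by two $\Hpol$-photons (the quantity $d(D)$), bounds $2d(D)\leq\noneg(D)$ by a semantic invariant ($f(P)$: every polarisation-flipped photon path must cross a negation of the \emph{original} $D$, and each negation serves at most two paths), and performs an explicit gate-merging surgery whose PBS cost is controlled by how staircases merge. None of this appears in your proposal; note that even your weaker claim $k\leq\noneg(D)$ needs an invariant of this kind, since the rewrites used to reach the symmetric form do not preserve the number of negations (the variants of Equation \eqref{nbscol} turn one negation into two).

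There is also a quantitative slip. Realizing a directed-cycle decomposition with $r$ cycles costs $2(n-r)$ PBS, not $n-r$: the staircases must appear on both sides of the gate layer ($P_1$ and its reflection), and the partition lower bound in the proof of Theorem \ref{stairsoptimal} shows this is forced. (The count ``$n-r$'' stated in the proof of \cref{gatePBSNPhard} is harmless there, because only comparisons between two diagrams of the same shape matter; in your argument the absolute count matters.) With the corrected count your final bound becomes $\nopbs(D)+2k-2$, so you need $k\leq\noneg(D)/2$, i.e.\ each reversal must be charged a \emph{pair} of negations --- true, and provable by the $f(P)$-type argument, but strictly stronger than what you claim; with only $k\leq\noneg(D)$ the bound degrades to $\nopbs(D)+2\noneg(D)-2$, which does not prove the lemma. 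Finally, the case $k=0$, where your use of $c\geq1$ breaks down, needs separate (trivial) treatment. The graph-theoretic route could be repaired --- replace ``reversed edges'' by ``orientation-inconsistent transitions of closed walks'' and charge each inconsistency to negations via the semantic invariant --- but as written the proposal establishes the lemma only for diagrams whose negations already act as per-position polarisation swaps, which is the easy special case.
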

\begin{proof}[Proof of \cref{gateplusPBSNPhard}]
Note that the proofs of \cref{gatePBSNPhard,gatePBSnegfreeNPhard} actually give us slightly stronger results than the exact statements of \cref{gatePBSNPhard,gatePBSnegfreeNPhard}, since they in fact consider the restricted versions of their respective problems in which the input diagram is required to be in $\mathcal P$.

Given \cref{negfreeisbetter}, \cref{gateplusPBSNPhard} follows from this stronger version of \cref{gatePBSnegfreeNPhard}. Indeed, it suffices to prove that the problem of optimising $\nopbs(D)+\alpha \noneg(D)$ together with the queries is already NP-hard when restricted to the case where the input diagram $D$ is negation-free and in $\mathcal P$. Given such a diagram $D$, any query-optimal diagram $D'$ equivalent to $D$ such that $\nopbs(D')+\alpha\noneg(D')$ is minimal, is negation-free. Indeed, if it was not, then, since it is in $\mathcal P$, by \cref{negfreeisbetter} there would exist an equivalent query-optimal, negation-free diagram $D''$ that would satisfy $\nopbs(D'')+\alpha\noneg(D'')=\nopbs(D'')\leq\nopbs(D')+\noneg(D')-1<\nopbs(D')+\alpha\noneg(D')$, which would contradict the fact that $\nopbs(D')+\alpha\noneg(D')$ is minimal. Thus, finding a query-optimal diagram $D'$ equivalent to $D$ such that $\nopbs(D')+\alpha\noneg(D')$ is minimal, amounts to finding a diagram equivalent to $D$ and query-PBS-optimal among negation free diagrams.
\end{proof}

It remains to prove \cref{negfreeisbetter}:

\begin{proof}[Proof of \cref{negfreeisbetter}]
Let $D:\top^{\oplus n}\to\top^{\oplus n}$ be a query-optimal diagram of $\mathcal P$ containing at least one negation. Let us first apply Step \ref{putinsnailform} of the PGT procedure, that is, by mere deformation, we put $D$ in the form
\[\tikzfig{PGTw}\]
with $P$ gate-free. Let $f(P)$ be the number of positions $p$ such that $c^P_{\Vpol,p}=\Hpol$ (note that this number does not depend on the way of deforming $D$). Since the semantics of $P$ applies a permutation to the couples $(c,p)$, there are the same number of positions $p$ such that $c^P_{\Hpol,p}=\Vpol$, so that there are $2f(P)$ couples $(c,p)$ such that $c^D_{c,p}\neq c$. Each photon that enters $P$ with a basis state corresponding to one of these couples gets its polarisation changed while traversing $P$, which means that it traverses at least one negation. Since each negation can be reached from at most two basis states, this implies that $f(P)\leq \noneg(D)$.

Note that additionally, due to the semantics of $D$ (since it is in $\mathcal P$), for any $c,c'\in\hv$ and $p,p'\in[2n]$ such that $\interpath{P}(c,p)=(c',p')$, one has $p\in[n]$ if and only if $p'\in\{n,...,2n-1\}$ and vice-versa, and $\interpath{P}(c',p')=(c,p)$. Combined with the fact that $\interpath P$ applies a permutation to the couples $(c,p)$, this implies that there are the same number of positions $p$ such that respectively: $p\in[n]$ and $c^P_{\Vpol,p}=\Hpol$; $p\in[n]$ and $c^P_{\Hpol,p}=\Vpol$; $p\in\{n,...,2n-1\}$ and $c^P_{\Vpol,p}=\Hpol$; $p\in\{n,...,2n-1\}$ and $c^P_{\Hpol,p}=\Vpol$. Since the sum of these four numbers of positions is equal to $2f(P)$, this implies that $f(P)$ is even and that the number of positions $p$ is equal to $\dfrac{f(P)}{2}$ in each case.

By applying the rest of the PGT procedure, we put $P$ in stair form and thereby transform $D$ into a diagram $D'$ in PGT form. With a similar argument as in the proof of \cref{gatePBSNPhard}, we can put $D'$ in the form
\[\tikzfig{P1wP2}\]
where $P_1$ and $P_2$ are in stair form. Since $D\in\mathcal P$, for any $c,c'\in\hv$ and $p,p'\in[n]$, if $\interpath{P_1}(c,p)=((c',p'),\epsilon)$ then $\interpath{P_2}(c',p')=((c,p),\epsilon)$. Hence, $P_1$ has the same semantics as the horizontal reflection of $P_2$ and vice-versa. By \cref{stairsoptimal}, this implies that $P_1$ and $P_2$ contain the same number of PBS. Therefore, by replacing $P_2$ by the horizontal reflection of $P_1$, we get a diagram $D''$ which is still equivalent to $D$ and still has at most as many PBS as $D$. As in the proof of \cref{gatePBSNPhard}, we can write $D''$ in the form
\[\tikzfig{P1wP1invsuperpetpermsnpext}\]
where $\sigma_1$ and $\sigma_2$ are permutation of the wires, the $C_k$ are staircases (see \cref{defNMF}), and given any gate-free diagram $E$, $E^{-1}$ denotes its horizontal reflection. Since $D''$ is symmetric, we can remove the negations in the middle without changing its semantics, which gives us
\[D'''\coloneqq\tikzfig{P1wP1invsuperpetpermsnpextext}.\]
Let
\[L\coloneqq\tikzfig{superpetpermsnpextgauche}.\]
For every letter $U\in\{w_0,...,w_{n-1}\}$, let $d_U(D)$ be the number of positions $p$ such that for some $p_1,p_2$, one has $p^L_{\Vpol,p_1}=p^L_{\Vpol,p_2}=p$ and $w_{p}=U$. Since $D\in\mathcal P$, $U$ appears as many times among the $w^D_{\Vpol,p}$ as among the $w^D_{\Hpol,p}$. Since $\forall c,p, w^D_{c,p}=w_{p^L_{c,p}}$, this implies that the number of positions $p'$ such that for some $p'_1,p'_2$ one has $p^L_{\Hpol,p'_1}=p^L_{\Hpol,p'_2}=p'$ and $w_{p'}=U$ is also $d_U(D)$. We arbitrarily associate a position $p'$ of the second kind with each position $p$ of the first kind, so as to distribute these $2d_U(D)$ positions into $d_U(D)$ couples $(p,p')$. By doing so for every $U\in\{w_0,...,w_{n-1}\}$, we obtain $d(D)$ couples, where $d(D)\coloneqq\displaystyle\sum_{U\in\{w_0,...,w_{n-1}\}}d_U(D)$, with the property that every position $q$ satisfying for some $p_1,p_2,c$, $p^L_{c,p_1}=p^L_{c,p_2}=q$, appears exactly once among these couples (as a left element if $c=\Vpol$, and as a right element if $c=\Hpol$).

For each position $q$ among these $2d(D)$ positions, there is exactly one polarisation $c$ such that $c^L_{c,p}\neq c$. This property is not affected by appending negations at the right of $L$, so that there is also exactly one polarisation $c$ (for each $q$) such that $c^{P_1}_{c,q}\neq c$. By definition of $P_1$, there is also exactly one polarisation $c$ such that $c^{P}_{c,q}\neq c$. Since all of these $2d(D)$ positions $q$ are in $[n]$, this implies that $2\,\dfrac{f(P)}{2}\geq 2d(D)$. Since $f(P)\leq \noneg(D)$, this inequality implies that $2d(D)\leq \noneg(D)$.

For each of the $d(D)$ couples $(p,p')$, we do the following transformation in $D'''$ (up to deformation):
\[\tikzfig{gateUsurgateU}\ \to\ \tikzfig{bsUhUbbs}.\]
Each time, we put $L$ in stair form again, we transform $L^{-1}$ symmetrically so that it remains the horizontal reflection of $L$, and we remove any negations at the right of $L$ and at the left of $L^{-1}$, which is possible because $D'''$ remains symmetric.
One can check that if the PBS appended to $L$ is connected to two different $C_i$s, then this results in merging them together, so that the number of PBS stays the same (after adding the additional PBS), and if it is connected to a single $C_i$ then this results in splitting it into two staircases, so that the number of PBS in $L$ decreases by $2$. The behaviour of $L^{-1}$ is symmetric. At the end, the total number of PBS is at most $\nopbs(D)+2d(D)$, and the equality can be reached only if at every step two $C_i$s have been merged.
This gives us a diagram
\[D''''\coloneqq\tikzfig{LwLinvprimes}\]
with $L'$ of the form
\[\tikzfig{superpetpermsnpextgaucheprimes}\]
in which there are no couples of positions $p_1,p_2$ such that $p^L_{\Vpol,p_1}=p^L_{\Vpol,p_2}$ anymore. In particular, for each position $p$ such that for some $p_1$, $\interpath{L'}(\Vpol,p_1)=((\Hpol,p),\epsilon)$, there exists $p_2$ such that $\interpath{L'}(\Hpol,p_2)=((\Vpol,p),\epsilon)$. For each of these positions, we apply the following transformation:
\[\tikzfig{gateU}\ \to\ \tikzfig{negUneg}\]
This gives us a diagram
\[D^{(5)}\coloneqq\tikzfig{LwLinvsecondes}\]
with $L''$ such that for all $c,p$, $c^{L''}_{c,p}=c$. By putting $L''$ in normal form, then in stair form again, we get a diagram $L'''$ without negations and with at most as many $PBS$ as $L''$. In particular, the resulting diagram $D^{(6)}$ (after proceeding symmetrically in ${L''}^{-1}$) contains at most $\nopbs(D)+2d(D)$ PBS. If it has strictly fewer PBS, or if $2d(D)< \noneg(D)$, then we have the desired result. If it has exactly $\nopbs(D)+2d(D)$ PBS and $2d(D)=\noneg(D)$, then this means in particular that at each of the steps of the transformation of $D'''$ into $D''''$, two $C_i$s have been merged. By hypothesis, $\noneg(D)\geq1$, so the fact that $2d(D)=\noneg(D)$ implies that $d(D)>0$. This implies that there has been at least one step in the transformation of $D'''$ into $D''''$, in which two staircases connected to two gates with the same label have been merged. Since these staircases have not been split, there is at least one couple of gates in $D^{(5)}$ that have the same label and are connected to the same staircase. Then by applying to them the same transformation as before:
\[\tikzfig{gateUsurgateU}\ \to\ \tikzfig{bsUhUbbs}\]
and putting $L'''$ (and ${L'''}^{-1}$) in stair form again, we get a diagram with $\nopbs(D)+\noneg(D)-2$ PBS and no negations, which is equivalent to $D$.
\end{proof}

\section{Derivations of Equations \eqref{fusionUV-} to \eqref{mergegatesBB}}\label{preuveequationsgateoptproc}
Note that \cref{fusionUV-} is a particular case of \cref{fusionUVR} and that \cref{fusionUVB} is a particular case of \cref{fusionUVBgenerale}.
To prove \cref{fusionUV}, we derive a more general version, analogous to \cref{fusionUVR,fusionUVBgenerale}: for any monoid $\M$ and any $U,V\in\M$,\medskip

\begin{longtable}{RCL}
\tikzfig{gateUgateVbis}&\eqdeuxeqref{bsbssplitmerge}{ubscol}&\tikzfig{NbsRBgateUgateVRBbsN}\\\\
&\eqdeuxeqref{fusionUVR}{fusionUVBgenerale}&\tikzfig{NbsRBgateVURBbsN}\\\\
&\eqdeuxeqref{ubscol}{bsbssplitmerge}&\tikzfig{gateVUbis}
\end{longtable}\medskip

To prove \cref{mergegatesRB}, we have:\medskip

\begin{longtable}{RCL}
\tikzfig{gateURsurgateUB-label}&\eqeqref{bsbsmergesplit}&\tikzfig{bsbsgateURsurgateUB}\\\\
&\eqeqref{ubscol}&\tikzfig{RBbsNUbsRB}
\end{longtable}\medskip

To prove \cref{mergegatesBR}, we have:\medskip

\begin{longtable}{RCL}
\tikzfig{gateUBsurgateUR-label}&=&\tikzfig{xgateURsurgateUBx}\\\\
&\eqeqref{mergegatesRB}&\tikzfig{BRxbsNUbsxBR}\\\\
&\eqdeuxeqref{bsxcol}{xbscol}&\tikzfig{BRbsNUbsBR}
\end{longtable}\medskip

To prove \cref{mergegatesRR}, we have:\medskip

\begin{longtable}{RCL}
\tikzfig{gateURsurgateUR-label}&\eqdeuxeqref{negnegRBR}{neguRB}&\tikzfig{gateURsurneggateUBneg}\\\\
&\eqeqref{mergegatesRB}&\tikzfig{RRnbbsNUbsnbRR}
\end{longtable}\medskip

To prove \cref{mergegatesBB}, we have:\medskip

\begin{longtable}{RCL}
\tikzfig{gateUBsurgateUB-label}&\eqdeuxeqref{negnegBRB}{neguRB}&\tikzfig{neggateURnegsurgateUB}\\\\
&\eqeqref{mergegatesRB}&\tikzfig{BBnhbsNUbsnhBB}
\end{longtable}

\section{PGT Procedure}

\subsection{Full Description of the Procedure}\label{PBSoptprocillustree}

\begin{enumerate}
\setcounter{enumi}{-1}
\item During the whole procedure, every time there are two consecutive negations, we remove them using Equation \eqref{negnegRBR}, \eqref{negnegBRB} or their all-black version:
\begin{equation}\label{negnegnoir}\tikzfig{negneg}\ =\ \tikzfig{filcourt}\end{equation}
\item\label{putinsnailform} Deform $D_0$ to put it in the form \eqref{snail} with $P$ gate-free. The goal of the following steps is to put $P$ in stair form.
\item\label{splitallPBS} Split all PBS of the form $\stikzfig{abbeamsplitter}$ into combinations of $\stikzfig{beamsplitterNRwB}$, $\stikzfig{beamsplitterwBNR}$, $\stikzfig{beamsplitterRNBw}$ and $\stikzfig{beamsplitterBwRN}$, using Equations \eqref{bsbigebre} to \eqref{bsbbnn}.
\item As long as there are two PBS connected by a black wire, with possibly a black negation on this wire, push this negation out (if present) using Equation \eqref{nbscol}, and cancel the PBS together using Equation \eqref{bsbsmergesplit}. It may be necessary to flip the PBS upside down using Equation \eqref{bsxcol} and/or \eqref{xbscol} in order to be able to apply Equations \eqref{nbscol} and \eqref{bsbsmergesplit}. Note also that to cancel the two PBS together one may have to use dinaturality:
\[\ptikzfig{tracebsNRwBgapbsRNBw-raccourci}{0.5}{0.8}1\ =\ \ptikzfig{traceBdanstraceRgap-label-raccourci}{0.5}{0.8}1\]
When there are not two such PBS anymore, all black wires are connected to at least one side of $P$ (possibly through negations), and the PBS are connected together with red and blue wires with possibly negations on them.
\item Remove all loops using the following equations:\newline
\begin{tabularx}{0.9\linewidth}{X@{\qquad\qquad\qquad\qquad}X}
\begin{equation}\label{bouclevideR}\vphantom{\tikzfig{bouclevideneg}}\tikzfig{bouclevideR-label}\ =\ \ptikzfig{diagrammevide}1{0.5}{0.465}\end{equation}&
\begin{equation}\label{bouclevideB}\vphantom{\tikzfig{bouclevideneg}}\tikzfig{bouclevideB-label}\ =\ \ptikzfig{diagrammevide}1{0.5}{0.465}\end{equation}\\
\begin{equation}\label{bouclevideN}\vphantom{\tikzfig{bouclevideneg}}\tikzfig{bouclevidecentree}\ =\ \ptikzfig{diagrammevide}1{0.5}{0.465}\end{equation}&
\begin{equation}\label{bouclevideneg}\tikzfig{bouclevideneg}\ =\ \ptikzfig{diagrammevide}1{0.5}{0.465}\end{equation}
\end{tabularx}
\newline Note that since $D_0$ is query-optimal, there cannot be loops containing gates at this point.
\item Deform $P$ to put it in the form \eqref{stairs} with $\sigma_1$ and $\sigma_2$ being wire permutations and the $C_i$ being trace-free and connected. It remains to transform the $C_i$ into staircases. Up to additional deformation of $P$ in order to reorder the input and output wires of the $C_i$, and to using Equations \eqref{bsxcol} and \eqref{xbscol}, every $C_i$ is of one of the following forms:
\[\ptikzfig{escalierbsnegpotentielsIOalenvers}{0.6}{0.6857}1\qquad\quad\ptikzfig{escalierbsnegpotentielsO}{0.6}{0.6857}1\qquad\quad\ptikzfig{escalierbsnegpotentielsI}{0.6}{0.6857}1\qquad\quad\ptikzfig{escalierbsnegpotentiels}{0.6}{0.6857}1\]
where $\ptikzfig{negpotentiel-m}{0.4}{0.8}1$ is either $\ntikzfig{cfilcourt-s}$ or $\sgtikzfig{cneg}$ with $a\in\{\vtype,\htype\}$, $\stikzfig{bs3pattessymsplit}$ is either $\stikzfig{beamsplitterNRwB}$ or $\stikzfig{beamsplitterwBNR}$ and $\stikzfig{bs3pattessymmerge}$ is either $\stikzfig{beamsplitterRNBw}$ or $\stikzfig{beamsplitterBwRN}$.
\item\label{pushnegationsillustree} Remove the negations in the middle of the $C_i$ by pushing them to the bottom by means of Equation \eqref{nbscol} and its following variants (all of the form ``a three-wire PBS with a negation on one of the three wires is equal to this PBS reflected vertically with negations on the other two wires''; note that Equations \eqref{nbscol}, \eqref{nbswBNRcomm}, \eqref{bsRNBwncomm} and \eqref{bsBwRNncomm} have to be applied from right to left, while Equations \eqref{bsNRwBnhcomm}, \eqref{bswBNRnhcomm}, \eqref{nhbsRNBwcomm} and \eqref{nhbsBwRNcomm} have to be applied from left to right):\newline
\begin{tabularx}{\linewidth}{XX}
&
\begin{equation}\label{nbswBNRcomm}\tikzfig{nbswBNR}\ =\ \tikzfig{bsNRwBnn}\end{equation}\\
\begin{equation}\label{bsRNBwncomm}\tikzfig{bsRNBwn}\ =\ \tikzfig{nnbsBwRN}\end{equation}&
\begin{equation}\label{bsBwRNncomm}\tikzfig{bsBwRNn}\ =\ \tikzfig{nnbsRNBw}\end{equation}\\
\begin{equation}\label{bsNRwBnhcomm}\tikzfig{bsNRwBnh}\ =\ \tikzfig{nbswBNRnb}\end{equation}&
\begin{equation}\label{bswBNRnhcomm}\tikzfig{bswBNRnh}\ =\ \tikzfig{nbsNRwBnb}\end{equation}\\
\begin{equation}\label{nhbsRNBwcomm}\tikzfig{nhbsRNBw}\ =\ \tikzfig{nbbsBwRNn}\end{equation}&
\begin{equation}\label{nhbsBwRNcomm}\tikzfig{nhbsBwRN}\ =\ \tikzfig{nbbsRNBwn}\end{equation}\end{tabularx}
\item\label{recomposestairs} Up to deforming $P$ in order to flip the $C_i$ upside down, and to using Equations \eqref{bsxcol} and \eqref{xbscol} wherever necessary, every $C_i$ is now of one of the following forms (note that it is easy to know of which form each $C_i$ should be, before deforming it, by looking at its input/output type):\newline%\\
\begin{tabularx}{\linewidth}{X@{\qquad}X@{\qquad}X}\begin{eqnabc}\label{formeRR}\ptikzfig{escalierbs3pattesRR}{0.6}{0.8}1\end{eqnabc}&\begin{eqnabc}\label{formeBRI}\ptikzfig{escalierbs3pattesBRI}{0.6}{0.8}1\end{eqnabc}&\begin{eqnabc}\label{formeBRO}\ptikzfig{escalierbs3pattesBRO}{0.6}{0.8}1\end{eqnabc}
\end{tabularx}
\begin{tabularx}{0.7\linewidth}{X@{\quad}X}
\begin{eqnabc}\label{formeN}\ptikzfig{escalierbs3pattesN}{0.6}{0.8}1\end{eqnabc}&\begin{eqnabc}\label{formeBB}\quad\ \ptikzfig{escalierbs3pattesBB}{0.6}{0.8}1\end{eqnabc}\end{tabularx}

Transform each of them into one of the five kind of staircases depicted in Definition \ref{defNMF}, depending on its type:
\begin{itemize}
\item If $C_i$ is of the form \eqref{formeRR}, \eqref{formeBRI} or \eqref{formeBRO}, then repeatedly apply \cref{bsnrrn} or \eqref{bsrnnr} to put it in the form \ptikzfig{escalierbsrouge-label-abrege}{0.5}{0.8}1\nolinebreak, \ptikzfig{escalierbsrougemerge-label-abrege}{0.5}{0.8}1 or\smallskip \ptikzfig{escalierbsrougemergeinverse-label-abrege}{0.5}{0.8}1 respectively.
\item If $C_i$ is of the form \eqref{formeBB}, then repeatedly apply the following equation:
\begin{equation}\label{bsxNBBN}\tikzfig{defbsxNBBN}\ =\ \tikzfig{bsxNBBN-label}\end{equation}
to put it in the form \ptikzfig{escalierbsbleu-label-abrege}{0.5}{0.8}1.
\item If $C_i$ is of the form \eqref{formeN}, then repeatedly apply the following variant of Equation \eqref{bsbsmergesplit}:
\begin{equation}\label{bsbsmergesplitRBBR}
\tikzfig{RBbsNbsBR}\ =\ \tikzfig{swapRB-label}
\end{equation}
and Equation \eqref{bsbigebre}, as follows:\medskip

\begingroup
\par
\parshape0
\begin{longtable}{RCL}
\tikzfig{motifhautformeN}&\eqeqref{bsbsmergesplitRBBR}&\tikzfig{motifhautformeNmodifie}\\\\
&\eqeqref{bsbigebre}&\tikzfig{motifhautformeNmodifie2}
\end{longtable}\medskip\par
\endgroup
and finally apply \cref{bsbssplitmerge} once, in order to put it in the form \ptikzfig{escalierbsnoir-abrege}{0.5}{0.8}1\nolinebreak.

\end{itemize}
This gives us the desired diagram $D_1$ and finishes the procedure.
\end{enumerate}

\subsection{Derivations of the Ancillary Equations}\label{derivationsforPGTproc}

We have to derive Equations \eqref{negnegnoir} to \eqref{bsbsmergesplitRBBR} from the equations of Figure \ref{axiomsCPBS}. In order to benefit from some dependencies between the derivations, we treat the equations in the following order: \eqref{nbswBNRcomm}, \eqref{negnegnoir}, \eqref{bouclevideR}, \eqref{bouclevideB}, \eqref{bouclevideN}, \eqref{bsbsmergesplitRBBR}, \eqref{bouclevideneg}, \eqref{bsRNBwncomm}, \eqref{bsBwRNncomm}, \eqref{bsNRwBnhcomm}, \eqref{bswBNRnhcomm}, \eqref{nhbsRNBwcomm}, \eqref{nhbsBwRNcomm}, \eqref{bsxNBBN}.

For Equation \eqref{nbswBNRcomm}, the derivation is the following:\medskip

\begin{longtable}{RCL}\tikzfig{nbswBNR}&\eqeqref{bsxcol}&\tikzfig{nbsNRwBx}\\\\
&\eqeqref{nbscol}&\tikzfig{bswBNRnnx}\\\\
&\eqeqref{bsxcol}&\tikzfig{bsNRwBnn}\end{longtable}\medskip

For Equation \eqref{negnegnoir}:\medskip

\begin{longtable}{RCL}\tikzfig{negneg}&\eqeqref{bsbssplitmerge}&\tikzfig{nnbsRBbs}\\\\
&\eqeqref{nbscol}&\tikzfig{nbsBRnnRBbs}\\\\
&\eqeqref{nbswBNRcomm}&\tikzfig{bsRBnnnnRBbs}\\\\
&\eqtroiseqref{negnegRBR}{negnegBRB}{bsbssplitmerge}&\tikzfig{filcourt}\end{longtable}\medskip

For Equation \eqref{bouclevideR}:\medskip

\begin{longtable}{RCL}\tikzfig{bouclevideR-label}&\eqeqref{idboxR}&\tikzfig{bouclevideIR-label}\\\\
&\eqeqref{bouclevideUR}&\ptikzfig{diagrammevide}1{0.5}{0.465}\end{longtable}\medskip

For Equation \eqref{bouclevideB}:\medskip

\begin{longtable}{RCL}\tikzfig{bouclevideB-label}&\eqeqref{negnegBRB}&\tikzfig{bouclevidenegnegBRB}\\\\
&\eqexpl{\text{dinaturality}}&\tikzfig{bouclevidenegnegRBR}\\\\
&\eqeqref{negnegRBR}&\tikzfig{bouclevideR-label}\\\\
&\eqeqref{bouclevideR}&\ptikzfig{diagrammevide}1{0.5}{0.465}\end{longtable}\medskip

For Equation \eqref{bouclevideN}:\medskip

\begin{longtable}{RCL}\tikzfig{bouclevide}&\eqeqref{bsbssplitmerge}&\tikzfig{bouclevidebsRBbs}\\\\
&\eqexpl{\text{dinaturality}}&\tikzfig{bouclevideRBbsNbsRB}\\\\
&\eqeqref{bsbsmergesplit}&\tikzfig{bouclevideBdansbouclevideR}\\\\
&\eqdeuxeqref{bouclevideB}{bouclevideR}&\ptikzfig{diagrammevide}1{0.5}{0.465}\end{longtable}\medskip

For Equation \eqref{bsbsmergesplitRBBR}:\medskip

\begin{longtable}{RCL}\tikzfig{RBbsNbsBR}&\eqeqref{bsxcol}&\tikzfig{RBbsbsRBx}\\\\
&\eqeqref{bsbsmergesplit}&\tikzfig{swapRB-label}\end{longtable}\medskip

For Equation \eqref{bouclevideneg}:\medskip

\begin{longtable}{RCL}\tikzfig{bouclevideneg}&\eqeqref{bsbssplitmerge}&\tikzfig{bouclevidenegbsRBbs}\\\\
&\eqeqref{nbscol}&\tikzfig{bouclevidebsBRnnRBbs}\\\\
&\eqexpl{\text{dinaturality}}&\tikzfig{bouclevidennRBbsNbsBR}\\\\
&\eqeqref{bsbsmergesplitRBBR}&\tikzfig{bouclevidennswapRB}\\\\
&=&\tikzfig{bouclevidenegnegBRB}\\\\
&\eqdeuxeqref{negnegBRB}{bouclevideN}&\ptikzfig{diagrammevide}1{0.5}{0.465}\end{longtable}\medskip

For Equation \eqref{bsRNBwncomm}:\medskip

\begin{longtable}{RCL}\tikzfig{bsRNBwn}&\eqeqref{bsbssplitmerge}&\tikzfig{RBbsnNbsRBbs}\\\\
&\eqeqref{nbscol}&\tikzfig{RBbsNbsBRnnRBbs}\\\\
&\eqeqref{bsbsmergesplitRBBR}&\tikzfig{xnnbsRNBw}\\\\
&\eqeqref{xbscol}&\tikzfig{nnbsBwRN}\end{longtable}\medskip

For Equation \eqref{bsBwRNncomm}:\medskip

\begin{longtable}{RCL}\tikzfig{bsBwRNn}&=&\tikzfig{xxbsBwRNn}\\\\
&\eqeqref{xbscol}&\tikzfig{xbsRNBwn}\\\\
&\eqeqref{bsRNBwncomm}&\tikzfig{xnnbsBwRN}\\\\
&\eqeqref{xbscol}&\tikzfig{nnbsRNBw}\end{longtable}\medskip

For Equation \eqref{bsNRwBnhcomm}:\medskip

\begin{longtable}{RCL}\tikzfig{bsNRwBnh}&\eqeqref{negnegBRB}&\tikzfig{bsNRwBnhnbnb}\\\\
&\eqeqref{nbswBNRcomm}&\tikzfig{nbswBNRnb}\end{longtable}\medskip

For Equation \eqref{bswBNRnhcomm}:\medskip

\begin{longtable}{RCL}\tikzfig{bswBNRnh}&\eqeqref{negnegRBR}&\tikzfig{bswBNRnhnbnb}\\\\
&\eqeqref{nbscol}&\tikzfig{nbsNRwBnb}\end{longtable}\medskip

For Equation \eqref{nhbsRNBwcomm}:\medskip

\begin{longtable}{RCL}\tikzfig{nhbsRNBw}&\eqeqref{negnegBRB}&\tikzfig{nhnbnbbsRNBw}\\\\
&\eqeqref{bsBwRNncomm}&\tikzfig{nbbsBwRNn}\end{longtable}\medskip

For Equation \eqref{nhbsBwRNcomm}:\medskip

\begin{longtable}{RCL}\tikzfig{nhbsBwRN}&\eqeqref{negnegRBR}&\tikzfig{nhnbnbbsBwRN}\\\\
&\eqeqref{bsRNBwncomm}&\tikzfig{nbbsRNBwn}\end{longtable}\medskip

For Equation \eqref{bsxNBBN}:\medskip

\begin{longtable}{RCL}\tikzfig{defbsxNBBN}&=&\tikzfig{defbsNNBBx}\\\\
&\eqeqref{bsnnbb}&\tikzfig{bsxNBBN-label}\end{longtable}

\end{document}